\newtheorem{theorem}{Theorem}[subsection]
\newtheorem{lemma}[theorem]{Lemma}
\theoremstyle{definition}
\newcounter{easyenum}
\newcounter{mediumenum}
\newcounter{hardenum}
\newtheorem{prop}[theorem]{Proposition}
\newcommand{\R}{\mathbb{R}}
\newcommand{\Prob}{\mathbb{P}}
\renewcommand{\Pr}{\Prob}
\renewcommand{\bf}{\bfseries}
\newcommand{\E}{\mathbb{E}}
\newcommand{\w}{\omega}
\newcommand{\eps}{\varepsilon}
\newcommand{\indic}[1]{\boldsymbol{1}_{\{\ensuremath{#1}\}}}
\newcommand{\ie}{i.e.}
\newcommand{\eg}{e.g.}
\newcommand{\re}{\mathbb{R}} %
\newcommand{\td}{\mathrm{d}} %
\newcommand{\ub}{\ensuremath{\overline{b}}} %
\newcommand{\lb}{\ensuremath{\underline{b}}} %
\newcommand{\uv}{\ensuremath{\overline{v}}} %
\newcommand{\lv}{\ensuremath{\underline{v}}} %
\newcommand{\dbp}{\mathbf{1}_{\overline{S}_T\geq \ub,\,
    \underline{S}_T\leq \lb}} %
\newcommand{\dbpnonstrict}{\mathbf{1}_{\overline{S}_T > \ub,\,
    \underline{S}_T < \lb}} %
\newcommand{\dbmp}{\mathbf{1}_{\overline{S}_T\geq \ub,\,
    \underline{S}_T\geq \lb}} %
\newcommand{\rdbmp}{\mathbf{1}_{\overline{S}_T\leq \ub\textrm{ or
    }\underline{S}_T\leq \lb}} %
\newcommand{\uh}{\overline{H}} %
\newcommand{\lh}{\underline{H}} %
\newcommand{\uH}{\uh} %
\newcommand{\lH}{\lh} %
\newcommand{\sS}{\overline{S}} %
\newcommand{\iS}{\underline{S}} %
\newcommand{\Pc}{\mathcal{P}}
\newcommand{\Lin}{\mathrm{Lin}}
\title{Robust hedging of double touch barrier options}
\author{A.~M.~G.~Cox\thanks{e-mail:
        \texttt{A.M.G.Cox@bath.ac.uk};
        web: \texttt{www.maths.bath.ac.uk/$\sim$mapamgc/}}\\
        Dept.\ of Mathematical Sciences\\
        University of Bath\\
        Bath BA2 7AY, UK
 \and Jan Ob\l \'oj\thanks{e-mail:
        \texttt{jobloj@imperial.ac.uk}; web:
        \texttt{www.imperial.ac.uk/people/j.obloj/}
        \newline Research supported by a Marie Curie Intra-European
Fellowship within the $6^{th}$ European Community Framework Programme.}\\
        Department of Mathematics\\
        Imperial College London\\
        London SW7 2AZ, UK
}
\date{\today}
\begin{document}
%\author{Alexander Cox \& Jan Ob\l\'oj} 
\maketitle
\begin{abstract}
  We consider model-free pricing of digital options, which pay out if
  the underlying asset has crossed both upper and lower barriers. We
  make only weak assumptions about the underlying process (typically
  continuity), but assume that the initial prices of call options with
  the same maturity and all strikes are known. Under such
  circumstances, we are able to give upper and lower bounds on the
  arbitrage-free prices of the relevant options, and further, using
  techniques from the theory of Skorokhod embeddings, to show that
  these bounds are tight. Additionally, martingale inequalities are
  derived, which provide the trading strategies with which we are able
  to realise any potential arbitrages. We show that, depending of the
  risk aversion of the investor, the resulting hedging strategies can
  outperform significantly the standard delta/vega-hedging in presence
  of market frictions and/or model misspecification.

\end{abstract}

\section{Introduction}

In the standard approach to pricing and hedging, one postulates a
model for the underlying, calibrates it to the market prices of
liquidly traded vanilla options and then uses the model to derive
prices and associated hedges for exotic over-the-counter products.
Prices and hedges will be correct only if the model describes
perfectly the real world, which is not very likely. The model-free
approach uses market data to deduce bounds on the prices consistent
with \emph{no-arbitrage} and the associated super- and sub-
replicating strategies, which are robust to model misspecification. In
this work we adopt such an approach to derive model-free prices and
hedges for digital double barrier options.

The methodology, which we now outline, is based on solving the
\emph{Skorokhod embedding problem} (SEP).  We assume \emph{no
  arbitrage} and suppose we know the market prices of calls and puts
for all strikes at one maturity $T$. We are interested in pricing an
exotic option with payoff given by a path-dependent functional
$O(S)_T$. The example we consider here is a
\emph{digital double touch} barrier option struck at $(\lb,\ub)$ which
pays $1$ if the stock price reaches both $\lb$ and $\ub$ before
maturity $T$.  Our aim is to construct a model-free super-replicating
strategy of the form
\begin{equation}
\label{eq:intro_superrepl}
O(S)_T\leq F(S_T)+N_T,
\end{equation}
where $F(S_T)$ is the payoff of a finite portfolio of European puts
and calls and $N_T$ are gains from a self-financing trading strategy
(typically forward transactions). Furthermore, we want
\eqref{eq:intro_superrepl} to be tight in the sense that we can
construct a market model which matches the market prices of calls and
puts and in which we have equality in \eqref{eq:intro_superrepl}.  The
initial price of the portfolio $F(S_T)$ is then the least upper bound
on the price of the exotic $O(S)_T$ and the right hand side of
\eqref{eq:intro_superrepl} gives a simple super-replicating strategy
at that cost. There is an analogous argument for the lower bound
and an analogous sub-replicating strategy.

In fact, in order to construct \eqref{eq:intro_superrepl}, we first
construct the market model which induces the upper bound on the price
of $O(S)_T$ and hence will attain equality in
\eqref{eq:intro_superrepl}.  To do so we rely on the theory of
Skorokhod embeddings (cf.\ Ob\l\'oj \cite{Obloj:04b}). We assume
\emph{no arbitrage} and consider a market model in the risk-neutral
measure so that the forward price process $(S_t:t\leq T)$ is a
martingale\footnote{Equivalently, under a simplifying assumption of
  zero interest rates $S_t$ is simply the stock price process.}.  It
follows from Monroe's theorem \cite{Monroe:78} that $S_t=B_{\rho_t}$,
for a Brownian motion $(B_t)$ with $B_0=S_0$ and some increasing
sequence of stopping times $\{\rho_t:t\leq
T\}$ (possibly relative to an enlarged
filtration). Knowing the market prices of calls and puts for all
strikes at maturity $T$ is equivalent to knowing the distribution
$\mu$ of $S_T$ (cf.~\cite{BreedenLitzenberger:78}). Thus, we can see
the stopping time $\rho=\rho_T$ as a solution to the SEP for
$\mu$. Conversely, let $\tau$ be a solution to the SEP for $\mu$,
i.e.\ $B_\tau\sim\mu$ and $(B_{t\land \tau}:t\geq 0)$ is a uniformly
integrable martingale. Then the process $\tilde{S}_t:=B_{\tau\land
  \frac{t}{T-t}}$ is a model for the stock-price process consistent
with the observed prices of calls and puts at maturity T. In this way,
we obtain a correspondence which allows us to identify market models
with solutions to the SEP and vice versa.  In consequence, to estimate
the fair price of the exotic option $\E O(S)_T$, it suffices to bound
$\E O(B)_\tau$ among all solutions $\tau$ to the SEP. More precisely,
if $O(S)_T=O(B)_{\rho_T}$ a.s., then we have
\begin{equation}\label{eq:exotic_bounds}
  \inf_{\tau: B_\tau\sim\mu} \E O(B)_\tau\leq \E O(S)_T\leq \sup_{\tau: B_\tau\sim\mu} \E O(B)_\tau,
\end{equation}
where all stopping times $\tau$ are such that $(B_{t \wedge \tau})_{t \ge 0}$ is uniformly integrable. Once we
compute the above bounds and the stopping times which achieve them, we
usually have a good intuition how to construct the super- (and sub-)
replicating strategies \eqref{eq:intro_superrepl}.

A more detailed description of the SEP-driven methodology outlined
above can be found in Ob\l\'oj \cite{Obloj:EQF}. The idea of
\emph{no-arbitrage} bounds on the prices goes back to Merton
\cite{Merton:73}. The methods for robust pricing and hedging of
options sketched above go back to the works of Hobson
\cite{Hobson:98b} (lookback option) and Brown, Hobson and Rogers
\cite{Brown:01b} (single barrier options). More recently, Dupire
\cite{Dupire:05} investigated volatility derivatives using the SEP and
Cox, Hobson and Ob\l\'oj \cite{CHO:08} designed pathwise inequalities
to derive price range and robust super-replicating strategies for
derivatives paying a convex function of the local time.

Unlike in previous works, e.g.\ Brown, Hobson and Rogers
\cite{Brown:01b}, we don't find a unique inequality
\eqref{eq:intro_superrepl} for a given barrier option. Instead we find
that depending on the market input (i.e.\ prices of calls and puts)
and the pair of barriers different strategies may be optimal. We
characterise all of them and give precise conditions to decide which
one should be used. This new difficulty is coming from the dependence
of the payoff on both the running maximum and minimum of the
process. Solutions to the SEP which maximise or minimise
$\Pr(\sup_{u\leq\tau}B_u\geq \ub,\inf_{u\leq \tau}B_\tau\leq \lb)$
 have not been developed previously and they
are introduced in this paper. As one might suspect, they are
considerably more involved that the ones by Perkins \cite{Perkins:86}
or Az\'ema and Yor \cite{AzemaYor:79} exploited by Brown, Hobson and
Rogers \cite{Brown:01b}.

From a practical point of view, the no-arbitrage price bounds which we
obtain are too wide to be used for pricing. However, our super- or
sub- hedging strategies can still be used. Specifically, suppose an
agent sells a double touch barrier option $O(S)_T$ for a premium
$p$. She can then set up our superhedge \eqref{eq:intro_superrepl}
for an initial premium $\overline{p}>p$. At maturity $T$ she holds
$H=-O(S)_T+F(S)_T+N_T+p-\overline{p}$ which on average is worth zero,
$\E H=0$, but is also bounded below: $H\geq p-\overline{p}$. In
reality, in the presence of model uncertainty and market frictions,
this can be an appealing alternative to the standard delta/vega
hedging. Indeed, our numerical simulations in Section
\ref{sec:extra_numerics} show that in the presence of transaction
costs a risk averse agent will generally prefer the hedging
strategy we construct to a (daily monitored) delta/vega-hedge.

The paper is structured as follows. First we present the setup: our
assumptions and terminology and explain the types of double barriers
considered in this and other papers. Then in Section \ref{sec:dt} we
consider digital double touch barrier option mentioned above. We first
present super- and sub- replicating strategies and then prove in
Section \ref{sec:dt_pricing} that they induce tight model-free bounds
on the admissible prices of the double touch options. In Section
\ref{sec:extra} we reconsider our assumptions and investigate some
applications. Specifically, in Section \ref{sec:extra_strikes} we
consider the case when calls and puts with only a finite number of
strikes are observed and in Section \ref{sec:extra_jumps} we discuss
discontinuities in the price process $(S_t)$.  In Section
\ref{sec:extra_numerics} we present a numerical investigation of the
performance of our super- and sub- hedging strategies.  Section
\ref{sec:proofs} contains the proofs of main theorems. Additional
figures are attached in Section \ref{sec:figures}.

\subsection{Setup}

In what follows $(S_t)_{t \ge 0}$ is the forward price
process. Equivalently, we can think of the underlying with zero
interest rates, or an asset with zero cost of
carry. In particular, our results can be directly applied
in Foreign Exchange markets for currency pairs from economies with
similar interest rates. Moving to the spot market with non-zero
interest rates is not immediate as our barriers become time-dependent.

We assume that $(S_t)_{t \ge 0}$ has continuous paths. We comment in Section
\ref{sec:extra_jumps} how this assumption can be removed or weakened
to a requirement that given barriers are crossed continuously.  We fix
a maturity $T>0$, and assume we observe the initial spot price $S_0$
and the market prices of European calls for all strikes $K>0$ and
maturity $T$:
\begin{equation}
\label{eq:market_input}
\Big(C(K): K\geq 0\Big),
\end{equation}
which we call the \emph{market input}. For simplicity we assume that
$C(K)$ is twice differentiable and strictly convex on
$(0,\infty)$. Further, we assume that we can enter a forward
transaction at no cost. More precisely, let $\rho$ be a stopping time
relative to the natural filtration of $(S_t)_{t\leq T}$ such that
$S_\rho=\ub$. Then the portfolio corresponding to selling a forward at
time $\rho$ has final payoff $(\ub-S_T)\mathbf{1}_{\rho\leq T}$ and we
assume its initial price is zero. The initial price of a portfolio
with a constant payoff $K$ is $K$. We denote $\mathcal{X}$ the set of
all calls, forward transactions and constants and $\Lin(\mathcal{X})$
is the space of their finite linear combinations, which is precisely
the set of portfolios with given initial market prices.  For
convenience we introduce a pricing operator $\Pc$ which, to a
portfolio with payoff $X$ at maturity $T$, associates its initial
(time zero) price, e.g.\ $\Pc K=K$, $\Pc(S_T-K)^+=C(K)$ and
$\Pc(\ub-S_T)\mathbf{1}_{t\geq \rho}=0$. We also assume $\Pc$ is
linear, whenever defined. Initially, $\Pc$ is only given on $\Lin(\mathcal{X})$. One of
the aims of the paper is to understand extensions of $\Pc$ which do
not introduce arbitrage to $\Lin(\mathcal{X}\cup\{Y\})$, for double touch
barrier derivatives $Y$.  Note that linearity of $\Pc$ on $\Lin(\mathcal{X})$ implies call-put parity holds
 and in consequence we also know the market prices of all European put options with maturity $T$:
$$P(K):=\Pc (K-S_T)^+=K-S_0+C(K).$$

Finally, we assume the market admits \emph{no arbitrage} (or
quasi-static arbitrage) in the sense that any portfolio of initially
traded assets with a non-negative payoff has a non-negative price:
 \begin{equation}
 \label{eq:no_arbitrage}
 \forall X\in \Lin(\mathcal{X}): X\geq 0\Longrightarrow \Pc X\geq 0.
 \end{equation}
 As we do not have any probability measure yet, by $X\geq 0$ we mean
 that the payoff is non-negative for any continuous non-negative stock
 price path $(S_t)_{t\leq T}$.
% \begin{equation}
% \label{eq:no_arbitrage}
% \begin{split}
% & X=K_0+\sum_{i=1}^n \alpha_i(s_T-K_i)^++\sum_{j=1}^m \beta_j(b_j-s_T)\mathbf{1}_{\rho_j\leq
% T}\geq 0\quad\textrm{for any }(s_t)_{0\leq t\leq T}\in C([0,T],R_+).\\
% & \Longrightarrow \Pc X\geq 0,
% \end{split}
% \end{equation}
%  where $K_0,\alpha_i,\beta_j\in\R$, $K_i,b_j>0$, $n,m\in\mathbb{N}$, $C([0,T],\R_+)$ denotes
%  continuous functions from $[0,T]$ to $\R_+$, and $\rho_j$ are stopping times relative to the
%  evolution of $(s_t)$.
%i.e.\ $\rho_j:C([0,T],\R_+)\to [0,T]\cup\{\infty\}$ such that $\{\rho(\leq t\}$ is measurable
%w.r.t.\ 

By a \emph{market model} we mean a filtered probability space
$(\Omega,\mathcal{F},(\mathcal{F}_t),\Pr)$ with a continuous
$\Pr$-martingale $(S_t)$ which matches the market input
\eqref{eq:market_input}.  Note that we consider the model under the
risk-neutral measure and the pricing operator is just the expectation
$\Pc=\E$. Saying that $(S_t)$ matches the market input is equivalent
to saying that it starts in the initial spot $S_0$ a.s.\ and that
$\E(S_T-K)^+=C(K)$, $K>0$.  This in turn is equivalent to knowing the
distribution of $S_T$ (cf.~ \cite{BreedenLitzenberger:78,
  Brown:01b}). We denote this distribution $\mu$ and often refer to it
as the \emph{law of $S_T$ implied by the call prices}. Our regularity
assumptions on $C(K)$ imply that
 \begin{equation}\label{eq:market_law}
 \mu(\td K)=C''(K), \ K>0,
 \end{equation}
so that $\mu$ has a positive density on $(0,\infty)$. We could relax
 this assumption and take the support of $\mu$ to be any interval $[a,b]$. Introducing atoms would complicate our formulae (essentially without introducing new difficulties).

The running maximum and minimum of the price process are denoted
respectively $\overline{S}_t=\sup_{u\leq t}S_u$ and
$\underline{S}_t=\inf_{u\leq t}S_u$. We are interested in this paper
in derivatives whose payoff depends both on $\sS_T$ and $\iS_T$. It is
often convenient to express events involving the running maximum and
minimum in terms of the first hitting times $H_x=\inf\{t: S_t=x\}$,
$x\geq 0$. As an example, note that $\dbp=\mathbf{1}_{H_{\ub}\lor
  H_{\lb}\leq T}$.

We use the notation $a<<b$ to indicate that $a$ is \emph{much smaller}
than $b$ -- this is only used to give intuition and is not formal. The
minimum and maximum of two numbers are denoted $a\land b=\min\{a,b\}$
and $a\lor b=\max\{a,b\}$ respectively, and the positive part is denoted $a^+=a\lor 0$.

\subsection{Types of digital double barrier options}

This paper is part of a larger project of
describing model-independent pricing and hedging of all digital double
barrier options. As mentioned above, in this paper we consider double
touch barrier options. Given barriers $\lb<S_0<\ub$ there are $8$
digital barrier options which pay $1$ depending conditional on the
underlying crossing/not-crossing the barriers\footnote{Naturally,
  there are further $4$ `degenerate' options which only involve one of
  the barriers -- these were treated in \cite{Brown:01b} as mentioned
  above.}. Naturally they come in pairs, \eg{} $\dbmp = 1-\rdbmp$
a.s.\footnote{We assume the distribution of $S_T$ has no atoms and
  this implies that distributions of $\sS_T$ and $\iS_T$ also have no
  atoms, cf.\ Cox and Ob\l\'oj \cite{CoxObloj:range}.}. In consequence
a model-free super- and sub- hedge of a double touch/no-touch barrier
option $\dbmp$ implies respectively a model-free sub- and super- hedge
of a barrier option with payoff $\rdbmp$ (and vice
versa). Furthermore, there are two double touch/no-touch options with
payoffs $\dbmp$ and $\mathbf{1}_{\overline{S}_T\leq
  \ub,\,\underline{S}_T\leq \lb}$ but by symmetry it suffices to
consider one of them. In consequence to understand model-free pricing
and hedging of all digital double barrier options it is enough to
consider three options
\begin{itemize}
\item the \emph{double touch} option with payoff $\dbp$,
\item the \emph{double touch/no-touch} option with payoff $\dbmp$,
\item the \emph{double no-touch} option with payoff $\mathbf{1}_{\overline{S}_T\leq \ub,\,\underline{S}_T\geq \lb}$.
\end{itemize}
The present papers deals with the first type, while the second and third types are treated in Cox and Ob\l\'oj \cite{CoxObloj:range} and Cox and Ob\l\'oj \cite{CoxObloj:2tnt}.

\section{Model-free pricing and hedging}%\SecTitle{\sc Pricing and Hedging}
\label{sec:dt}

We investigate now model-free pricing and hedging of double touch option which pays $1$ if
and only if the stock price goes above $\ub$ and below $\lb$ before
maturity: $\dbp$. We present simple quasi-static super- and sub- replicating strategies which prove to be optimal (i.e.\ replicating) in some market model.
\subsection{Superhedging}
\label{sec:dt_super}
%%%%%%%%%%%%%%%%%%%%%%%%%%%%%% 
%% PSFrag commands
%%%%%%%%%%%%%%%%%%%%%%%%%%%%%% 
\psfrag{lb}{$\lb$} \psfrag{ub}{$\ub$} \psfrag{K1}{$K_1$}
\psfrag{K2}{$K_2$} \psfrag{K3}{$K_3$} \psfrag{K4}{$K_4$}
\psfrag{K}{$K$} \psfrag{1}{$1$} \psfrag{legend1}{Initial
  portfolio}\psfrag{legend2}{Portfolio at $t=H_{\lb}<H_{\ub}\land
  T$}\psfrag{legend3}{Portfolio at $t=H_{\ub}>H_{\lb}$}
\psfrag{legend4}{Portfolio at $t=H_{\ub}<H_{\lb}\land
  t$}\psfrag{legend5}{Portfolio at
  $t=H_{\lb}>H_{\ub}$}\psfrag{legendh1}{Portfolio after $H_{\lb}$}%
%%%%%%%%%%%%%%%%%%%%%%%%%%%%%% 
%%%%%%%%%%%%%%%%%%%%%%%%%%%%%% 
 We present here four super-replicating strategies. All our strategies have the same simple
structure: we buy an initial portfolio of calls and puts and when the
stock price reaches $\lb$ or $\ub$ we buy or sell forward
contracts. Naturally our goal is not only to write a super-replicating
strategy but to write \emph{the smallest super-replicating strategy}
and to do so we have to choose judiciously the parameters. As we will
see in Section \ref{sec:dt_pricing}, for a given pair of barriers
$\lb,\ub$ exactly one of the super-replicating strategies will induce a
tight bound on the derivative's price. We will provide an explicit
criterion determining which strategy to use.
\smallskip\\ \emph{$\uh^{I}$: superhedge for $\lb<<S_0<\ub$.}\\ We buy
$\alpha$ puts with strike $K\in (\lb,\infty)$ and when the stock price
reaches $\lb$ we buy $\beta$ forward contracts, see Figure
\ref{fig:H1}. The values of $\alpha,\beta$ are chosen so that the
final payoff on $(0,K)$, provided the stock price has reached $\lb$,
is constant and equal to $1$. One easily computes that
$\alpha=\beta=(K-\lb)^{-1}$.
\begin{figure}
  \begin{center}
    \includegraphics[width=14cm]{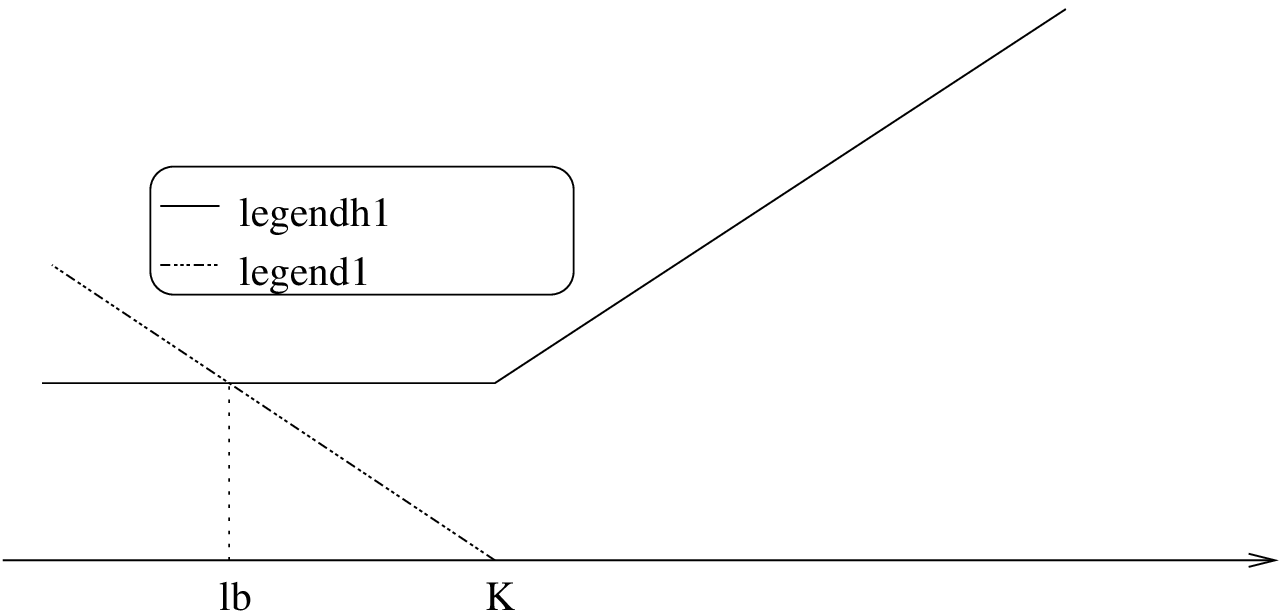}
    \caption{\label{fig:H1}Superhedge $H^{I}$}
  \end{center}\end{figure}
Formally, the super-replication follows from the following inequality
\begin{equation}
  \label{eq:H1}
  \dbp\leq
  \frac{(K-S_T)^+}{K-\lb}+\frac{S_T-\lb}{K-\lb}\mathbf{1}_{\underline{S}_T\leq
    \lb}=:\uh^{I}(K),
\end{equation}
where the last term corresponds to a forward contract entered into, at
no cost, when $S_t=\lb$. Note that $\mathbf{1}_{\underline{S}_T\leq
  \lb}=\mathbf{1}_{H_{\lb}\leq T}$.
\smallskip\\ \emph{$\uh^{II}$: superhedge for $\lb<S_0<<\ub$.}\\ This
is a mirror image of $\uh^{I}$: we buy $\alpha$ calls with strike
$K\in (0,\ub)$ and when the stock price reaches $\ub$ we sell
$\beta$ forward contracts.  The values of $\alpha,\beta$ are chosen so
that the final payoff on $(K,\infty)$, provided the stock price
reached $\ub$, is constant and equal to $1$. One easily computes that
$\alpha=\beta=(\ub-K)^{-1}$. Formally, the super-replication follows
from the following inequality
\begin{equation}
  \label{eq:H2}
  \dbp\leq
  \frac{(S_T-K)^+}{\ub-K}+\frac{\ub-S_T}{\ub-K}\mathbf{1}_{\overline{S}_T\geq
    \ub}=:\uh^{II}(K).
\end{equation}
\emph{$\uh^{III}$: superhedge for $\lb<<S_0<<\ub$.}\\ This superhedge
involves a static portfolio of $4$ calls and puts and at most $4$
dynamic trades. The choice of parameters is judicious which makes the
strategy the most complex to describe.  Choose
\begin{equation}\label{eq:orderingH3}
  0<K_4<\lb<K_3<K_2<\ub<K_1
\end{equation}
and buy $\alpha_i$ calls with strike $K_i$, $i=1,2$ and $\alpha_j$
puts with strike $K_j$, $j=3,4$. If the stock price reaches $\ub$
without having hit $\lb$ before, that is when $H_{\ub}<H_{\lb}\land
T$, sell $\beta_1$ forward. If $H_{\lb}<H_{\ub}\land T$, at $H_{\lb}$
buy $\beta_2$ forwards.  When the stock price reaches $\lb$ after
having hit $\ub$, that is when $H_{\ub}<H_{\lb}\leq T$, buy
$\beta_3=\alpha_3+\beta_1$ forwards. Finally, if $H_{\lb}<H_{\ub}\leq
T$, sell $\beta_4=\alpha_2+\beta_2$ forwards. The choice of $\beta_3$
and $\beta_4$ is such that the final payoff after hitting $\ub$ and
then $\lb$ (resp.\ $\lb$ and then $\ub$) is constant and equal to $1$
on $[K_4,K_3]$ (resp.\ $[K_2,K_1]$). We now proceed to impose
conditions which determine other parameters. A pictorial
representation
of the superhedge is given in Figure \ref{fig:H3}.\\
\begin{figure}
  \begin{center}
    \includegraphics[width=14cm]{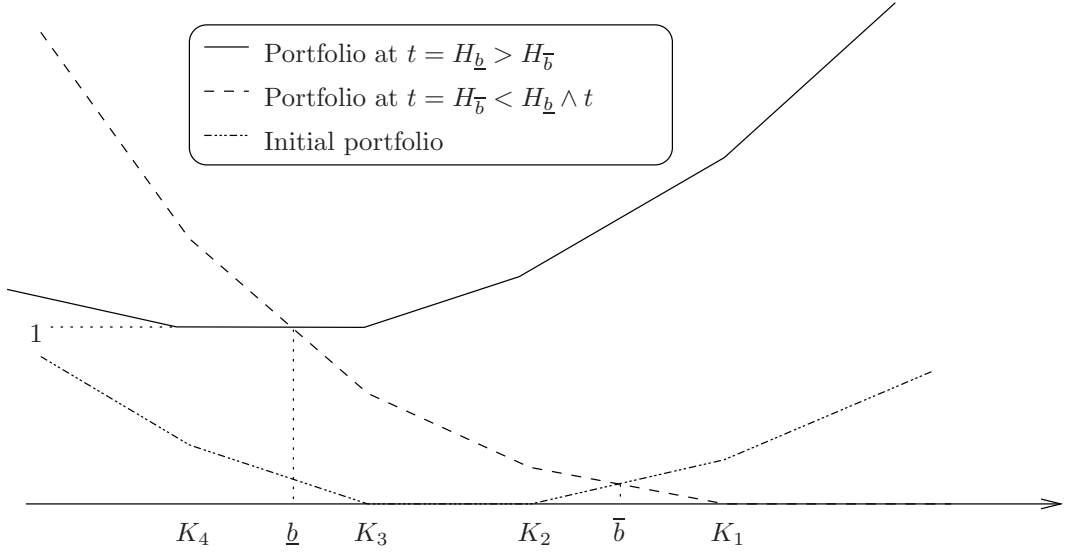}
    \caption{\label{fig:H3}Superhedge $H^{III}$}
  \end{center}
\end{figure}
Note that the initial payoff on $[K_3,K_2]$ is zero.  After hitting
$\ub$ and before hitting $\lb$ the payoff should be zero on
$[K_1,\infty)$ and equal to $1$ at $\lb$. Likewise, after hitting
$\lb$ and before hitting $\ub$, the payoff should be zero on $[0,K_4]$
and equal to $1$ at $\ub$. This yields $6$ equations
\begin{equation}
  \label{eq:conditionH3}
  \left\{ \begin{array}{l}
      \alpha_1+\alpha_2-\beta_1=0\\
      \alpha_2(K_1-K_2)-\beta_1(K_1-\ub)=0\\
      \alpha_3(K_3-\lb)-\beta_1(\lb-\ub)=1
    \end{array} \right. 
  \quad \left\{ \begin{array}{l}
      \alpha_3+\alpha_4-\beta_2=0\\
      \alpha_3(K_3-K_4)+\beta_2(K_4-\lb)=0\\
      \alpha_2(\ub-K_2)+\beta_2(\ub-\lb)=1
    \end{array} \right. . 
\end{equation}
The superhedging strategy corresponds to an a.s.\ inequality
\begin{equation}\label{eq:H3}
  \begin{split}
    \dbp& \leq \alpha_1(S_T-K_1)^++\alpha_2(S_T-K_2)^+
    +\alpha_3(K_3-S_T)^+ +\alpha_4(K_4-S_T)^+\\
    &\quad {}- \beta_1(S_T-\ub)\mathbf{1}_{H_{\ub}<H_{\lb}\land
      T}+\beta_2(S_T-\lb)\mathbf{1}_{H_{\lb}<H_{\ub}\land T}\\
    &\quad {}+\beta_3(S_T-\lb)\mathbf{1}_{H_{\ub}<H_{\lb}\leq
      T}-\beta_4(S_T-\ub)\mathbf{1}_{H_{\lb}<H_{\ub}\leq T}\\
    &\quad \quad {} =:\uh^{III}(K_1,K_2,K_3,K_4),
  \end{split}
\end{equation}
where the parameters, after solving \eqref{eq:conditionH3}, are given
by
\begin{equation}
  \label{eq:valuesH3}\begin{split}
    &\alpha_3=\frac{(K_1-K_2)(\lb-K_4)(\ub-\lb) - (K_1-\ub) (\ub-K_2)
      (\lb-K_4)}{(K_1-K_2) (K_3-K_4) (\ub-\lb)^2 - (K_3-\lb) (K_1-\ub)
      (\ub-K_2)(\lb-K_4)}\\
    &\left\{ \begin{array}{l}
        \alpha_1=\big(1-\alpha_3\frac{K_3-K_4}{\lb-K_4} (\ub-\lb)\big)
        (K_1-\ub)^{-1}\\
        \alpha_2=\big(1-\alpha_3\frac{K_3-K_4}{\lb-K_4} (\ub-\lb)\big)
        (\ub-K_2)^{-1}\\
        \alpha_4=\frac{K_3-\lb}{\lb-K_4}\alpha_3
      \end{array} \right. 
    \quad \left\{ \begin{array}{l}
        \beta_1=\alpha_1+\alpha_2\\
        \beta_2=\alpha_3+\alpha_4
      \end{array} \right. . 
  \end{split}
\end{equation}
Using \eqref{eq:orderingH3} one can verify that $\alpha_3$ and
$\alpha_1$ are non-negative and thus also $\alpha_2$ and $\alpha_4$,
and all $\beta_1,\dots,\beta_4$. 

\emph{$\uh^{IV}$: superhedge for $\lb<S_0<\ub$.}\\ Choose
$0<K_2<\lb<S_0<\ub<K_1$. The initial portfolio is composed of
$\alpha_1$ calls with strike $K_1$, $\alpha_2$ puts with strike $K_2$,
$\alpha_3$ forward contracts and $\alpha_4$ in cash. If we hit $\ub$
before hitting $\lb$ we sell $\beta_1$ forwards, and if we hit $\lb$
before hitting $\ub$ we buy $\beta_2$ forwards. The payoff of the
portfolio should be zero on $[K_1,\infty)$ (resp.\ $[0,K_2]$) and
equal to $1$ at $\lb$ (resp.\ $\ub$) in the first (resp.\ second)
case. Finally, when we hit $\lb$ after having hit $\ub$ we buy
$\beta_3$ forwards, and when we hit $\ub$, having hit $\lb$, we sell
$\beta_4$ forwards. In both cases the final payoff should then be
equal to $1$ on $[K_2,K_1]$, see Figure \ref{fig:H4}. 
\begin{figure}
  \begin{center}
    \includegraphics[width=14cm]{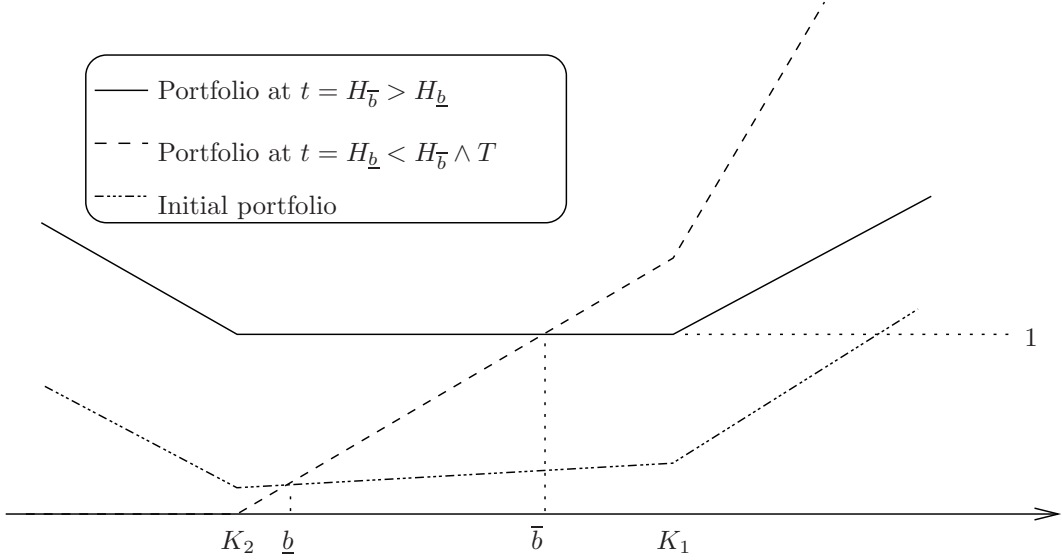}
    \caption{\label{fig:H4}Superhedge $H^{IV}$}
  \end{center}\end{figure}
The superhedging strategy corresponds to the following
a.s.\ inequality
\begin{equation}\label{eq:H4}
  \begin{split}
    \dbp\leq
    &\alpha_1(S_T-K_1)^++\alpha_2(K_2-S_T)^++\alpha_3(S_T-S_0)+\alpha_4\\
    &- \beta_1(S_T-\ub)\mathbf{1}_{H_{\ub}<H_{\lb}\land
      T}+\beta_2(S_T-\lb)\mathbf{1}_{H_{\lb}<H_{\ub}\land T}\\
    &+\beta_3(S_T-\lb)\mathbf{1}_{H_{\ub}<H_{\lb}\leq
      T}-\beta_4(S_T-\ub)\mathbf{1}_{H_{\lb}<H_{\ub}\leq T}\\
    &=:\uh^{IV}(K_1,K_2),
  \end{split}
\end{equation}
where, working out the conditions on $\alpha_i,\beta_i$, the
parameters are
\begin{equation}
  \label{eq:valuesH4}
  \left\{ \begin{array}{l}
      \alpha_1=1/(K_1-\lb)\\
      \alpha_2=1/(\ub-K_2)\\
      \alpha_3=\frac{(K_1-\lb)-(\ub-K_2)}{(K_1-\lb)(\ub-K_2)}\\
      \alpha_4=\frac{\lb\ub-K_1K_2}{(K_1-\lb)(\ub-K_2)}+\alpha_3S_0
    \end{array} \right. 
  \quad \left\{ \begin{array}{l}
      \beta_1=\alpha_1+\alpha_3=1/(\ub-K_2)\\
      \beta_2=\alpha_2-\alpha_3=1/(K_1-\lb)\\
      \beta_3=\alpha_1=1/(K_1-\lb)\\
      \beta_4=\alpha_2=1/(\ub-K_2)
    \end{array} \right. . 
\end{equation}

\subsection{Subhedging}
\label{sec:dt_sub}
We present now three constructions of subhedges which will turn out to
be the best (i.e.\ the most expensive) model-free subhedges depending
on the relative distance of barriers to the spot.  We note however
that there is also a fourth (trivial) subhedge, which has payoff zero
and corresponds to an empty portfolio. In fact this will be the most
expensive subhedge when $\lb<<S_0<<\ub$ and we can construct a market
model in which both barriers are never hit. Details will be given in
Theorem \ref{thm:sub_prices}.
        
\emph{$\lh_I$: subhedge for $\lb<S_0<\ub$.}\\ Choose
$0 < K_2 < \lb < S_0 < \ub < K_1$. The initial portfolio will contain a cash amount, a forward,
calls with 5 different strikes and additionally will also include two digital
options, which pay off $\pounds 1$ provided $S_T$ is above a specified
level. Figure~\ref{fig:H4} demonstrates
graphically the hedging strategy, and we note the effect of the
digital options is to provide a jump in the payoff at the points \lb,
\ub. 

\begin{figure}
  \begin{center}
    \includegraphics[width=14cm]{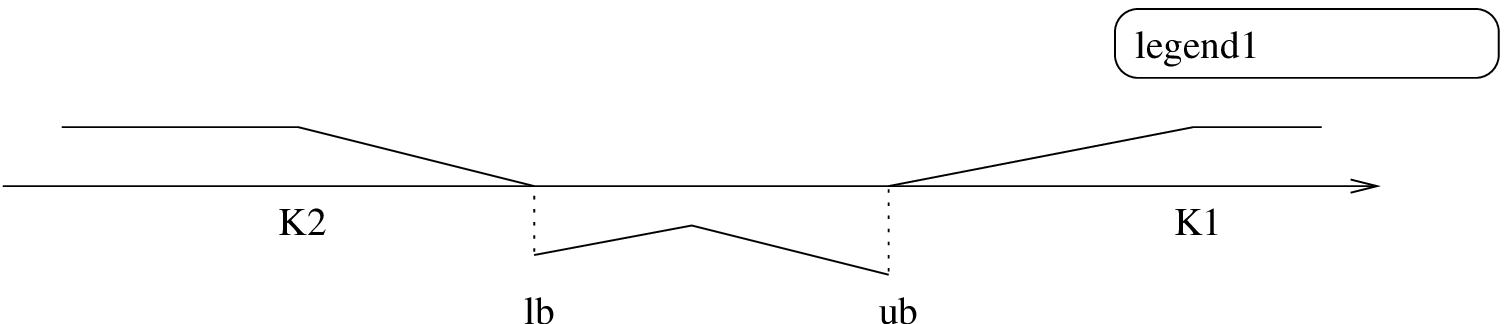}\smallskip

    \includegraphics[width=14cm]{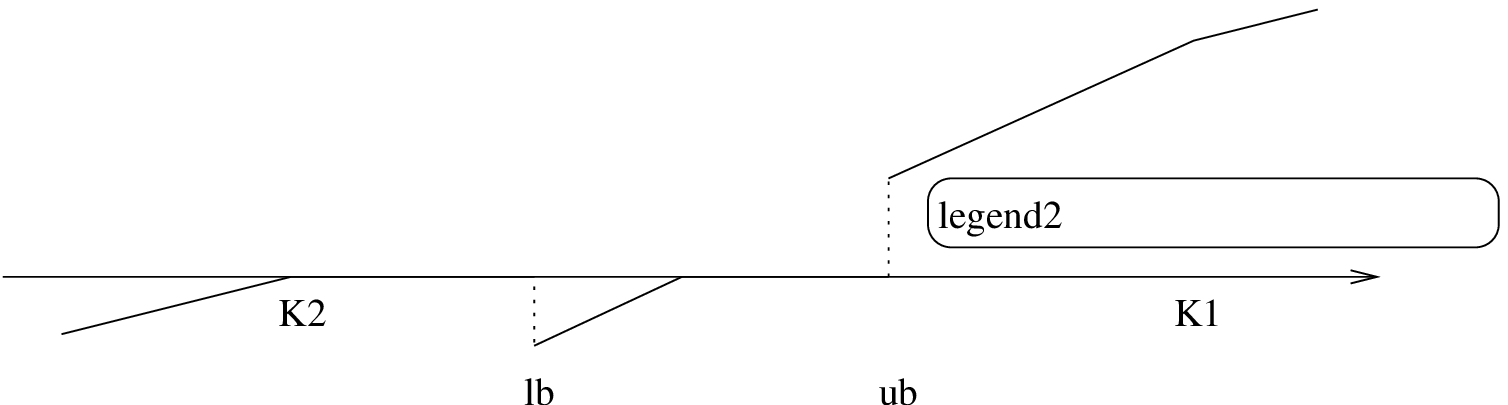}\smallskip

    \includegraphics[width=14cm]{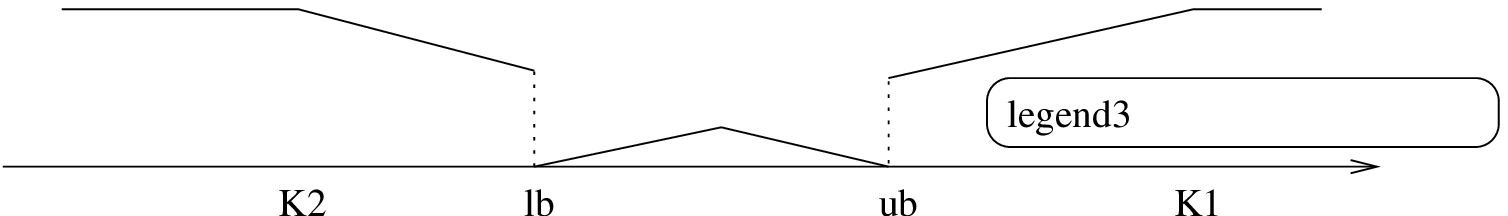}
    \caption{\label{fig:sH}Subhedge $\lh_{I}$.}
  \end{center}\end{figure}

As in the previous cases, the optimality of the construction will
follow from an almost-sure inequality. The relevant inequality is now:
\begin{equation} \label{eqn:asineqsH1}
  \begin{split}
    \dbp & \ge \alpha_0 + \alpha_1(S_T-S_0) -\alpha_2 (S_T-K_2)^+ +
    \alpha_3 (S_T - \lb)^+ - \alpha_3(S_T-K_3)^+\\
    &\quad {} + \alpha_3(S_T-\ub)^+ - (\alpha_3-\alpha_2)(S_T-K_1)^+
    -\gamma_1 \indic{S_T >\lb} + \gamma_2 \indic{S_T \ge \ub}\\
    &\quad {} + (\alpha_2-\alpha_1)(S_T-\lb)
    \indic{H_{\lb}<H_{\ub}\wedge T} - \alpha_2(S_T-\ub) \indic{H_{\lb}
      < H_{\ub} <T}\\
    &\quad {} - (\alpha_3-\alpha_2+\alpha_1)(S_T-\ub) \indic{H_{\ub} <
      H_{\lb} \wedge T} + (\alpha_3-\alpha_2)(S_T-\lb)
    \indic{H_{\ub}<H_{\lb}<T}
  \end{split}. 
\end{equation}
Specifically, we can see that the hedging strategy consists of a
portfolio which contains cash $\alpha_0$, $\alpha_1$ forwards, is
short $\alpha_2$ calls at strike $K_2$ etc. The novel terms here are
the digital options; we note further that the digital options can be
considered also as the limit of portfolios of calls (see for example
\cite{BowieCarr:94}). In our context, we can use their limiting
argument to write: $\Pc \indic{S_T\geq\ub}= -C'(K)$.

The strategy to be followed is then: initially, run to either \lb{} or
\ub{}; supposing that \lb{} is hit first, we buy $(\alpha_2 -
\alpha_1)$ forward, then if we later hit \ub{}, we sell forward
$\alpha_2$ units of the underlying. A similar strategy is followed if
\ub{} is hit first. As previously, the structure imposes some
constraints on the parameters. The relevant constraints are:
\begin{eqnarray}
  0 & = & \alpha_0 + \alpha_1(\lb-S_0) - \alpha_2(\lb -
  K_2) \label{eq:subh1}\\
  0 & = & \alpha_0 + \alpha_1(\ub-S_0) -
  \alpha_2(\ub - K_2) + \alpha_3(K_3-\lb) -
  \gamma_1+\gamma_2 \label{eq:subh2}\\
  1 & = & \alpha_0 + \alpha_1(K_2 -
  S_0) + (\alpha_2-\alpha_1)(K_2 -\lb) -
  \alpha_2(K_2-\ub) \label{eq:subh3}\\
  1 & = & \alpha_0 + \alpha_1(K_2-S_0) - (\alpha_3 -\alpha_2
  +\alpha_1)(K_2 -\ub) +
  (\alpha_3-\alpha_2)(K_2-\lb) \label{eq:subh4}\\
  \gamma_1 & = & (K_3 - \lb) \alpha_3 \label{eq:subh5}\\
  \gamma_2 & = & (\ub - K_3) \alpha_3 \label{eq:subh6}\\
  \frac{K_3-\lb}{K_1-\lb} & = & \frac{\ub - K_3}{\ub-K_2} \label{eq:subh7}
\end{eqnarray}
The equations \eqref{eq:subh1} and \eqref{eq:subh2} arise from the
constraint that initially the payoff is zero at \lb{}, \ub{};
constraints \eqref{eq:subh3} and \eqref{eq:subh4} come from the
constraint that the final payoff is 1 at $K_2$ when both barrier are hit (in either order); 
\eqref{eq:subh5} and \eqref{eq:subh6} represent the fact
that, in the intermediate step, at $K_3$ the gap at \lb{}
(resp. \ub{}) is the size of the respective digital option. The final
constraint, \eqref{eq:subh7} follows from noting that $K_3$ is the
intersection point of the lines from $(\lb,0)$ to $(K_1,1)$ and from
$(K_2,1)$ to $(\ub,0)$. Note that it follows that the initial payoff on $(0,K_1)$ and $(K_2,\infty)$ are co-linear, or that the finial payoff in $K_1$ is $1$ when both barriers are hit.

The given equations can be solved to deduce:
\renewcommand{\arraystretch}{1.5}
\begin{equation}
  \label{eqn:subh1soln}
  \left\{
    \begin{array}{rcl}
      \alpha_0 & = & \frac{S_0 (K_1 + K_2 - \ub - \lb) + \ub \lb - K_1
        K_2}{(\ub-K_2)(K_1 - \lb)}\\
      \alpha_1 & = & \frac{K_1 + K_2 - \ub - \lb}{(\ub-K_2)(K_1 - \lb)}\\
      \alpha_2 & = & \frac{1}{\ub - K_2}\\
      \alpha_3 & = & \frac{\ub - K_2 + K_1 - \lb}{(\ub-K_2)(K_1 - \lb)}
    \end{array}
  \right.  \quad \left\{
    \begin{array}{rcl}
      K_3 & = & \frac{\ub K_1 - \lb K_2}{\ub - K_2 - \lb + K_1}\\
      \gamma_1 & = & \frac{\ub - \lb}{\ub - K_2}\\
      \gamma_2 & = & \frac{\ub - \lb}{K_1 - \lb}
    \end{array}
  \right. 
\end{equation}
\renewcommand{\arraystretch}{1} We note from the above that $\alpha_2,
\alpha_3, \gamma_1$ and $\gamma_2$ are all (strictly) positive;
further, it can be checked that the quantities $(\alpha_3 - \alpha_2),
(\alpha_2 - \alpha_1), (\alpha_3 - \alpha_2 + \alpha_1)$ are all
positive. It follows that the construction holds for all choices of
$K_1, K_2$ with $K_2 < \lb$, and $K_1 > \ub$. 

For future reference, we define $\lh_I(K_1, K_2)$ to be the random
variable given by the right hand side of \eqref{eqn:asineqsH1}, where
the coefficients are given by the solutions of
\eqref{eq:subh1}--\eqref{eq:subh7}. 

\emph{$\lh_{II}$: subhedge for $\lb<S_0<<\ub$.}\\ While
the above hedge can be considered to be the `typical' subhedge for the
option, there are also two further cases that need to be considered
when the initial stock price, $S_0$, is much closer to one of the
barriers than the other. The resulting subhedge will share many of the
features of the previous construction, however the main difference
concerns the behaviour in the tails; we now have the hedge taking the
value one in the tails only under one of the possible ways of knocking
in (specifically, in the case where $\lb < S_0 << \ub$, we get
equality in the tails only when $\lb$ is hit first.) 

\begin{figure}[t]
  \begin{center}
    \includegraphics[width=14cm]{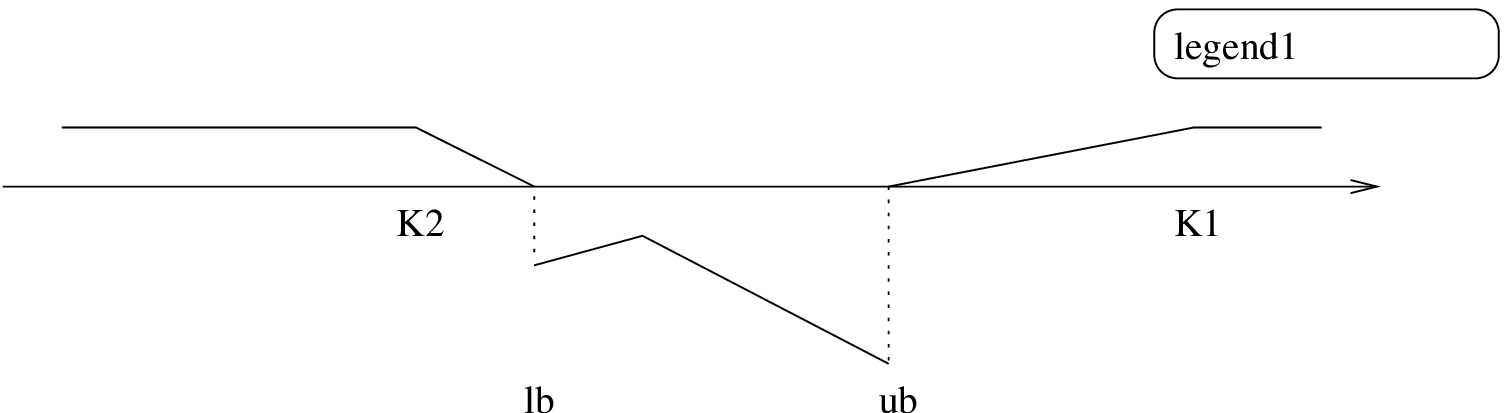}\vspace{-1.5cm}

    \includegraphics[width=14cm]{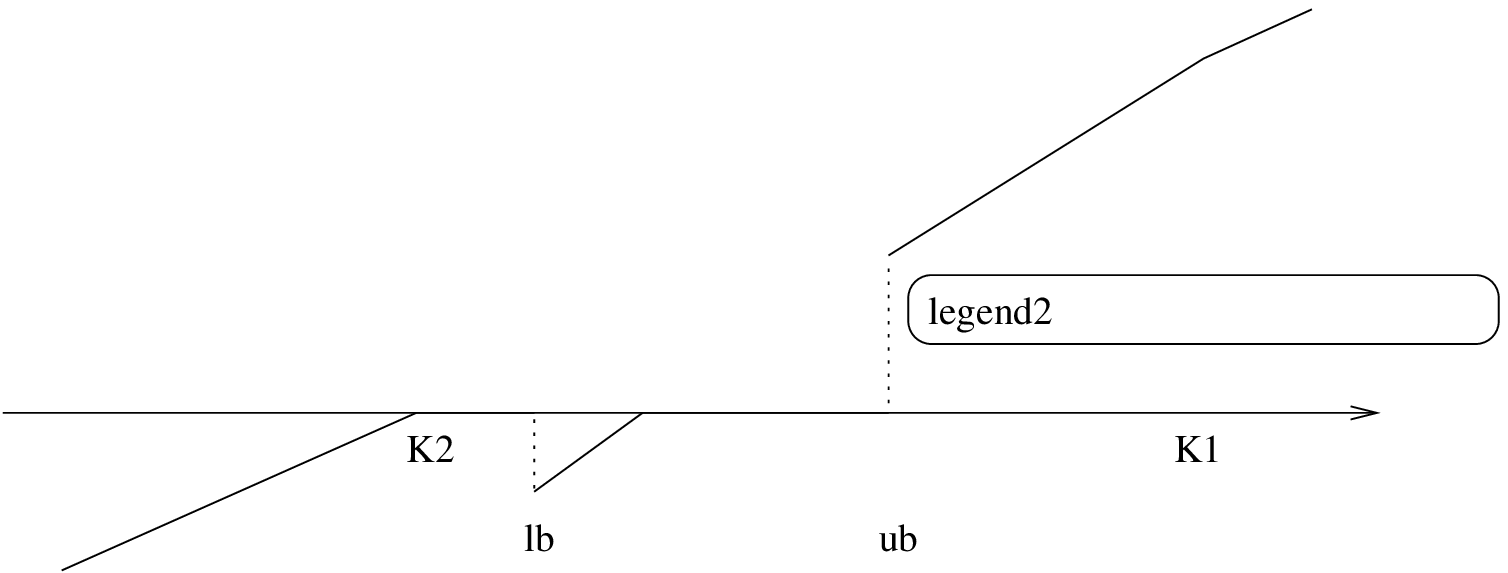}\smallskip

    \includegraphics[width=14cm]{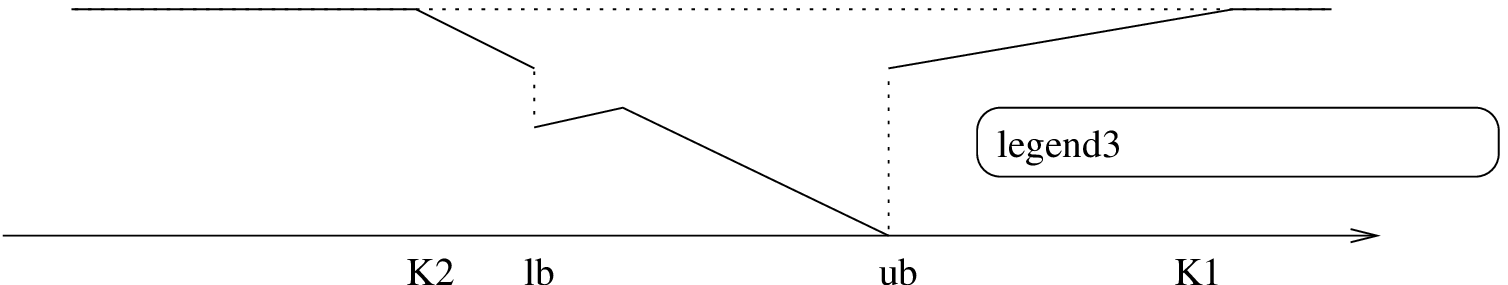}\smallskip

    \includegraphics[width=14cm]{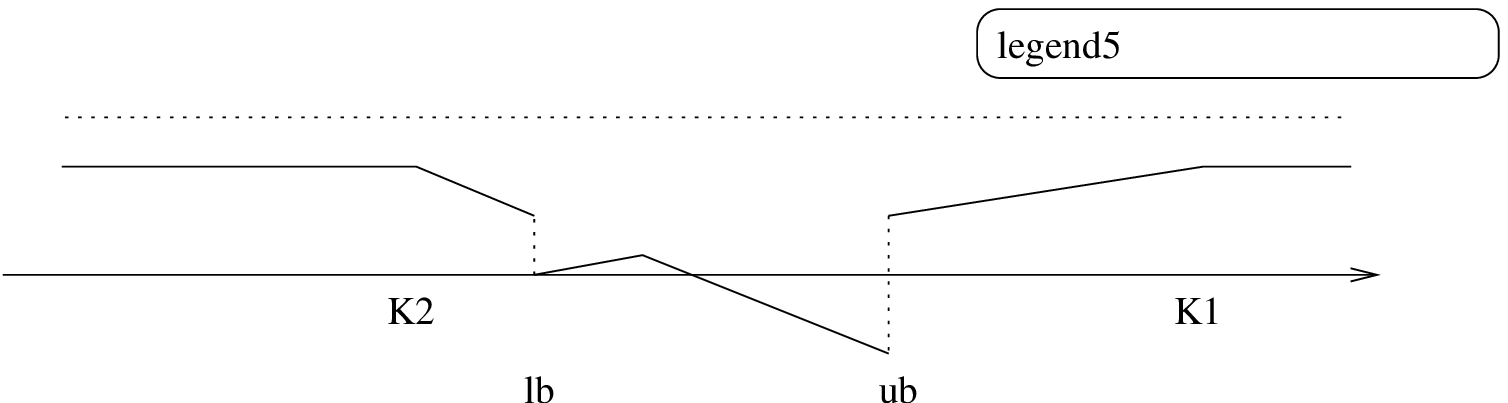}
    \caption{\label{fig:sH2}Subhedge $\lh_{II}$}
  \end{center}\end{figure}

A graphical representation of the construction is given in
Figure~\ref{fig:sH2}. In this case, rather than specifying only $K_1$
and $K_2$, we also need to specify $K_3 \in (\lb,\ub)$ satisfying:
\[
\frac{\ub-K_3}{\ub-K_2} \ge \frac{K_3-\lb}{K_1-\lb}
\]
which implies the function is larger just above $\lb$ than just below
$\ub$. This can be rearranged to get:
\[
K_3 \le \ub \frac{K_1-\lb}{(K_1 - \lb) + (\ub - K_2)} + \lb \frac{\ub
  -K_2}{(K_1 - \lb) + (\ub - K_2)}
\]

The actual inequality we use remains the same as in the previous case
\eqref{eqn:asineqsH1}, as do some of the constraints:
\begin{eqnarray}
  0 & = & \alpha_0 + \alpha_1(\lb-S_0) - \alpha_2(\lb -
  K_2) \label{eq:subh1b}\\
  0 & = & \alpha_0 + \alpha_1(\ub-S_0) -
  \alpha_2(\ub - K_2) + \alpha_3(K_3-\lb) -
  \gamma_1+\gamma_2 \label{eq:subh2b}\\
  1 & = & \alpha_0 + \alpha_1(K_2
  - S_0) + (\alpha_2-\alpha_1)(K_2 -\lb) -
  \alpha_2(K_2-\ub) \label{eq:subh3b}\\
  1 & = & \alpha_0 + \alpha_1(\lb-S_0) + \alpha_2 (K_2 - K_1 + \ub - \lb) + \alpha_3 (K_3 +
  K_1 - \lb - \ub) - \gamma_1 + \gamma_2 \label{eq:subh4b}\\
  \gamma_1 & = & (K_3 - \lb) \alpha_3 \label{eq:subh5b}\\
  \gamma_2 & = & (\ub - K_3) \alpha_3 \label{eq:subh6b}
\end{eqnarray}
\eqref{eq:subh1b} and \eqref{eq:subh2b} refer still to having an
initial payoff of 0 at \ub{} and \lb{}, \eqref{eq:subh5b} and
\eqref{eq:subh6b} also still relate the size of the digital options to
the slopes. The change is in the constraints \eqref{eq:subh3b} and
\eqref{eq:subh4b} which now ensure that the function at $K_1$ and
$K_2$, after hitting first \lb{} and then \ub{}, takes the value 1. We
note that, in the previous example, where \eqref{eq:subh7} held, these
are in fact equivalent to \eqref{eq:subh3} and \eqref{eq:subh4}; the
fact that \eqref{eq:subh7} no longer holds means that we need to be
more specific about the constraints. 

The solutions to the above are now: \renewcommand{\arraystretch}{1.5}
\begin{equation}
  \label{eqn:subh2soln}
  \left\{
    \begin{array}{rcl}
      \alpha_0 & = & \frac{(\ub \lb + S_0 K_3)(K_1 - K_2) - (\lb K_1 + S_0
        \ub) (K_3 - K_2) - (\ub K_2 + S_0 \lb)(K_1 - K_3)}{(\ub- \lb)(K_1 -
        K_3)(\ub - K_2)}\\
      \alpha_1 & = & \frac{K_3 (K_1 - K_2) - \lb (K_1 -
        K_3) - \ub (K_3 - K_2)}{(\ub- \lb)(K_1 - K_3)(\ub - K_2)}\\
      \alpha_2 & = & \frac{1}{\ub - K_2}\\
      \alpha_3 & = & \frac{K_1 - K_2}{(K_1 - K_3)(\ub - K_2)}
    \end{array}
  \right.  \left\{
    \begin{array}{rcl}
      \gamma_1 & = & \frac{(K_3 - \lb) ( K_1 - K_2)}{(\ub - K_2) ( K_1 -
        K_3)}\\
      \gamma_2 & = & \frac{(\ub - K_3) ( K_1 - K_2)}{(\ub - K_2) (
        K_1 - K_3)}
    \end{array}\right. 
\end{equation}
\renewcommand{\arraystretch}{1}

As before, we write $\lh_{II}(K_1,K_2,K_3)$ for the random variable on
the right hand side of \eqref{eqn:asineqsH1} where the constants are
chosen as the solutions to the above equations. 

\emph{$\lh_{III}$: subhedge for $\lb<<S_0<\ub$.}\\ The
third case here is the corresponding version of the above where we
have a large value of $K_3$, specifically,
\[
K_3 \ge \ub \frac{K_1-\lb}{(K_1 - \lb) + (\ub - K_2)} + \lb \frac{\ub
  -K_2}{(K_1 - \lb) + (\ub - K_2)}
\]
and we need to modify equations \eqref{eq:subh3b} and
\eqref{eq:subh4b} appropriately:
\begin{eqnarray*}
  1 & = & \alpha_0 + \alpha_1(\ub - S_0) + (\alpha_3 - \alpha_2) (\ub -
  \lb)\\
  1 & = & \alpha_0 + \alpha_1 (\ub - S_0) + \alpha_2(K_2 - K_1 + \lb -
  \ub) + \alpha_3(K_3 + K_1 - 2 \lb) - \gamma_1 + \gamma_2
\end{eqnarray*}
and the solutions are now:
% \begin{eqnarray}
%   \alpha_0 & = & \frac{(\ub \lb + S_0 K_3)(K_1 - K_2) - (\lb K_1 + S_0 \ub) (K_3 - K_2) - (\ub K_2 + S_0 \lb)(K_1 - K_3)}{(\ub- \lb)(K_1 - K_3)(\ub - K_2)}\\
%   \alpha_1 & = & \frac{K_3 (K_1 - K_2) - \lb (K_1 - K_3) - \ub (K_3 - K_2)}{(\ub- \lb)(K_3 - K_2)(K_1 - \lb)}\\
%   \alpha_2 & = & \frac{K_1 - K_3}{(K_3 - K_2)(K_1 - \lb)}\\
%   \alpha_3 & = & \frac{K_1 - K_2}{(K_3 - K_2)(K_1 - \lb)}\\
%   \gamma_1 & = & \frac{(K_3 - \lb) ( K_1 - K_2)}{(K_3 - K_2)(K_1 - \lb)}\\
%   \gamma_2 & = & \frac{(\ub - K_3) ( K_1 - K_2)}{(K_3 - K_2)(K_1 - \lb)}
% \end{eqnarray}
\renewcommand{\arraystretch}{1.5}
\begin{equation}
  \label{eqn:subh3soln}
  \left\{
    \begin{array}{rcl}
      \alpha_0 & = & \frac{(\ub \lb + S_0 K_3)(K_1 - K_2) - (\lb K_1 + S_0
        \ub) (K_3 - K_2) - (\ub K_2 + S_0 \lb)(K_1 - K_3)}{(\ub- \lb)(K_3 -
        K_2)(K_1-\lb)}\\
      \alpha_1 & = & \frac{K_3 (K_1 - K_2) - \lb (K_1 -
        K_3) - \ub (K_3 - K_2)}{(\ub- \lb)(K_3 - K_2)(K_1 - \lb)}\\
      \alpha_2 & = & \frac{K_1 - K_3}{(K_3 - K_2)(K_1 - \lb)}\\
      \alpha_3 & = & \frac{K_1 - K_2}{(K_3 - K_2)(K_1 - \lb)}
    \end{array}
  \right.  \left\{
    \begin{array}{rcl}
      \gamma_1 & = & \frac{(K_3 - \lb) ( K_1 - K_2)}{(K_3 - K_2)(K_1 -
        \lb)}\\
      \gamma_2 & = & \frac{(\ub - K_3) ( K_1 - K_2)}{(K_3 -
        K_2)(K_1 - \lb)}
    \end{array}\right. 
\end{equation}
\renewcommand{\arraystretch}{1}
As before, we write $\lh_{III}(K_1,K_2,K_3)$ for the random variable on
the right hand side of \eqref{eqn:asineqsH1} where the constants are
chosen as the solutions to the above equations.

%\NewSec{Robust pricing of digital double touch barrier options}
%\ShortSec{Robust Pricing}

\subsection{Pricing}
\label{sec:dt_pricing}
Consider the double touch digital barrier option with the payoff
$\dbp$. As an immediate consequence of the superhedging strategies
described in Section \ref{sec:dt_super} we get an upper bound on the
price of this derivative:
\begin{prop} 
\label{prop:dt_upperbound}
Given the market input \eqref{eq:market_input}, no-arbitrage \eqref{eq:no_arbitrage}
in the class of portfolios $\Lin(\mathcal{X}\cup \{\dbp\})$ implies the following inequality between the prices
\begin{equation}\label{eq:generalbound_dbp}
    \Pc\dbp\leq \inf\big\{\Pc \uh^{I}(K),\Pc\uh^{II}(K'),\Pc
    \uh^{III}(K_1,K_2,K_3,K_4),\Pc\uh^{IV}(K_1,K_4)\big\},
  \end{equation}
  where the infimum is taken over $K>\lb$, $K'<\ub$ and
  $0<K_4<\lb<K_3<K_2<\ub<K_1$, and where
  $\uh^{I},\uh^{II},\uh^{III},\uh^{IV}$ are given by
  \eqref{eq:H1},\eqref{eq:H2},\eqref{eq:H3} and \eqref{eq:H4}
  respectively.
\end{prop}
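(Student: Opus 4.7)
The proposition is essentially a corollary of the pathwise superhedging inequalities \eqref{eq:H1}, \eqref{eq:H2}, \eqref{eq:H3}, \eqref{eq:H4} constructed in Section \ref{sec:dt_super}, combined with the no-arbitrage assumption \eqref{eq:no_arbitrage} extended to the enlarged class $\Lin(\mathcal{X}\cup\{\dbp\})$. The plan is therefore to verify in turn that each of the four inequalities holds pathwise for every continuous non-negative path $(S_t)_{t\leq T}$ (not only almost surely in some model), then read off the price bound by linearity of $\Pc$.

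Concretely, I would first fix $K>\lb$ and check \eqref{eq:H1} by case analysis on the path: if $\underline{S}_T>\lb$ then the indicator of $\{H_{\lb}\leq T\}$ vanishes and $\dbp=0$, so the inequality reduces to $0\leq (K-S_T)^+/(K-\lb)$; if $\underline{S}_T\leq\lb$, the right-hand side equals $\big[(K-S_T)^+ +(S_T-\lb)\big]/(K-\lb)$, which by a direct computation is $\geq 1$ on $(-\infty,K)$ and $=(S_T-\lb)/(K-\lb)\geq 1$ on $[K,\infty)$ provided $\ub$ is also reached (so that $S_T$ lies above $\lb$ in the relevant regions). The mirror inequality \eqref{eq:H2} follows by an identical argument with the roles of $\lb,\ub$ swapped. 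For \eqref{eq:H3} I would exhaust the four cases determined by whether $H_{\ub}\vee H_{\lb}\leq T$ and by the order of $H_{\ub},H_{\lb}$; the choice of parameters in \eqref{eq:valuesH3} was designed precisely so that in each case the right-hand side is a piecewise linear function that pins down to $1$ on the interval containing $S_T$ whenever both barriers are hit, and is non-negative otherwise. The verification for \eqref{eq:H4} is analogous but simpler.

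Once the four inequalities are established pathwise, the conclusion is immediate: the difference between the right-hand side and $\dbp$ lies in $\Lin(\mathcal{X}\cup\{\dbp\})$ and is a non-negative payoff, so by \eqref{eq:no_arbitrage} its price is non-negative. Linearity of $\Pc$ together with $\Pc(\ub-S_T)\mathbf{1}_{H_{\ub}\leq T}=0$ (and the analogous zero cost of the other dynamic trades entered at first hitting times of $\lb$ or $\ub$) shows that the price of the right-hand side equals the price of the static portion, i.e.\ a linear combination of $C(K_i)$ and $P(K_j)$ that depends only on the chosen strikes. This yields $\Pc\dbp\leq \Pc\uh^{i}(\cdot)$ for $i=I,II,III,IV$, and taking the infimum over the admissible parameters gives \eqref{eq:generalbound_dbp}.

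The main obstacle is the bookkeeping for $\uh^{III}$: one must check that with the parameters \eqref{eq:valuesH3} the resulting piecewise linear payoff, after adding the appropriate forwards in each of the four dynamic scenarios, actually dominates the indicator $\dbp$ everywhere — including on the ``unused'' regions (\eg{} on $[K_4,K_3]$ when only $\lb$ has been hit), where one needs to rely on the ordering \eqref{eq:orderingH3} and the positivity of all $\alpha_i,\beta_i$ noted at the end of Section \ref{sec:dt_super}. Everything else is routine: once the pathwise inequalities are confirmed, the passage to prices is a one-line application of linearity and no-arbitrage.
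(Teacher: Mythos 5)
Your argument is correct and matches the paper's approach: the paper states Proposition~\ref{prop:dt_upperbound} without a separate proof, treating it as an immediate consequence of the pathwise inequalities \eqref{eq:H1}--\eqref{eq:H4}, linearity of $\Pc$, the zero initial cost of the forward legs entered at hitting times, and the no-arbitrage assumption extended to $\Lin(\mathcal{X}\cup\{\dbp\})$ --- exactly the chain you lay out. You also correctly identify the only nontrivial step as the pathwise verification of \eqref{eq:H3} (and, more easily, \eqref{eq:H4}), which rests on the ordering \eqref{eq:orderingH3} and the non-negativity of the coefficients in \eqref{eq:valuesH3}.
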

The purpose of this section is to show that given the law of $S_T$ and
the pair of barriers $\lb,\ub$ we can determine explicitly which
superhedges and with what strikes the infimum on the RHS of
\eqref{eq:generalbound_dbp} is achieved. We present formal criteria but we also use labels, e.g.\ `$\lb<<S_0<\ub$', which provide an intuitive classification. An example is considered later in Figure \ref{fig:types_hedges}. Furthermore, we will show
that we can always construct a market model in which the infimum in \eqref{eq:generalbound_dbp} is the actual price
of the double barrier option and therefore exhibit the model-free
least upper bound for the price of the derivative. 
Subsequently, an analogue reasoning for subhedging and the lower bound is presented.

Let $\mu$ be the market implied law of $S_T$ given by \eqref{eq:market_law}. 
The barycentre of $\mu$ associates to a non-empty Borel set $\Gamma\subset \re$ the mean of $\mu$ over $\Gamma$ via
\begin{equation}\label{eq:barycentre}
  \mu_B(\Gamma)=\frac{\int_\Gamma u\mu(\td u)}{\int_\Gamma \mu(\td u)}. 
\end{equation}
For $w<\lb$ and $z>\ub$ let $\rho_-(w)>\lb$ and $\rho_+(z)<\ub$ be the
unique points such that the intervals $[w,\rho_-(w)]$ and
$[\rho_+(z),z]$ are centered respectively around $\lb$ and $\ub$, that
is
\begin{equation}\label{eq:def_rho}
  \left\{ \begin{array}{ll} \rho_-:[0,\lb]\to[\lb,\infty)&\textrm{
        defined via }
      \mu_B([w,\rho_-(w)])=\lb,\\ \rho_+:[\ub,\infty)\to[0,\ub]&\textrm{
        defined via } \mu_B([\rho_+(z),z])=\ub. 
    \end{array} \right. 
\end{equation}
Note that $\rho_\pm$ are decreasing and well defined as
$\mu_B([0,\infty))=S_0\in (\lb,\ub)$. We need to define two more
functions:
\begin{equation}\label{eq:def_gamma}
  \left\{ \begin{array}{ll} \gamma_+(w)\geq \ub&\textrm{ defined via }
      \mu_B\big([0,w]\cup[\rho_+(\gamma_+(w)),\gamma_+(w)]\big)=\lb,\;
      w\leq \lb,\\
      \gamma_-(z)\leq \lb &\textrm{ defined via }
      \mu_B\big([\gamma_-(z),\rho_-(\gamma_-(z))]\cup[z,\infty)\big)=\ub,\;
      z\geq \ub,\\
    \end{array} \right. 
\end{equation}
so that $\gamma_+(\cdot)$ is increasing, $\gamma_-(\cdot)$ is
decreasing, and:
\[
\gamma_+(w) \downarrow \ub \mbox{ as } w \downarrow 0, \gamma_-(z)
\uparrow \lb \mbox{ as } z \uparrow \infty. 
\]
Note that $\gamma_+$ is defined on $[0,w_0]$ where $w_0=\lb\land
\sup\{w<\lb: \gamma_+(w)<\infty\}$ and similarly $\gamma_-$ is defined
on $[z_0,\infty]$. We are now ready to state our main theorem. 

% {\bf we need to decide if we want to define functions $\rho_\pm$, $\gamma_\pm$ here or in the Appendix. probably here after all???} 
% \bigskip\\
\begin{theorem}\label{thm:upper_price}
  Let $\mu$ be the law of $S_T$ inferred from the prices of vanillas via \eqref{eq:market_law}
  and consider the double barrier derivative paying $\dbp$ for a fixed pair
  of barriers $\lb<S_0<\ub$.  Then exactly one of the following is true
  \begin{enumerate}
  \item[\fbox{I}] `$\lb<<S_0<\ub$':

    There exists $z_0 > \ub$ such that\footnote{note that here and
      subsequently, we use $\uparrow \infty$ and $\downarrow 0$ as
      meaning only the case where the increasing/decreasing sequence
      is itself finite/strictly positive, so that in \eqref{eqn:gmcond}, we
      strictly mean
      \[
      \gamma_-(z) \to 0 \mbox{ as } z \downarrow z_0, \mbox{ and }
      \gamma_-(z) > 0 \mbox{ for } z > z_0
      \]
    }
    \begin{equation} \label{eqn:gmcond}
      \gamma_-(z) \downarrow 0 \mbox{ as } z \downarrow z_0, \mbox{ and }
      \rho_-(0) \le \ub. 
    \end{equation}
    Then there is a market model in which $\E\dbp=\E
    \uh^{I}(\rho_-(0))=\frac{C(\rho_-(0))}{\rho_-(0)-\lb}$. 

  \item[\fbox{II}] `$\lb<S_0<<\ub$':

    There exists $w_0 < \lb$ such that
    \[
    \gamma_+(w) \uparrow \infty \mbox{ as } w \uparrow w_0, \mbox{ and }
    \rho_+(\infty) \ge \lb. 
    \]
    Then there is a market model in which $\E\dbp=\E
    \uh^{II}(\rho_+(\infty))=\frac{C(\rho_+(\infty))}{\ub-\rho_+(\infty)}$. 

  \item[\fbox{III}] `$\lb<<S_0<<\ub$':

    There exists $0 \leq w_0 \le \lb$ such that $\gamma_-(\gamma_+(w_0)) =
    w_0$ and $\rho_-(w_0) \le \rho_+(\gamma_+(w_0))$.\\ Then there is a
    market model in which
    \begin{equation}
      \begin{split}
        \E\dbp&=\E\uh^{III}\big(\gamma_+(w_0),\rho_+(\gamma_+(w_0)),\rho_-(w_0),w_0\big)\\
        =\alpha_1C\big(&\gamma_+(w_0)\big)+\alpha_2C\big(\rho_+(\gamma_+(w_0))\big)
        +\alpha_3P\big(\rho_-(w_0)\big)+\alpha_4P\big(w_0\big),
      \end{split}
    \end{equation}
    where $\alpha_i$ are given in \eqref{eq:valuesH3}. 
  \item[\fbox{IV}] `$\lb<S_0<\ub$':

    We have $\ub < \rho_-(0), \lb > \rho_+(\infty)$ and $\rho_+(\rho_-(0)) <
    \rho_-(\rho_+(\infty))$.\\ Then there is a market model in which
    \begin{equation}
      \begin{split}
        \E\dbp&=\E\uh^{IV}\big(\rho_-(0),\rho_+(\infty)\big)\\
        &=\alpha_1C\big(\rho_-(0)\big)+\alpha_2P\big(\rho_+(\infty)\big)
        +\alpha_4,
      \end{split}
    \end{equation}
    where $\alpha_i$ are given in \eqref{eq:valuesH4}. 
  \end{enumerate}
\end{theorem}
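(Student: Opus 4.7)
The strategy is to exploit the Skorokhod embedding correspondence from the introduction. For each of the four cases I would construct an explicit stopping time $\tau$ for Brownian motion $(B_t)_{t\ge 0}$ with $B_0=S_0$, such that $B_\tau\sim\mu$, the process $(B_{t\wedge\tau})_{t\ge 0}$ is uniformly integrable, and $\Pr(\sS_\tau\ge\ub,\iS_\tau\le\lb)$ equals the right-hand side claimed in the theorem (writing $\sS_\tau=\sup_{u\le\tau}B_u$, $\iS_\tau=\inf_{u\le\tau}B_u$). Setting $\tilde S_t=B_{\tau\land \frac{t}{T-t}}$ then yields a market model attaining this price. Combined with Proposition \ref{prop:dt_upperbound} and the computation of each superhedge cost via $\Pc(S_T-K)^+=C(K)$, $\Pc(K-S_T)^+=P(K)$, and the vanishing price of forward increments $\Pc(S_T-x)\ind_{H_x\le T}=0$ for $x\in\{\lb,\ub\}$, this proves tightness.

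A preliminary step is to check that the four cases partition the admissible parameter space. The functions $\rho_\pm$ are continuous and strictly monotone by the positive-density assumption on $\mu$, and $\gamma_\pm$ inherit continuity and monotonicity together with the stated boundary behaviour. The split is governed by whether $\gamma_+(w)$ blows up first (case II), $\gamma_-(z)$ reaches $0$ first (case I), or the consistency equation $\gamma_-(\gamma_+(w_0))=w_0$ admits an interior solution in $(0,\lb)$ (cases III and IV, distinguished by whether $\rho_-(w_0)\le\rho_+(\gamma_+(w_0))$).

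The heart of the proof is the explicit embedding, reverse-engineered from the associated superhedge. For case III, equality in \eqref{eq:H3} requires that the $\mu$-mass on $[\rho_-(w_0),\rho_+(\gamma_+(w_0))]$ is embedded by paths avoiding both barriers; mass on $[w_0,\rho_-(w_0)]$ is embedded by paths that first hit $\ub$ and return to be stopped (so the sub-mass on $[w_0,\lb)$ corresponds to paths also crossing $\lb$); and symmetrically for mass on $[\rho_+(\gamma_+(w_0)),\gamma_+(w_0)]$. I would realise $\tau$ as an iterated Chacon--Walsh type construction: (a) run $B$ from $S_0$ until either hitting $\{\lb,\ub\}$ or being stopped inside the central interval by an Az\'ema--Yor step embedding $\mu$ restricted to $[\rho_-(w_0),\rho_+(\gamma_+(w_0))]$; (b) from $\ub$, run until hitting $\{w_0,\gamma_+(w_0)\}$, with a further step embedding $\mu$ restricted to $[w_0,\rho_-(w_0)]\cup\{\gamma_+(w_0)\}$; (c) the symmetric step starting from $\lb$. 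The barycentre identities $\mu_B([w_0,\rho_-(w_0)])=\lb$ and $\mu_B([\rho_+(\gamma_+(w_0)),\gamma_+(w_0)])=\ub$, together with the compatibility $\gamma_-(\gamma_+(w_0))=w_0$, are precisely what make the martingale property close up and the total embedded mass equal $\mu$. Cases I, II and IV are structurally simpler variants: I and II correspond to one arm degenerating ($w_0=0$ or $\gamma_+(w_0)=\infty$), and IV to the central interval being absent ($\rho_-(w_0)=\rho_+(\gamma_+(w_0))$).

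The main obstacle will be the bookkeeping that $B_\tau\sim\mu$: at each stage I must verify that the joint law of starting point and exit location matches the appropriate piece of $\mu$, which is encoded in the barycentre identities but must be checked piece by piece. Uniform integrability of $(B_{t\wedge\tau})$ is immediate since $\tau$ is built from exit times of bounded intervals. Once the embedding is in place, evaluating $\Pr(\sS_\tau\ge\ub,\iS_\tau\le\lb)$ decomposes into a finite sum of exit probabilities which, via linearity of $\Pc$ and the Breeden--Litzenberger identity \eqref{eq:market_law}, simplify into the claimed linear combination of $C(\cdot)$ and $P(\cdot)$, matching the cost of the corresponding superhedge and completing the argument.
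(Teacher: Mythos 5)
Your high-level strategy — reverse-engineer a Skorokhod embedding from each superhedge so that the inequality holds with equality almost surely, then pass to a market model via $\tilde S_t = B_{\tau\wedge t/(T-t)}$ — is exactly the paper's approach, and your description of the case III construction is essentially right. However, there are three substantive gaps.

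First, the exhaustiveness and mutual exclusivity of the four cases is not the routine continuity/monotonicity check you suggest; it is roughly half the paper's proof. Monotonicity of $\gamma_\pm$ only gives you that at most one of I and II holds in a weak sense; one also needs genuine barycentre arguments to show (a) I and II cannot coexist when $\ub\ge\rho_-(0)$ and $\lb\le\rho_+(\infty)$ (Lemma~\ref{lem:notAandB}), (b) IV excludes III and III excludes I, II (Lemma~\ref{lem:DnotCnotAB}), and (c) a monotonicity transfer for $\rho_\pm\circ\gamma_\pm$ (Lemma~\ref{lem:incrho}). These are then assembled in a four-way split on whether $\ub\gtrless\rho_-(0)$ and $\lb\gtrless\rho_+(\infty)$. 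Your proposal does not even state the problem of exclusivity, only existence of a fixed point, so the partition claim is not established.

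Second, and more seriously, Case IV is not a degenerate Case III with the central interval collapsing. In Case III the mass on $(\rho_-(w_0),\rho_+(\gamma_+(w_0)))$ is embedded by paths avoiding both barriers; in Case IV the conditions $\ub<\rho_-(0)$, $\lb>\rho_+(\infty)$ force a wholly different structure: \emph{all} mass first runs to $\{\lb,\ub\}$, then $\mu|_{(\rho_-(0),\infty)}$ is embedded from $\ub$, $\mu|_{(0,\rho_+(\infty))}$ from $\lb$, and the remaining middle mass on $(\rho_+(\infty),\rho_-(0))$ is embedded from the residual atoms at both $\lb$ and $\ub$ by paths that have already hit both barriers. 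This last stage is a two-starting-point embedding whose existence is nontrivial and requires the paper's argument producing a cut-point $z^*\in(\rho_+(\rho_-(0)),\rho_-(\rho_+(\infty)))$ together with a mass-comparison inequality that crucially uses $\rho_+(\rho_-(0))>\lb$. Your construction would not produce this, and the condition $\rho_-(w_0)=\rho_+(\gamma_+(w_0))$ you cite is the boundary, not the defining feature, of Case IV.

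Third, for Cases I and II the choice of intermediary embedding is \emph{not} free the way it is in Case III: the paper uses the Az\'ema–Yor (resp.\ reversed Az\'ema–Yor) embedding from $\lb$ (resp.\ $\ub$) precisely because its extremal property gives the identification $\{\sS_T\ge\ub\}=\{S_T\ge\rho_+(\infty)\}$. A generic ``one arm degenerating'' reading of Case III would not control the joint law of $(B_\tau,\sS_\tau)$, and the double-touch event would not coincide with a terminal-value event, so the superhedge would not be attained. This extremality is the key structural ingredient that your proposal omits.
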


%\subsection{Subhedging}
We present now the analogues of Proposition \ref{prop:dt_upperbound} and Theorem \ref{thm:upper_price} for the subhedging case. Whereas above we find an upper bound on the price of the derivative, in this case we will construct a lower bound. 

\begin{prop}
\label{prop:dt_lowerbound}
Given the market input \eqref{eq:market_input}, no-arbitrage \eqref{eq:no_arbitrage}
in the class of portfolios $\Lin(\mathcal{X}\cup \{\dbp,\indic{S_T >\lb},\indic{S_T \geq\ub}\})$ implies the following inequality between the prices
  \begin{equation}\label{eq:generalbound_dbp2}
    \Pc\dbp\geq \sup\left\{\Pc \lh_{I}(K_1,K_2),\Pc\lh_{II}(K_1,K_2,K_3),\Pc \lh_{III}(K_1,K_2,K_3),0\right\},
  \end{equation}
  where the supremum is taken over $0<K_2<\lb<K_3<\ub<K_1$ and
  $\lh_{I},\lh_{II},\lh_{III}$ are given by \eqref{eqn:asineqsH1} and
  the solutions to the relevant set of equations:
  \eqref{eqn:subh1soln}, \eqref{eqn:subh2soln} and
  \eqref{eqn:subh3soln}.
\end{prop}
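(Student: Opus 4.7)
The plan is to derive \eqref{eq:generalbound_dbp2} as a direct consequence of the pathwise lower bounds on $\dbp$ constructed in Section \ref{sec:dt_sub}, combined with the no-arbitrage postulate \eqref{eq:no_arbitrage}, in complete parallel with Proposition \ref{prop:dt_upperbound}.

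First I would record that, for each $j\in\{I,II,III\}$ with parameters chosen respectively via \eqref{eqn:subh1soln}, \eqref{eqn:subh2soln} and \eqref{eqn:subh3soln}, the random variable $\lh_j$ defined as the right-hand side of \eqref{eqn:asineqsH1} satisfies $\dbp\geq \lh_j$ pathwise on every continuous non-negative trajectory $(S_t)_{t\leq T}$. This is precisely what the figure-by-figure construction in Section \ref{sec:dt_sub} verifies: the static portfolio, augmented by the gains from the dynamic forward positions entered at $H_{\lb}$ and $H_{\ub}$ in the order specified by the case, is dominated by $\dbp$ on each of the regions $\{H_{\lb}<H_{\ub}\wedge T\}$, $\{H_{\ub}<H_{\lb}\wedge T\}$, $\{H_{\lb}<H_{\ub}<T\}$, $\{H_{\ub}<H_{\lb}<T\}$ and on the complementary event where at least one barrier is not hit. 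The positivity of $\alpha_2,\alpha_3,\gamma_1,\gamma_2$ and of the differences $(\alpha_3-\alpha_2)$, $(\alpha_2-\alpha_1)$, $(\alpha_3-\alpha_2+\alpha_1)$ noted after \eqref{eqn:subh1soln}, together with the analogous checks for the other two cases, ensures that the piecewise-affine bound indeed lies below the indicator on each region.

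Second I would verify that $\lh_j\in\Lin(\mathcal{X}\cup\{\indic{S_T>\lb},\indic{S_T\geq\ub}\})$. The calls and puts and the cash term $\alpha_0$ are in $\Lin(\mathcal{X})$ by definition; the static forward $\alpha_1(S_T-S_0)$ is in $\mathcal{X}$; the two digitals are explicitly adjoined and their prices are computed as limits of call-spreads, giving $\Pc\indic{S_T\geq \ub}=-C'(\ub)$ and $\Pc\indic{S_T>\lb}=-C'(\lb)$; finally, each of the four dynamic terms of the form $\beta(S_T-x)\mathbf{1}_{H_{x}\leq \cdot}$ corresponds, by the standing assumption on forwards entered at the first passage times $H_{\lb}$ and $H_{\ub}$, to a self-financing trade of zero initial cost. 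By linearity of $\Pc$ on $\Lin(\mathcal{X}\cup\{\dbp,\indic{S_T>\lb},\indic{S_T\geq \ub}\})$, the price $\Pc\lh_j$ is therefore well-defined and equals the corresponding linear combination of $C$'s, $P$'s and the two first-derivative values.

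Third, I would apply no-arbitrage \eqref{eq:no_arbitrage} to the pathwise non-negative payoff $\dbp-\lh_j$, which lies in $\Lin(\mathcal{X}\cup\{\dbp,\indic{S_T>\lb},\indic{S_T\geq\ub}\})$, to conclude $\Pc\dbp\geq \Pc\lh_j$. The trivial fourth lower bound $\Pc\dbp\geq 0$ is the same argument applied to $\dbp\geq 0$. Taking the supremum over admissible strikes $0<K_2<\lb<K_3<\ub<K_1$ and over $j\in\{I,II,III\}$ together with the trivial bound yields \eqref{eq:generalbound_dbp2}. The only non-routine ingredient is the pathwise verification of \eqref{eqn:asineqsH1} in each of the three cases; the rest of the argument is a mechanical application of linearity and no-arbitrage. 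That verification is nevertheless the main obstacle, since the subhedge payoff is piecewise affine with breakpoints at $K_2,\lb,K_3,\ub,K_1$ and different behaviour on the four events above, so checking dominance by $\dbp$ on each region is what forces the particular choice of coefficients \eqref{eqn:subh1soln}--\eqref{eqn:subh3soln} and the case distinction on $K_3$.
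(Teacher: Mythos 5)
Your proposal is correct and follows exactly the route the paper intends: the Proposition is treated as an immediate consequence of the pathwise inequalities \eqref{eqn:asineqsH1} established case-by-case in Section \ref{sec:dt_sub}, combined with linearity of $\Pc$ and the no-arbitrage postulate applied to the non-negative payoff $\dbp - \lh_j$. The paper leaves this step implicit (no explicit proof of Proposition \ref{prop:dt_lowerbound} appears in Section \ref{sec:proofs}), so your proposal simply spells out what the authors take for granted.

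One minor quibble on phrasing, not affecting validity: you say that checking dominance on each region ``forces'' the particular coefficients \eqref{eqn:subh1soln}--\eqref{eqn:subh3soln}. That is not quite right --- the dominance $\dbp\geq\lh_j$ holds for a range of parameter choices; the particular coefficients are chosen (via the shape constraints \eqref{eq:subh1}--\eqref{eq:subh7} and their analogues) so that the inequality is also \emph{attainable} in a suitable market model, i.e.\ for tightness, which is the content of Theorem \ref{thm:sub_prices}, not of this Proposition. For the Proposition itself you only need the one-sided pathwise inequality and the sign conditions on the coefficients that ensure it holds.
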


Again, an important aspect of \eqref{eq:generalbound_dbp2} is that
we can in fact show that the bound is tight --- that is, given a set
of call prices, there exists a process under which equality is
attained. Recall that under no-arbitrage the prices of digital calls are essentially specified by our market input via $\Pc \indic{S_T\geq\ub}= -C'(K)$.

In order to classify the different states, we make the following
definitions. Let $\mu$ be the law of $S_T$ implied by the call
prices. Fix $\lb < S_0 < \ub$, and, given $v \in [\lb,\ub]$, define:
\begin{eqnarray}
  \psi(v) & = & \inf \left\{ z \in [0, \lb]: \int_{(z,\lb) \cup (v,\ub)} u \,
    \mu(\td u) + \ub \left( \frac{\ub-S_0}{\ub - \lb} - \mu((z,\lb)\cup
      (v,\ub)) \right) = \lb \frac{\ub - S_0}{\ub - \lb} \right.\label{eq:psi_def_sub}\\ 
      &&\qquad \left. \mbox{ and } \mu((z,\lb) \cup (v,\ub)) \le \frac{\ub-S_0}{\ub -
      \lb} \right\}\nonumber \\
      \theta(v) & = & \sup \left\{ z \ge \ub:
    \int_{(\lb,v) \cup (\ub,z)} u \, \mu(\td u) + \lb \left\{
      \frac{S_0-\lb}{\ub - \lb} - \mu((\lb,v) \cup (\ub,z)) \right\} = \ub
    \frac{S_0 - \lb}{\ub - \lb}\right.\label{eq:theta_def_sub}\\ 
    &&\qquad \left. \mbox{ and } \mu((\lb,v) \cup (\ub,z)) \le \frac{S_0- \lb}{\ub - \lb} \right\}\nonumber
\end{eqnarray}
where we use the convention $\sup\{\emptyset\} = -\infty, \inf
\{\emptyset\} = \infty$. 

In particular, the definition of $\psi$ ensures that, on the set where
$\psi(v) \neq \infty$, we can run all the mass initially from
$S_0$ to $\{\lb, \ub\}$ and then embed from $\lb$ to $(\psi(v),\lb)
\cup (v,\ub)$ and a compensating atom at $\ub$ with the remaining
mass. Note further that the functions $\psi$ and $\theta$ are both
decreasing on the sets $\{v \in [\lb,\ub] :\psi(v) < \infty\}$ and
$\{v\in [\lb,\ub] :\theta(v) > -\infty\}$, which are both closed
intervals. Specifically, we will be interested in the region where
both the functions allow for a suitable embedding; define
\begin{eqnarray}
  \uv & = & \min \left\{\sup\{v \in [\lb,\ub] :\psi(v) <
    \infty\}, \sup\{v \in [\lb,\ub]: 
    \theta(v)>-\infty\}\right\},\nonumber\\
  \lv & = & \max \left\{\inf\{v\in [\lb,\ub]:\psi(v) < \infty\}, \inf\{v \in [\lb,\ub]: \theta(v)>-\infty\}\right\},\ \textrm{ and}\label{eqn:vvK_3}\\
  \kappa(v) &=& \ub \frac{\theta(v) - \lb}{\theta(v) - \lb + \ub - \psi(v)} +
  \lb \frac{\ub - \psi(v)}{\theta(v) - \lb + \ub - \psi(v)}\ ,\nonumber
\end{eqnarray}
where $\sup\{\emptyset\}=-\infty$ and $\inf\{\emptyset\}=\infty$.

% \begin{lemma} \label{lem:subhmaxv}
%   Either we may construct an embedding under which the process never
%   hits both $\ub$ and $\lb$, or
%   \begin{eqnarray}
%     \inf\{v:\psi(v) < \infty\} & \ge & \inf\{v : \theta(v)>-\infty\}
%     \label{eq:infpsigeinftheta}\\
%     \sup\{v:\psi(v) < \infty\} & \ge & \sup\{v : \theta(v)>-\infty\}
%     \label{eq:suppsigesuptheta}
%   \end{eqnarray}
%   and we may then write
%   \begin{equation} \label{eqn:vbarlevbar}
%     \lv = \inf\{v:\psi(v) < \infty\} \le \sup\{v : \theta(v)>-\infty\}
%     = \uv
%   \end{equation}
% \end{lemma}

\begin{theorem}\label{thm:sub_prices}
Let $\mu$ be the law of $S_T$ inferred from the prices of vanillas via \eqref{eq:market_law}
  and consider the double barrier derivative paying $\dbp$ for a fixed pair
  of barriers $\lb<S_0<\ub$, and recall \eqref{eq:psi_def_sub}-\eqref{eqn:vvK_3}. Then exactly one of the following is true
  \begin{enumerate}
  \item[\fbox{I}] `$\lb<S_0<\ub$':\\
    We have $\uv \ge \lv$ and there exists $v_0\in [\lv,\uv]$ such that $\kappa(v_0) = v_0$. Then there exists a market model in which:
    \begin{eqnarray} \label{eqn:subhedge1}
      \E \dbp & = & \E \lh_{I}(\theta(v_0),\psi(v_0))\nonumber\\
      & = & \alpha_0 +
      \alpha_2(C(\theta(v_0)) - C(\psi(v_0))) + \gamma_2 D(\ub) - \gamma_1
      D(\lb) \\
      &&{}+ \alpha_3\left[C(\lb)+C(\ub) -
        C(v_0)-C(\theta(v_0))\right]\nonumber
    \end{eqnarray}
    where $D(x)$ is the price of a digital option with payoff $\indic{S_T
      \ge x}$, and the values of $\alpha_0, \alpha_2, \alpha_3, \gamma_1,
    \gamma_2$ are given by \eqref{eqn:subh1soln}. 
  \item[\fbox{II}] `$\lb<S_0<<\ub$':\\
  We have $\uv \ge \lv$ and $\uv < \kappa(\uv)$. Then
    there exists a market model in which:
    \begin{eqnarray}\label{eqn:subhedge2}
      \E \dbp & = & \E \lh_{II}(\theta(\uv),\psi(\uv),\uv)\nonumber\\
       & = &
      \alpha_0 + \alpha_2(C(\theta(\uv)) - C(\psi(\uv))) + \gamma_2 D(\ub) -
      \gamma_1 D(\lb)\\ &&{}+ \alpha_3\left[C(\lb)+C(\ub) -
        C(\uv)-C(\theta(\uv))\right]\nonumber
    \end{eqnarray}
    where $D(x)$ is the price of a digital option with payoff $\indic{S_T
      \ge x}$, and the values of $\alpha_0, \alpha_2, \alpha_3, \gamma_1,
    \gamma_2$ are given by \eqref{eqn:subh2soln}. 
  \item[\fbox{III}] `$\lb<<S_0<\ub$':\\
    We have $\uv \ge \lv$ and $\lv > \kappa(\lv)$. Then there exists a market model in which:
    \begin{eqnarray*}
      \E \dbp & = & \E \lh_{III}(\theta(\lv),\psi(\lv),\lv)\nonumber\\ & = &
      \alpha_0 + \alpha_2(C(\theta(\lv)) - C(\psi(\lv))) + \gamma_2 D(\ub) -
      \gamma_1 D(\lb)\\ &&{}+ \alpha_3\left[C(\lb)+C(\ub) -
        C(\lv)-C(\theta(\lv))\right]\nonumber
    \end{eqnarray*}
    where $D(x)$ is the price of a digital option with payoff $\indic{S_T
      \ge x}$, and the values of $\alpha_0, \alpha_2, \alpha_3, \gamma_1,
    \gamma_2$ are given by \eqref{eqn:subh3soln}. 
  \item[\fbox{IV}] `$\lb<<S_0<<\ub$':\\
  We have $\uv<\lv$. Then there exists a market model in which $\E \dbp =0$.
  \end{enumerate}
  Furthermore, in cases $\fbox{I}-\fbox{III}$ we have $\lv =
  \inf\{v \in [\lb,\ub]:\psi(v) < \infty\} \le \sup\{v \in [\lb,\ub]:
  \theta(v)>-\infty\}= \uv$.
\end{theorem}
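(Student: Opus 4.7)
The plan is to exploit the market-model/SEP correspondence sketched in the introduction: Proposition \ref{prop:dt_lowerbound} already gives $\E\dbp \ge \E\lh$ under any model, so it suffices to exhibit, in each case, a stopping time $\tau$ with $B_\tau \sim \mu$ and $(B_{t\wedge\tau})$ uniformly integrable such that $\Pr(\sup_{u\le\tau}B_u \ge \ub,\,\inf_{u\le\tau}B_u \le \lb)$ equals the stated price; the associated model $\tilde S_t = B_{\tau \wedge \frac{t}{T-t}}$ then realises the bound. My template for cases \fbox{I}--\fbox{III} is: first run $B$ from $S_0$ to the first hit of $\{\lb,\ub\}$, placing mass $(\ub-S_0)/(\ub-\lb)$ at $\lb$ and $(S_0-\lb)/(\ub-\lb)$ at $\ub$; from $\lb$ embed the conditional law supported on $(\psi(v),\lb)\cup(v,\ub)$ together with a balancing atom at $\ub$, and symmetrically from $\ub$ embed onto $(\lb,v)\cup(\ub,\theta(v))$ with a balancing atom at $\lb$. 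The defining equations \eqref{eq:psi_def_sub}--\eqref{eq:theta_def_sub} encode exactly the mass and mean constraints that make each sub-embedding a valid (mean-preserving, hence UI-martingale) SEP of the conditional law on the prescribed support; existence follows from any standard SEP construction, e.g.\ Chacon--Walsh, applied to each piece.

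Case selection then amounts to choosing $v$ correctly: paths that hit both barriers are exactly those arriving at the two atoms, so $\Pr(\text{both hit})$ is a function of $v$ that one identifies with $\E \lh$ after matching strikes as $K_1 = \theta(v)$, $K_2 = \psi(v)$, $K_3 = v$. In case \fbox{I}, $v$ must satisfy the collinearity constraint \eqref{eq:subh7}, which translates into the fixed-point equation $\kappa(v_0) = v_0$; existence of $v_0 \in [\lv,\uv]$ comes from continuity and monotonicity of $\kappa$ together with an intermediate value argument. In cases \fbox{II} and \fbox{III} the fixed point lies outside $[\lv,\uv]$ and the best admissible $v$ is pinned at the endpoint $\uv$ or $\lv$; at such a boundary the collinearity \eqref{eq:subh7} fails and \eqref{eq:subh3},\eqref{eq:subh4} must be replaced by the weaker \eqref{eq:subh3b}--\eqref{eq:subh4b} (or their mirror), and the subhedge accordingly degenerates to $\lh_{II}$ or $\lh_{III}$. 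Case \fbox{IV} ($\uv < \lv$) requires a separate construction in which no path touches both barriers --- e.g.\ conditional on reaching $\lb$, the subsequent motion stays strictly below $\ub$, and vice versa --- giving $\Pr(\text{both hit}) = 0$ and matching the trivial zero subhedge; feasibility of such an embedding is precisely what $\uv < \lv$ encodes. The ``Furthermore'' identity $\lv = \inf\{v:\psi(v)<\infty\}$ and $\uv = \sup\{v:\theta(v)>-\infty\}$ in cases \fbox{I}--\fbox{III} follows by a mean/mass balance argument: whenever one of the two sub-embeddings is admissible at an endpoint, the complementary constraint forces the other to be admissible there as well.

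The main obstacle will be verifying pathwise tightness of \eqref{eqn:asineqsH1} under each constructed embedding: one must track, across the four path types (no barrier hit; only $\lb$; only $\ub$; both hit, in either order) and across the support of the conditional laws, the contributions of the forward positions $(\alpha_2-\alpha_1),\,-\alpha_2,\,-(\alpha_3-\alpha_2+\alpha_1),\,(\alpha_3-\alpha_2)$ and the digital terms $\gamma_1,\gamma_2$, and confirm that the right-hand side of \eqref{eqn:asineqsH1} equals $\dbp$ on the support of $B_\tau$. The constraint system \eqref{eq:subh1}--\eqref{eq:subh7} (and its modifications in cases \fbox{II}--\fbox{III}) is engineered precisely so that this tightness holds along the constructed paths, but the bookkeeping is considerable --- this is where the bulk of the technical content of the proof resides.
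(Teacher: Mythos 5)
Your proposal matches the paper's proof in essentially every structural respect: the two-stage embedding (first to $\{\lb,\ub\}$, then from each barrier to a conditional support with a compensating atom at the other barrier, followed by a final diffusion of the atoms into the tails), the identification $K_1=\theta(v)$, $K_2=\psi(v)$, $K_3=v$, the intermediate-value argument on the continuous decreasing map $\kappa$ to obtain the trichotomy $\fbox{I}$--$\fbox{III}$, the endpoint ``no-atom'' degeneracy pinning $K_3$ at $\uv$ or $\lv$ in $\fbox{II}$/$\fbox{III}$, and the barrier-avoiding embedding in case $\fbox{IV}$. The one place you understate the work is that the assertions you dispose of as a ``mean/mass balance argument'' --- namely feasibility of the avoiding embedding when $\uv<\lv$, and the ``Furthermore'' identities $\lv=\inf\{v:\psi(v)<\infty\}$ and $\uv=\sup\{v:\theta(v)>-\infty\}$ --- are exactly the content of the paper's Lemma~\ref{lem:subhmaxv}, whose proof is a genuine piece of work (an explicit construction of the avoiding model in the boundary case $\{v:\theta(v)>-\infty\}=\emptyset$, plus a barycentre contradiction to obtain \eqref{eq:infpsigeinftheta}--\eqref{eq:suppsigesuptheta}) rather than a direct consequence of the definitions.
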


%\NewSec{Applications and Practical Considerations}\SecTitle{\sc Applications}
\section{Applications and Practical Considerations}
\label{sec:extra}

\subsection{Finitely many strikes}
\label{sec:extra_strikes}

One important practical aspect where reality differs from the
theoretical situation described above concerns the availability of
calls with arbitrary strikes. Generally, calls will only trade at a
finite set of strikes, $0 = x_0 \le x_1 \le \ldots \le x_N$ (with $x_0
= 0$ corresponding to the asset itself). It is then natural to ask:
how does this affect the hedging strategies introduced above? In full
generality, this question results in a rather large number of
`special' cases that need to be considered separately (for example,
the case where no strikes are traded above $\ub$, or the case where
there are no strikes traded with $\lb < K < \ub$). In addition, there
are differing cases, dependent on whether the digital options at $\lb$
and $\ub$ are traded. Consequently, we will not attempt to give a
complete answer to this question, but we will consider only the cases
where there are `comparatively many' traded strikes, and assume that
digital calls are not available to trade. Furthermore, we will apply
the theorems of previous sections to measures with atoms. It should be
clear how to do this, but a formal treatment would be rather lengthy
and tedious, with some extra care needed when the atoms are at the
barriers.  For that reason we only state the results of this section
informally.

Mathematically, the presence of atoms in the measure $\mu$ means that
the call prices are no longer twice differentiable. The function is
still convex, but we now have possibly differing left and right
derivatives for the function. The implication for the call prices is
the following:
\[
\mu([x,\infty)) = -C'_-(K), \mbox{ and } \mu((x,\infty)) = -C'_+(K).
\]
In particular, atoms of $\mu$ will correspond to `kinks' in the call
prices.

The first remark to make in the finite-strike case is that, if we
replace the supremum/infimum over strikes that appear in expressions
such as \eqref{eq:generalbound_dbp} and \eqref{eq:generalbound_dbp2}
by the supremum/infimum over traded strikes, then the arguments that
conclude that these are lower/upper bounds on the price are still
valid. The argument only breaks down when we wish to show that these
are the best possible bounds. To try to replace the latter, we now
need to consider which models might be possible under the given call
prices. Our approach will be based on the following type of argument:
\begin{enumerate}
\item suppose that using only calls and puts with traded strikes we
  may construct $\tilde{H}^i$, for $i \in \{I, \ldots, IV\}$, such
  that $\tilde{H}^i \ge \uH^i$ as a function of $S_T$;
\item suppose further that we can find an admissible call price
  function $C(K)$, $K>0$, which agrees with the traded prices, and
  such that in the market model
  $(\Omega,\mathcal{F},(\mathcal{F}_t),\Pr^* )$ associated by Theorem
  \ref{thm:upper_price} with the upper bound
  \eqref{eq:generalbound_dbp} we have $\tilde{H}^i(S_T) = \uH^i(S_T)$,
  $\Pr^* $-almost surely;
\end{enumerate}
then the smallest upper bound on the price of a digital double touch
barrier option is the cost of the cheapest portfolio
$\tilde{H}^i$. This is fairly easy to see: clearly the price is an
upper bound on the price of the option, since $\tilde{H}^i$
superhedges, and under $\Pr^*$ this upper bound is attained. Indeed,
by assumption on $\tilde{H}^i$, in the market model associated with
$\Pr^*$ we have $\Pc \tilde{H}^i=\E^* \tilde{H}^i=\E^*
\uH^{i}=\Pc\uH^{i}$.  Consequently, by Theorem~\ref{thm:upper_price},
the price of the traded portfolio $\tilde{H}^{i}$ and the price of the
digital double touch barrier option are equal under the market model
$\Pr^*$.  Note that in (ii) above it is in fact enough to have
$\tilde{H}^i(S_T) = \uH^i(S_T)$ just for the $\uH^i$ which attains
equality in \eqref{eq:generalbound_dbp}.

\begin{figure}
  \psfrag{xj}{$x_j$}
  \psfrag{xj2}{$x_{j+1}$}
  \centering
  \includegraphics{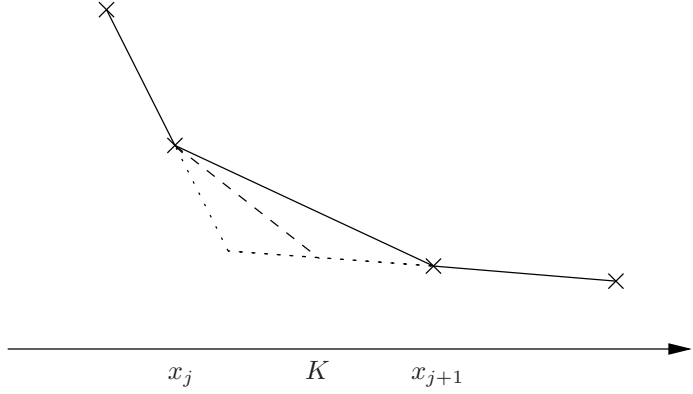}
  \caption{Possible call price surfaces as a function of the strike. 
    The crosses indicate the prices of traded calls, the solid line
    corresponds to the upper bound, which corresponds to placing all
    possible mass at traded strikes. The lower bound is indicated by
    the dotted line, and the dashed line indicates the surface we will
    choose when we wish to minimise the call price at $K$. In this
    case, we note that there will be mass at $K$ and $x_j$, but not at
    $x_{j+1}$. There is a second case where $K$ is below the kink in
    the dotted line, when the resulting surface would place mass at
    $K$ and $x_{j+1}$, but not at $x_j$.} 
\end{figure}

We now wish to understand the possible models that might
correspond to a given set of call prices $\left\{C(x_i); 0 \le i \le
  N\right\}$. Simple arbitrage constraints (see \eg{}
\cite[Theorem~3.1]{DavisHobson:06}\footnote{We suppose also that our
  call prices do not exhibit what is termed here as weak arbitrage})
require that the call prices at other strikes (if traded) imply that
the function $C(K)$ is convex and decreasing. This allows
us to deduce that, for $K$ such that $x_j < K < x_{j+1}$ for some $j$,
we must have
\begin{eqnarray}
  C(K) & \le & C(x_j) \frac{x_{j+1}-K}{x_{j+1} - x_j} +
  C(x_{j+1})\frac{K-x_j}{x_{j+1} - x_j} \label{eq:upper1} \\
  C(K) & \ge & C(x_{j}) + \frac{C(x_{j}) - C(x_{j-1})}{x_{j} -
    x_{j-1}}(K-x_{j}) \label{eq:lower1}\\
  C(K) & \ge & C(x_{j+1}) - \frac{C(x_{j+2}) - C(x_{j+1})}{x_{j+2} -
    x_{j+1}}(x_{j+1}-K)\label{eq:lower2}
\end{eqnarray}
These inequalities therefore provide upper and lower bounds on the
call price at strike $K$, and it can be seen that the upper
bound and lower bound are tight by choosing suitable models: in the
upper bound, the corresponding model places all mass of the law of
$S_T$ at the strikes $x_i$; in the lower bounds, the larger of the
two possible terms can be attained with a law that places mass at
$K$, and at other $x_i$'s except, in \eqref{eq:lower1}, at $x_{j}$,
and in \eqref{eq:lower2}, at $x_{j+1}$. Moreover, provided that there
are sufficiently many traded calls between two strikes $K, K'$, we can
(for example) choose a law that attains the maximum possible call
price at $K$, and the minimum possible price at $K'$. We will assume
that this property holds for all `relevant' points in the sequel. 

\begin{figure}
  \psfrag{xj}{$x_j$}
  \psfrag{xj2}{$x_{j+1}$}
  \centering
  \includegraphics{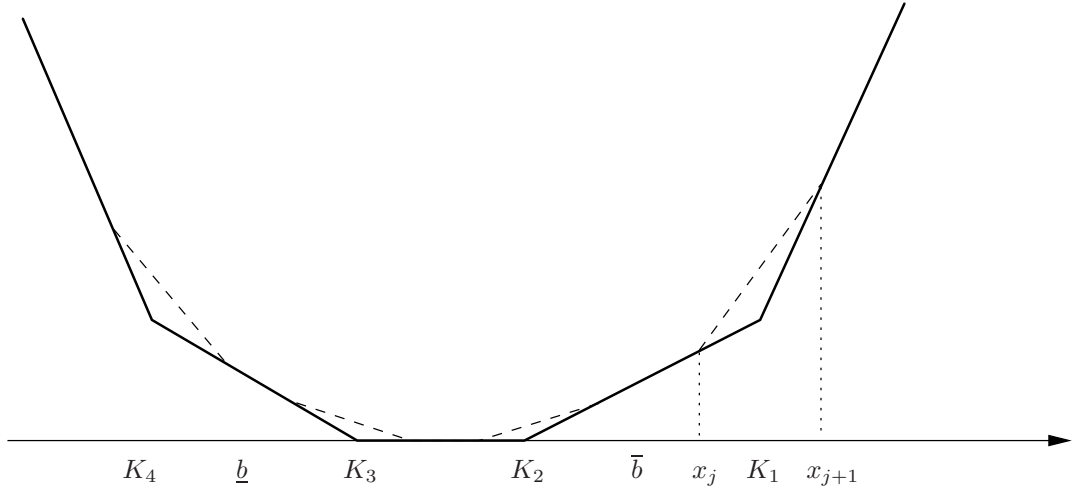}
  \caption{\label{fig:uH3finstrike} An optimal superhedge $\uH^{III}$ in the case
    where only finitely many strikes are traded. The lower (solid)
    payoff denotes the optimal construction under the chosen extension
    of call prices to all strikes, and the upper (dashed) payoff denotes the
    payoff actually constructed. Note that, for example, $x_j$ is the
    largest traded strike below $K_1$, and $x_{j+1}$ is the smallest
    traded strike greater than $K_1$.} 
\end{figure}

Consider firstly the case where we wish to superhedge the double touch
option. We will consider the model $\Pr^*$ which corresponds (through
Theorem~\ref{thm:upper_price}) to the call prices obtained by linearly
interpolating the prices at $x_0, \ldots, x_n$ --- in particular,
$C(K)$ is the maximal possible value it may take under the assumption
of no arbitrage. The key idea is to consider the portfolio depicted in
Figure~\ref{fig:uH3finstrike}, where the smaller payoff is the optimal
payoff under the assumption that all strikes may be traded, and the
upper payoff is one that may be constructed using only strikes that
are actually traded. Then although the upper payoff is strictly larger
between $x_j$ and $x_{j+1}$ say, where $x_j < K_1 < x_{j+1}$, the
points at which this occurs are not points at which the asset will
finish, since under $\Pr^*$, the law of $S_T$ is supported only by the
points $x_i$. Consequently, both the upper and lower payoff have
almost surely the same payoff under $\Pr^*$, and therefore the same
expectation and price --- in particular, the upper portfolio is a
superhedging portfolio, which has the same price as the lower
portfolio, which is the smallest upper bound for the double touch
under $\Pr^*$, and this payoff is in turn equal to the payoff of the
double touch under $\Pr^*$. Since we have a superhedge for all models,
and a model under which the superhedge is a hedge, we must have the
least upper bound.  We note additionally that the same choice of
$C(K)$ and the same $\Pr^*$ will work in a similar manner for the
other hedges $\uh^I, \uh^{II}$ and $\uh^{IV}$.

Consider now the lower bound. To keep things simple we begin by
altering slightly the problem: rather than the payoff $\dbp$, we
consider a subhedge of the option with payoff $\dbpnonstrict$. Then
\eqref{eqn:asineqsH1} still holds\footnote{Technically, for
  \eqref{eqn:asineqsH1} to still hold, we actually need to modify the
  hitting times so that we consider the entrance times of
  e.g. $(0,\lb)$ rather than the hitting time of $\lb$, and be able to
  trade forward with strike $\lb$ at this stopping time. In practice,
  this will not be crucial, since for example continuity of the asset
  price means that this entrance time may be suitably approximated,
  and it will commonly be the case that the two stopping times are in
  fact equal.}  with the new term $\dbpnonstrict$ on the left-hand
side provided we modify the digital options on the right-hand side to
$\mathbf{1}_{\{S_T \ge \lb\}}$ and $\mathbf{1}_{\{S_T > \ub\}}$. So we
may still consider the optimal portfolio (in all three cases) as being
short a collection of calls at strikes $K_1, K_2$ and $K_3$, long
calls at $\lb$ and $\ub$, and holding digital options at each of these
points. Intuitively, we should look for a model which will maximise
the cost of the calls at $K_1, K_2$ and $K_3$, and minimise the cost
at $\ub$ and $\lb$, as well as maximising the cost of a digital call
at $\lb$ and minimising the cost of the digital call at $\ub$. The
former conditions correspond to choosing the call prices which give
the upper bound \eqref{eq:upper1} --- so we choose the call price
which linearly interpolates $C(x_i)$ except when $x_i \le \lb \le
x_{i+1}$, and $x_i \le \ub \le x_{i+1}$. In the latter cases, we wish
to minimise the call price, so we choose the prices corresponding to
the appropriate lower bound \eqref{eq:lower1} or \eqref{eq:lower2},
which have a kink at $\lb$ and at (exactly) one of its two adjacent
traded strikes and likewise for $\ub$. We note that the prices of the
digital calls (which are either minus the left gradient or minus the
right gradient of the call prices at the barrier) will also now be
optimised when they trade in exactly the forms specified above (that
is, the digital call at $\lb$ only pays out if the asset is greater
than or equal to $\lb$, while the call at $\ub$ will pay out if the
asset is strictly larger than $\ub$ at maturity).

\begin{figure}
  \includegraphics{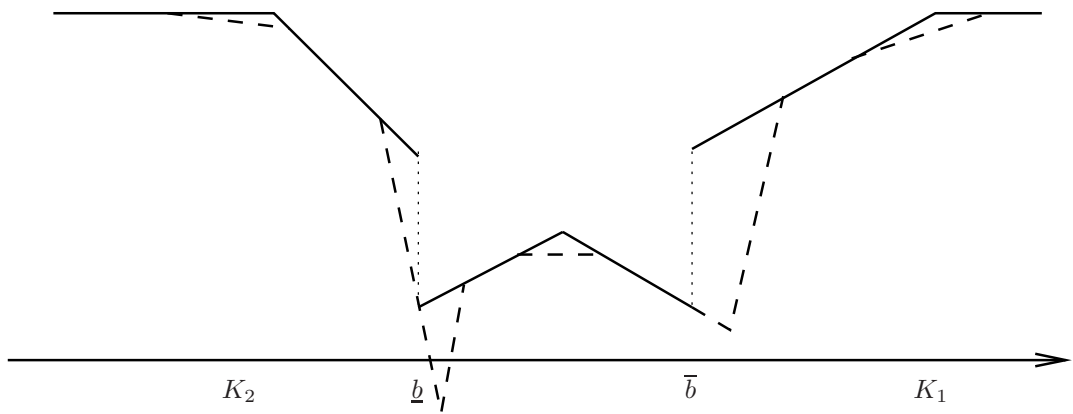}
  \centering
  \caption{\label{fig:lH1finstrike} An optimal subhedge in the case
    where only finitely many strikes are traded. The upper (solid)
    payoff denotes the optimal construction under the chosen extension
    of call prices to all strikes, and the lower (dashed) payoff
    denotes the payoff actually constructed. Observe that there are
    two possible constructions at the discontinuity --- the choice of
    which type of construction is optimal will depend on whether the
    optimal lower bound on the call price at $\lb$ or $\ub$
    corresponds to the bound \eqref{eq:lower1} or \eqref{eq:lower2}}
\end{figure}

The above procedure specifies uniquely a complete set of call prices
$C(K)$, which match the market input and which are our candidate for
the smallest lower bound among the possible models. Then we note that
a construction similar to that given in Figure~\ref{fig:lH1finstrike}
will work --- the main difference to the superhedge case is that, at
the discontinuity, there are two possible cases that need to be
considered, and the optimal subhedge will depend on behaviour of
$C(K)$ at the strikes adjacent to $\lb$ and $\ub$. More precisely, the
portfolio given by the dotted line in Figure~\ref{fig:lH1finstrike}
corresponds to the case when $C(K)$ has no kinks at the traded strikes
to the immediate right of both barriers (i.e.~$S_T$ has no atoms at
these strikes). The other three possibilities are straightforward
modifications. The argument then proceeds as above: in the model given
by Theorem \ref{thm:sub_prices} subhedge $\lH_i$ achieves equality in
\eqref{eq:generalbound_dbp2} (modified to account for the changes from
inequalities to strict inequalities and {\it vice versa}) and the
portfolio constructed in Figure~\ref{fig:lH1finstrike} is
almost-surely equal\footnote{Assuming that the optimal $K_3$ chosen is
  separated from $\lb$ and $\ub$ by a traded call price} to $\lH_i$,
so that they must have the same price. The resulting subhedge,
constructed using only calls and puts with traded strikes, is
therefore a hedge under the chosen model, and therefore the optimal
lower bound.

Note, however, the behaviour of our construction at the barriers: the
optimal model has atoms at the barriers which will be embedded by mass
which has already hit the other barrier. With the modified payoff,
$\dbpnonstrict$, this will still give equality in the subhedge, but
this would not have been the case with the original model. One could
now consider the option $\dbp$ by approximating with a payoff of the
form $\mathbf{1}_{\overline{S}_T> \ub-\eps,\, \underline{S}_T<
  \lb+\eps}$, which would (in the optimal embedding) place atoms `just
inside the barriers' --- it is this behaviour of the `optimal'
construction that required us to consider the modified payoff in this
case.

\subsection{Jumps in the underlying}
\label{sec:extra_jumps}
Throughout the paper, we have assumed that $(S_t)_{t \ge 0}$ has
continuous paths. In fact we can relax this assumption
considerably. First of all it is relatively simple to see that if we
only assume that barriers $\lb,\ub$ are crossed in a continuous manner
then all of our results remain true. Secondly, if we make no
continuity assumptions\footnote{We always suppose the processes have
  c\`adl\`ag paths.} then all our
superhedges still work -- jumping over the barrier only makes the
appropriate forward transaction more profitable. In contrast, our
subhedges do not work and lower bounds on prices are trivially zero
and are attained in the model where $S_t=S_0$ for all $t<T$ and then
it jumps to the final position $S_T$.

\subsection{Hedging comparisons}
\label{sec:extra_numerics}

In practice, one would expect that the prices we derive as upper and
lower bounds using the techniques of this paper will be rather wide,
and well outside typical bid/ask spreads. Consequently, the use of
these techniques as a method for pricing is unlikely to be successful.
However the technique also provides superhedging and subhedging
strategies that may be helpful. Consider a trader who has sold a
double barrier option (at a price determined by some model perhaps),
and who wishes to hedge the resulting risk. In a Black-Scholes world,
the trader could remove the risk from his position by delta-hedging
the short position. However, there are a number of practical
considerations that would interfere with such an approach:
\begin{description}
\item[Discrete Hedging:] A notable source of errors in the hedge will
  be the fact that the hedging portfolio cannot be continuously
  adjusted; rather the delta of the position might be adjusted on a
  periodic basis, resulting in an inexact hedge of the position. While
  including a gamma hedge could improve this, the hedge will never be
  perfect. In addition, there is an organisational cost (and risk) to
  setting up such a hedging operation that might be important.
\item[Transaction Costs:] A second consideration is that each trade
  will incur a certain level of transaction costs. These might be tiny for delta-hedging but
  not so anymore for vega-hedging. To minimise the
  total transaction costs, the trader would like to be able to trade
  as infrequently as possible. Of course, this means that there will
  be a necessary trade-off with the discretisation errors incurred
  above.
\item[Model Risk] The final concern for the trader would be: am I
  hedging with the correct model? Using an incorrect model will of
  course result in systematic hedging errors due to \eg{} incorrectly
  estimated volatility, but could also lead to large losses should the
  model fail to incorporate structural effects such as jumps. A
  delta-hedge would typically be improved with a vega-hedge which in
  turn raises the issue of transaction costs as mentioned above.
\end{description}
It would appear that the constructions developed in Section
\ref{sec:dt} may be able to address some of these
issues: there is no need for regular recalculation of the Greeks of
the position, although the breaching of the barrier still requires
monitoring; since there are only a small number of transactions, it
seems likely that the transaction costs may be reduced; our hedge has
been derived using model-free techniques, so that we will still be
hedged even if the market does not behave according to our initial
model, and behaviour such as jumps (at least for the upper bound of
the double touch) will not affect this.

A further consideration that is likely to be of importance to a hedger
is the likely distribution of the returns. Under the hedging strategy
suggested above, so that the trader has sold the option using the
`correct' price (plus a small profit), and set up the superhedging
strategy suggested at a higher price, on average the trader will come
out even, as he will if he delta hedges. His comparison between the
approaches would then come down to the respective risk involved in the
different hedges. For a delta/vega hedge, this is typically symmetric about
zero, however the superhedge will be very asymmetric as it is bounded below. One might
expect that a large number of paths will hedge `correctly,' resulting
in a loss close to the difference between the price the option was sold at,
and the price that the hedge was bought at, and the remaining paths
will do better, when the superhedge strictly dominates. If the trader
is particularly worried about the possible tail of his trading losses
as a measure of risk, this strict cut-off could be very
advantageous. The delta/vega hedge, on the other hand, has the appeal of having a lower variance
of hedging errors.

Of course a variety of such strategies (typically known as static,
semi-static or robust) have been suggested in the literature, under a
variety of more or less restrictive assumptions on the price process,
and mostly for single barrier options, and variants such as knock-out
calls. We have already mentioned the paper \cite{Brown:01b} which makes very
limited restrictions on the underlying price process. More restrictive
is the work of \cite{BowieCarr:94}, and subsequent papers
\cite{CarrChou:97,CarrEllisGupta:98}. Here the authors assume that the
volatility satisfies a symmetry assumption, and as a consequence, one
can for example hedge a knock-out call with the barrier above the
strike by holding the vanilla call, and being short a call at a
certain strike above the barrier. By the assumption on the volatility,
whenever the underlying hits the barrier, both calls have the same
value, and the position may be closed out for zero value. A related
technique is due to \cite{DermanErgenerKani:95}, and followed up by
\cite{AndersenAndreasenEliezer:02} and \cite{Fink:03}. The idea here
is to use other traded options to make the value of the hedging
portfolio equal to zero along the barrier when liquidated. In the
simplest form, a portfolio of calls above the barrier at different
strikes and/or maturities are purchased so that the portfolio value at
selected times before maturity is zero. Extensions allow this idea to
be used for stochastic volatility, and even to cover jumps, at the
expense of needing possibly a very large portfolio of options. More
recently, work of \cite{Nalholm:06} unifies both these approaches, and
allows a fairly general set of asset dynamics, as does
\cite{GieseMaruhn:07}, where the authors find an optimal portfolio
by setting up an optimisation problem. Note however that all
these strategies assume a known model for the underlying, and also
that the hedging assets will be liquid enough for the portfolio to be
liquidated at the price specified under the model. In addition, since
some of the hedging portfolios can involve a large number of options,
it is not clear that the static hedges here will be efficient at
resolving the issues of transaction costs (and operational
simplicity) or model risk. A numerical investigation of the
performance of a range of these options in practice has been conducted
in \cite{Engelmann:07}, and similar investigations for a different
class of static hedges appear in
\cite{DavisSchachermayerTompkins:01b,Tompkins:97}.

There are also more classical, theoretical approaches to the problem
of hedging where the problem is considered in an incomplete market
(without which, of course, a perfect hedge would be possible). In this
situation (for example \cite{FollmerSchweizer:90}) one wants to solve
an optimal control problem where the aim is to minimise the `risk' of
the hedging error, where `risk' is interpreted suitably (perhaps with
regards to a utility function or a risk measure). One can further
modify the approach to restrict the class of trading strategies
available (\eg{} \cite{CetinJarrowProtter:04}). More recently, in a
combination of the static and dynamic approaches,
\cite{IlhanJonssonSircar:06} have considered the problem of risk
minimisation over an initial static portfolio and a dynamic trading
strategy in the underlying.

The most notable difference between the studies described above, and the
ideas of the previous sections is that we make very little
modelling assumptions on the underlying, whereas most of the approaches
listed require a single model to be specified, with respect to which
the results will then be optimal. In particular, these techniques are
unable to say anything about hedging losses should the assumed model
not actually be correct.

Of course, the criteria under which we have constructed our hedges ---
that they are the smallest model free superhedging strategy, or the
greatest model-free subhedging strategy --- do not necessarily mean
that the behaviour of the hedges in `normal' circumstances will be
particularly suitable: we would expect that the hedge would perform
best in extreme market conditions, however in order for it to be
suitable as a hedge against model risk, one would also want the
performance of the hedge to generally be reasonable. To see how this
strategy compares, we will now consider some Monte Carlo based
comparisons with the standard delta/vega-hedging techniques. 
We only look at the double touch options treated in Section \ref{sec:dt}. The comparisons will take the following form:
\begin{enumerate}
\item We choose the Heston model for the `true' underlying asset, and compute
  the time-0 call prices under this model at a range of strike prices,
  and the time-0 price of a double barrier option;
\item We compute the optimal super- and sub- hedges for the digital double touch barrier option based on the observed call prices, and suppose that the hedger purchases these portfolios using the cash
  received from the buyer (and borrowing/investing the difference between the portfolios);
\item For comparison purposes, we also hedge the option using a
  suitable delta/vega hedge with daily updating (for comparison purposes,
  the hitting of the barrier in both cases is also monitored on a
  daily basis).
\end{enumerate}

In the numerical examples, we assume that the underlying process is
the Heston stochastic volatility model (\cite{Heston:93})
% \begin{equation}
% \label{eq:heston}
% \begin{array}{ll}
% \left\{\begin{split}
% dS_t&=\sqrt{v_t}S_tdW^1_t,\\
% dv_t&=\kappa(\theta-v_t)dt+\xi\sqrt{v_t}dW^2_t,
% \end{split}\right.&
% \begin{split}
% &S_0=S_0,\ v_0=\sigma_0\\
% &d\langle W^1,W^2\rangle_t=\rho dt,
% \end{split}
% \end{array}
% \end{equation}
\begin{equation}
\label{eq:heston}
\left\{
  \begin{array}{rclcl}
    dS_t&=&\sqrt{v_t}S_tdW^1_t, & & S_0=S_0,\ v_0=\sigma_0\\
    dv_t&=&\kappa(\theta-v_t)dt+\xi\sqrt{v_t}dW^2_t, &&
    d\langle W^1,W^2\rangle_t=\rho dt,
  \end{array}
\right.
\end{equation}
with parameters 
\begin{equation}
\label{eq:heston_par}
S_0=100,\ \sigma_0=0.5,\ \kappa=0.6,\ \theta=1,\ \xi=1.3\textrm{ and }\rho=0.15\ .
\end{equation}
Transactions in $S_t$ carry a $0.5\%$ transaction cost and buying or
selling call/put options carries a $1\%$ transaction cost. The
delta/vega hedge is constructed using the Black-Scholes delta of the
option, but using the at-the-money implied volatility assuming that
the call prices are correct (\ie{}\ they follow the Heston
model). While not perfect as a hedge, empirical evidence
\cite{DumasFlemingWhaley:98} or \cite{EngelmannFenglerSchwendner:06}
suggests that the hedge is reasonable even without the vega component,
although it is also the case that more sophisticated methods should
result in an improvement of this benchmark.

We consider a short and a long position in a digital double touch
barrier option with payoff $\dbp$ for $\ub=117$ and $\lb=83$ and
compare hedging performance of our quasi-static super- and sub- hedges
and the standard delta/vega hedges running 40000 Monte Carlo
simulations\footnote{The resulting hedging errors were mean-adjusted
  as in Tompkins \cite{Tompkins:97} for consistency. The adjustments
  are of order 0.001 and have no qualitative influence on our
  results.}.

The cumulative distributions of hedging errors are given in the upper graphs in Figure \ref{fig:hegerr_117_83}.
\setlength{\columnsep}{0cm}
\begin{figure}[htbp]
\begin{multicols}{2}{
\includegraphics[height=6.5cm]{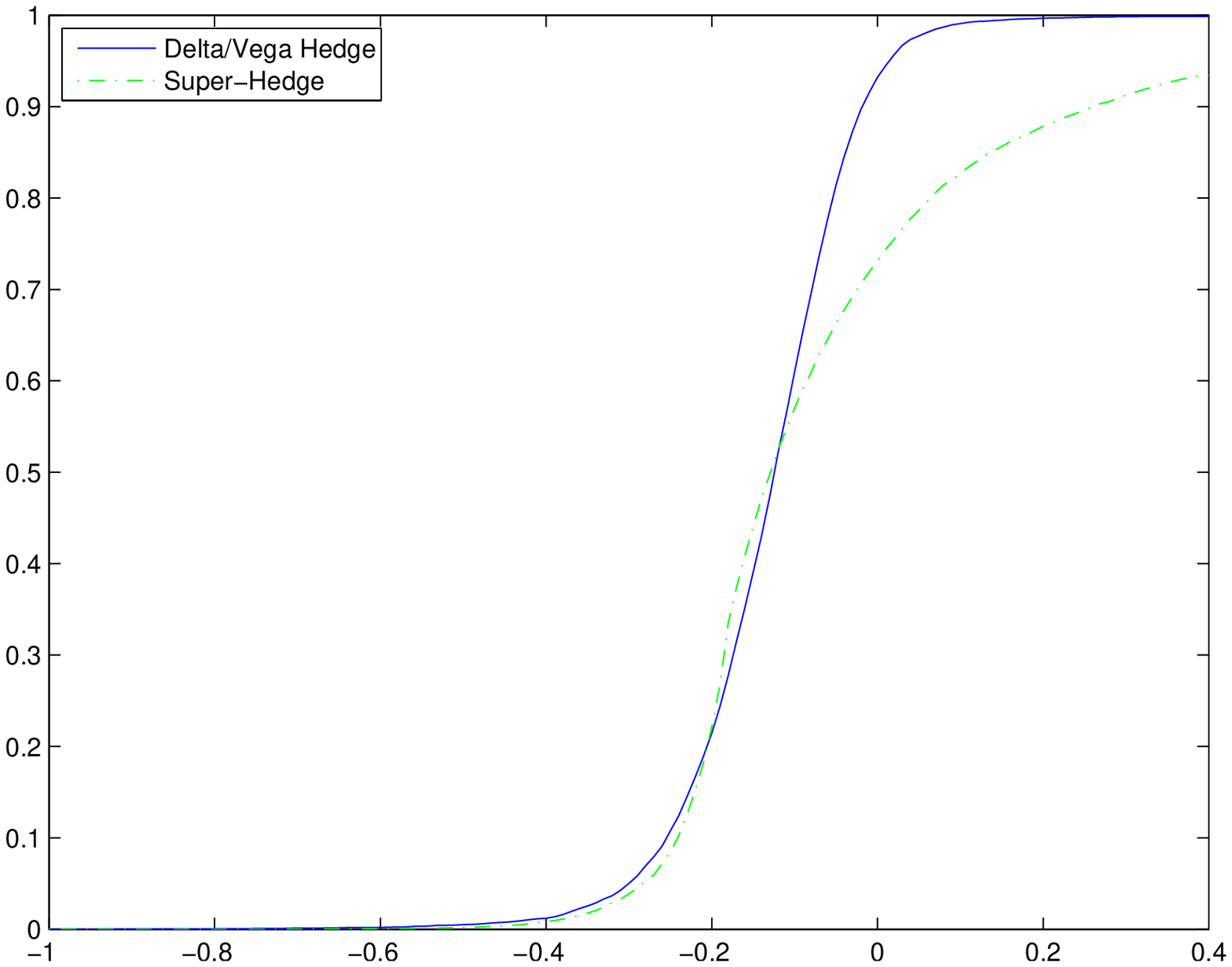}

\includegraphics[height=6.5cm]{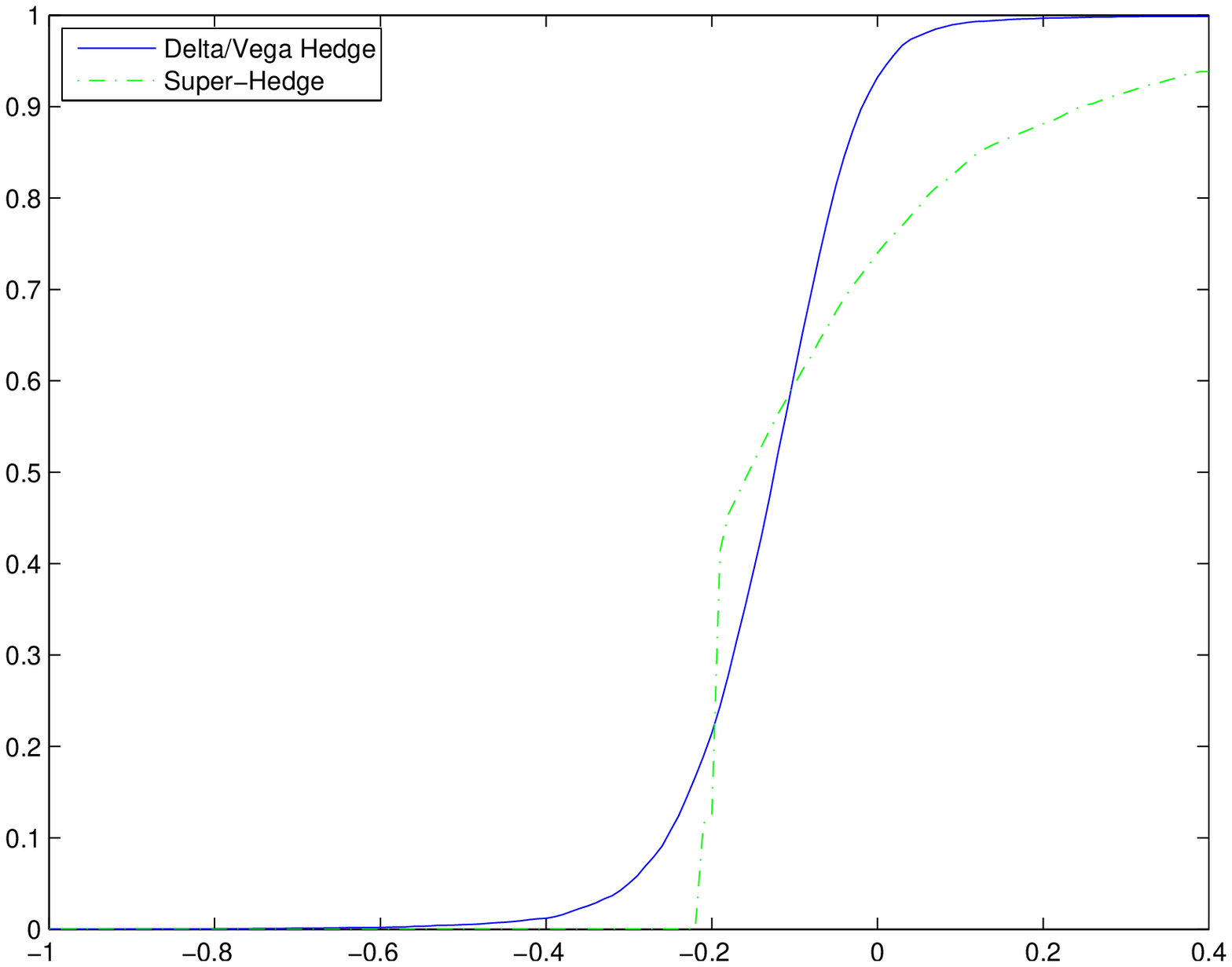}

\includegraphics[height=6.5cm]{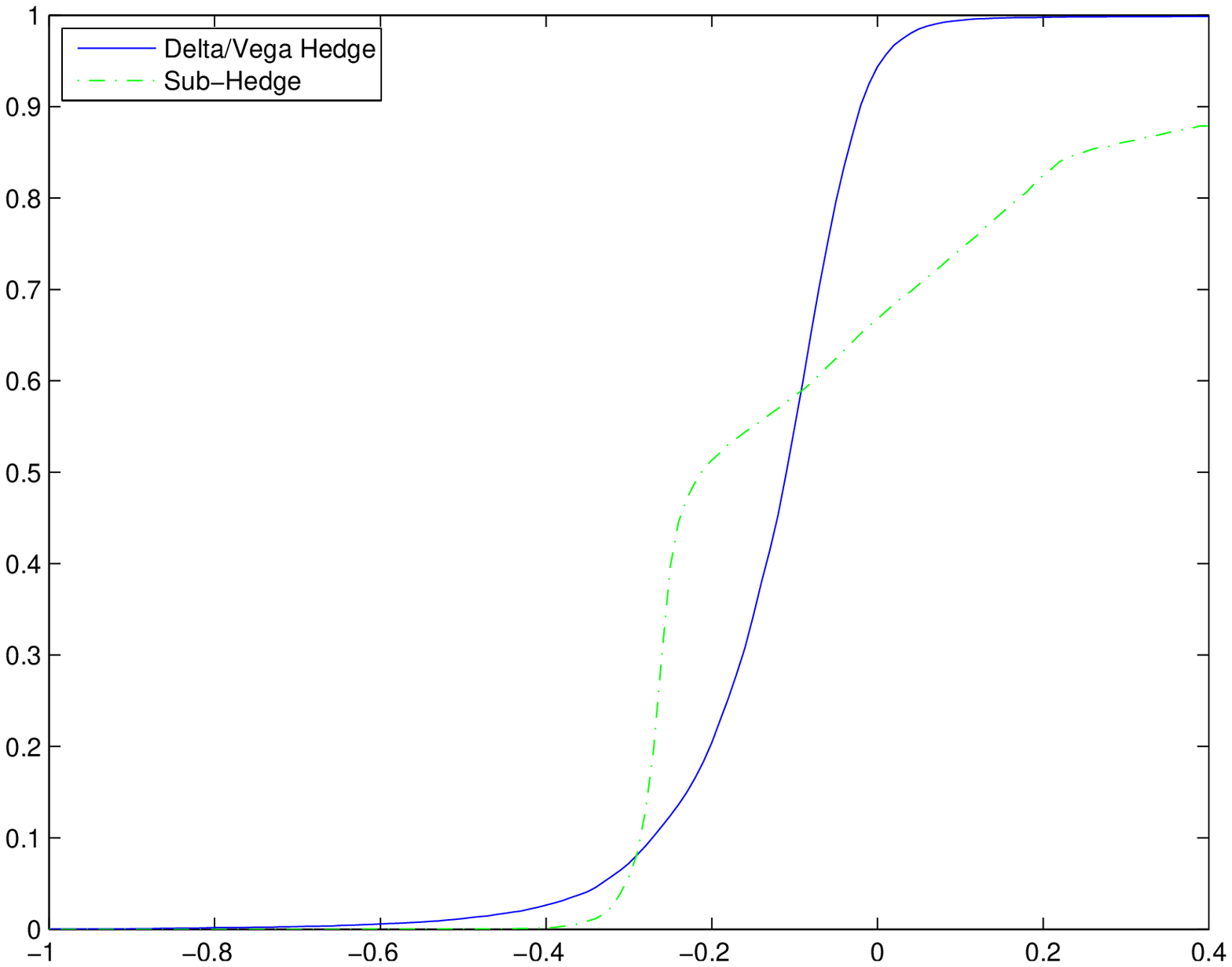}

\includegraphics[height=6.5cm]{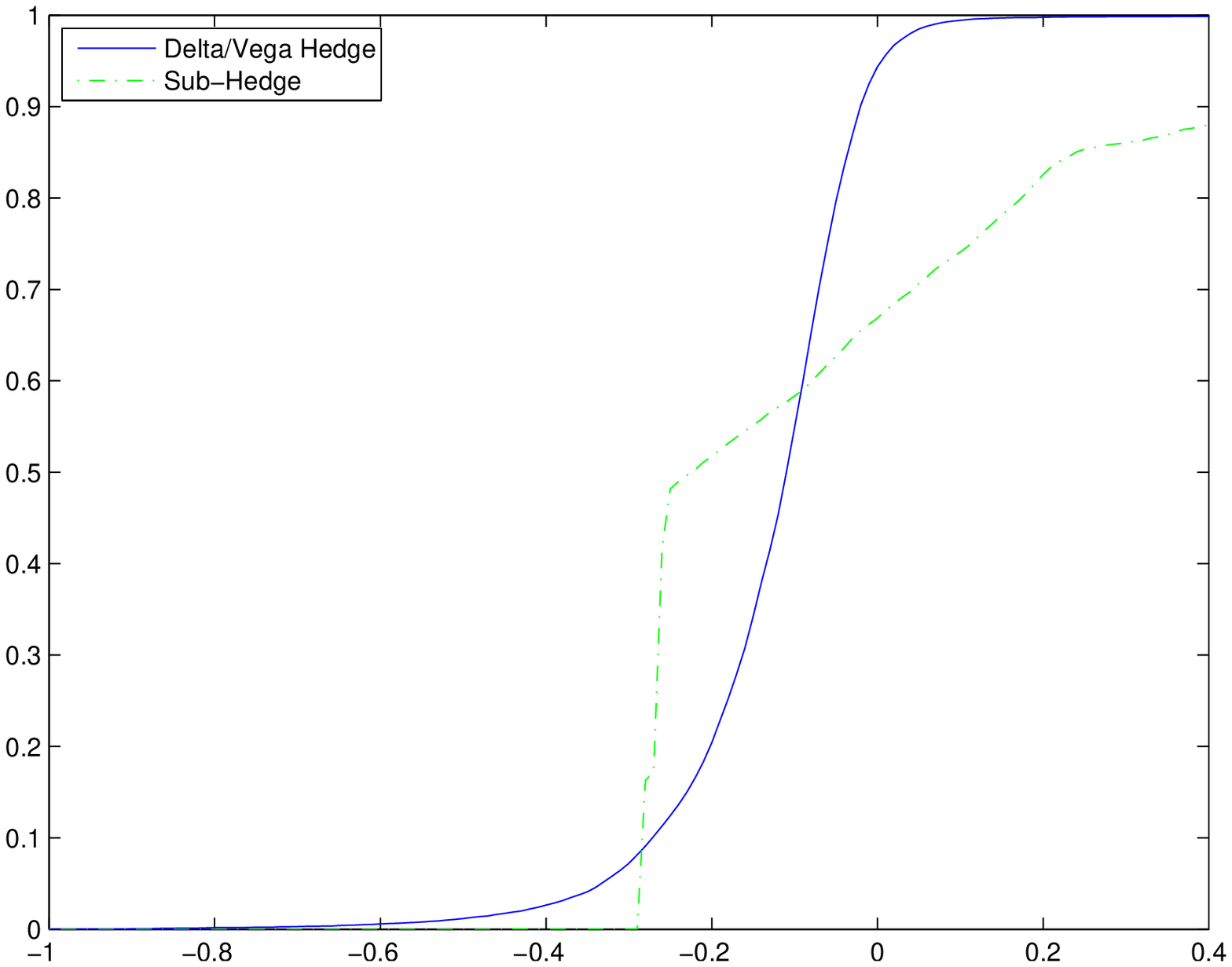}
}
\end{multicols}

\caption{Cumulative distributions of hedging errors under different scenarios of a short position
(\emph{left}) and a long position (\emph{right}) in a double touch option with barriers at $117$ and $83$ under Heston model \eqref{eq:heston}--\eqref{eq:heston_par}. In the lower graphs exact monitoring of barrier crossings was allowed.}
\label{fig:hegerr_117_83}
\end{figure}
Quasi-static super- and sub- hedges introduced in this paper also
incur large losses -- this is due to barrier crossing being monitored
daily. However, a closer inspection reveals that this comparatively
large loses are less frequent then when using delta/vega hedging
strategy. Indeed, in Table \ref{tab:exp_util} we show that an agent
with an exponential utility $U(x)=1-\exp(-x)$ would prefer the error
distribution of our hedges to that of the delta/vega hedge. We could
also ask what happens if we allow our hedges to monitor the barrier
crossings exactly.  The corresponding cumulative distributions of
hedging errors are given in the lower graphs in Figure
\ref{fig:hegerr_117_83} -- we clearly see how the losses are bounded
below. It is interesting however to note that in terms of utility of
hedging errors (cf.\ Table \ref{tab:exp_util}) this doesn't really
change the performance of our hedges - it eliminates some extremely
rare larger losses but at the cost of frequent small profits (cf.\
Section \ref{sec:extra_jumps}).

Finally, we investigate how the performance of our hedges compares if
we vary the barrier levels.  Appropriate cumulative distributions (for 20000 MC runs) are
reported in Figures \ref{fig:hegerr_130_70}-\ref{fig:hegerr_105_80} in Section \ref{sec:figures}
and exponential utilities are reported in Table
\ref{tab:exp_util}. Our superhedge consistently outperforms
delta/vega hedging. Our subhedge on the other hand performs worse as
the barriers get closer together. In fact, for barriers at $105\ \&\ 95$
or $103\ \&\ 97$, an exponential utility agent having a long position in
a double touch barrier option would prefer to use the standard
delta/vega hedging to our sub-replicating strategy. This would appear
to be due to the fact that typically the option knocks-in quickly and
delta/vega-hedging then carries smaller transactions costs.

Naturally when the barriers vary the types of super- and sub- hedges
which we use change. Figure \ref{fig:types_hedges}
shows which types are optimal depending on the values of the
barriers. This provides an illustrations of the intuitive labels we
gave to each case in Section \ref{sec:dt}. In this implementation we assumed one thousands strikes
between $0$ and $500$ are available and digital calls are not traded. This results
in non-smoothness of the borders between the cases.

\begin{table}[tbp]
  \centering
\begin{tabular}[h]{l|c|c|c|cl}
Barriers& Position & Delta/Vega & Model-free& Model-free& (barriers monitored exactly)\\
\hline
$130$--$70$ & short & $-0.076$ & \boldmath$-0.026$&$-0.025$&\\
$130$--$70$ & long &$-0.078$ & \boldmath$-0.027$&$-0.027$&\\
\hline
$120$--$80$ & short & $-0.125$ & \boldmath$-0.047$&$-0.045$&\\
$120$--$80$ &long & $-0.127$ & \boldmath$-0.069$&$-0.068$&\\
\hline
$117$--$83$ & short & $-0.137$ & \boldmath $-0.065$&$-0.062$&\\
$117$--$83$ & long &$-0.139$ & \boldmath $-0.078$&$-0.077$&\\
\hline
$115$--$85$ & short & $-0.144$ & \boldmath$-0.085$&$-0.082$&\\
$115$--$85$ & long &$-0.147$ & \boldmath$-0.086$&$-0.085$&\\
\hline
$110$--$90$ & short & $-0.139$ & \boldmath$-0.063$&$-0.061$&\\
$110$--$90$ & long &$-0.144$ & \boldmath$-0.115$&$-0.113$&\\
\hline
$105$--$95$ & short & $-0.096$ & \boldmath$-0.048$&$-0.047$&\\
$105$--$95$ & long &\boldmath$-0.102$ & $-0.171$&$-0.167$&\\
\hline
$103$--$97$ & short & $-0.070$ & \boldmath$-0.039$&$-0.038$&\\
$103$--$97$ & long &\boldmath$-0.075$ & $-0.199$&$-0.194$&\\
\hline
$120$--$95$ & short & $-0.127$ & \boldmath$-0.059$&$-0.057$&\\
$120$--$95$ & long &$-0.130$ & \boldmath$-0.099$&$-0.098$&\\
\hline
$105$--$80$ & short & $-0.155$ & \boldmath$-0.076$&$-0.075$&\\
$105$--$80$ & long &$-0.161$ & \boldmath$-0.124$&$-0.121$&
\end{tabular}
\caption{Comparison of exponential utilities of hedging errors of positions in a double touch options under Heston model \eqref{eq:heston}--\eqref{eq:heston_par} resulting from delta/vega
hedging and our model-free super- or sub- hedging strategies. In each case, the preferred
hedge is highlighted. The last column reports the change in utility when our strategies are allowed
to monitor exactly the moments of barrier crossings.}
\label{tab:exp_util}
\end{table}

\begin{figure}[htbp]
\begin{multicols}{2}{
\includegraphics[height=6cm]{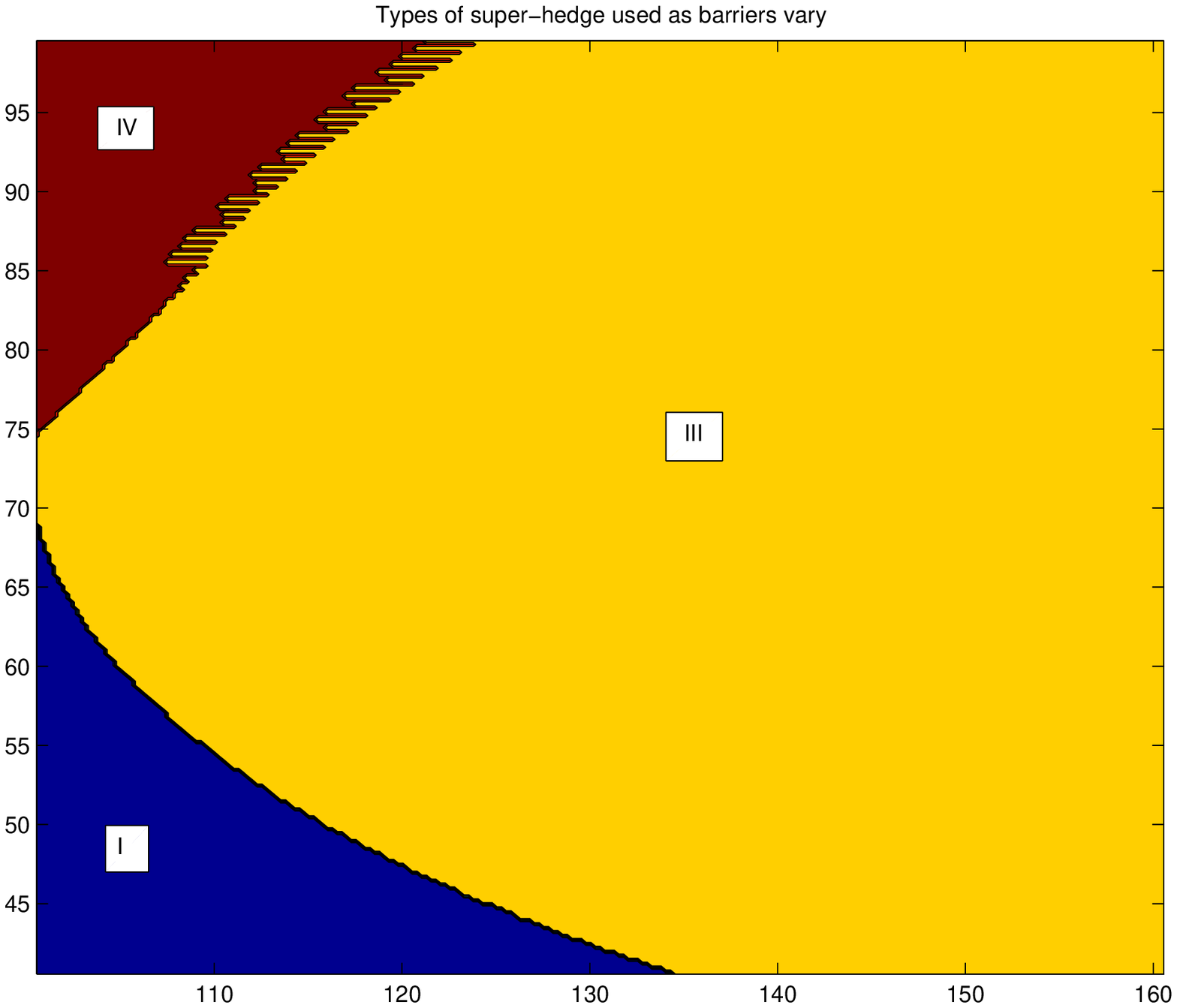}

\includegraphics[height=6cm]{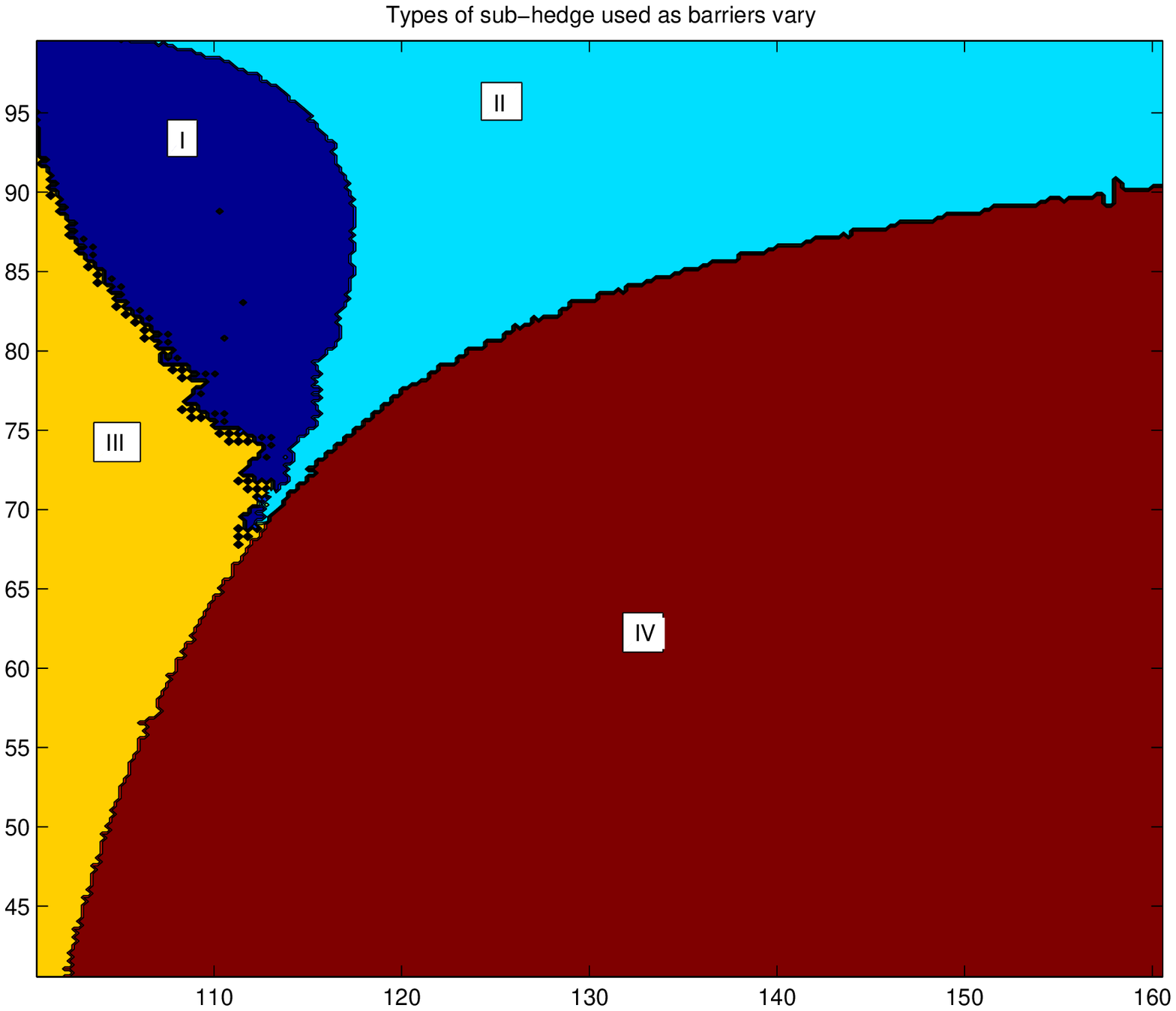}
}
\end{multicols}

\caption{Optimal types of superhedges $\uH$ (\emph{left}) and
  subhedges $\lh$ (\emph{right}) as barriers vary under the Heston
  model \eqref{eq:heston}--\eqref{eq:heston_par}. $\lh^{IV}\equiv 0$
  is the trivial subhedge. Superhedge $\uH^{II}$ is not visible as it only appears for upper barrier levels above $325$.}
\label{fig:types_hedges}
\end{figure}

The results presented in this section clearly show that the hedges we
advocate are in many circumstances an improvement on the classical
hedges. Naturally, there is a large literature (\eg{}\
\cite{CvitanicKaratzas:96}) on more sophisticated techniques that
might offer a considerable improvement over the classical hedge we
have implemented and it is possible that such improvements would
reverse the relative performance of the hedges. However, we hope that
the numerical evidence does at least convince the reader that the
quasi-static hedges are competitive with dynamic hedging, and that
their differing nature means that they might well prove a more
suitable approach in situations where market conditions are
dramatically different to the idealised Black-Scholes world --- \eg{}
large transaction costs, illiquidity or large jumps. There also seems
scope for a more sophisticated approach based on the quasi-static
hedges, but allowing for some model-based trading: for example, a
hybrid of a quasi-static portfolio and some dynamic trading could be
used to reduce some of the over-hedge in the simple quasi-static
hedge. Finally, in our simulations we did not really incorporate model
misspecification risk. We would expect that if a trader believes in a
model which is significantly different from the real world model then
our model-free hedging strategies would outperform hedging 
`using the wrong model.'

\section{Proofs}
\label{sec:proofs}

Let $(B_t)_{t \ge 0}$ be a standard real valued Brownian motion
starting from $B_0$. We recall that for any probability measure $\nu$
on $\R_+$ with $\nu_B(\R_+)=B_0$ we can find a stopping time $\tau$
such that $B_\tau\sim\nu$ and $(B_{t\wedge \tau})$ is a uniformly
integrable martingale. Such stopping is simply a solution to the
Skorokhod embedding problem and number of different explicit solutions
are known, see Ob\l \'oj \cite{Obloj:04b} for an overview of the
domain. Note also that when $\nu([a,b])=1$ then $(B_{t\wedge \tau})$
is a uniformly integrable martingale if and only if $B_t\in[a,b]$,
$t\leq \tau$, a.s.  In the sequel when speaking about embedding a
measure we implicitly mean embedding it in a UI
manner in $(B_t)$.\\
Recall that if $B_0=S_0$ and $\tau$ is an embedding of $\mu$ then
$S_t:=B_{\tau\land \frac{t}{T-t}}$, $t\leq T$, is a market model which
matches the market input \eqref{eq:market_input}. In what follows we
will be constructing embeddings $\tau$ of $\mu$ such that the
associated market model attains equality in our super- or sub- hedging
inequalities. Stopping times $\tau$ will often be compositions of
other stopping times embedding (rescaled) restrictions of $\mu$ or
some other intermediary measures. Unless specified otherwise, the
choice of particular intermediary stopping times has no importance and
we do not specify it -- one's favorite solution to the Skorokhod
embedding problem can be used.

\begin{proof}[Proof of Theorem~\ref{thm:upper_price}]
  We start with some
  preliminary lemmas and then prove Theorem~\ref{thm:upper_price}. In
  the body of the proof, cases \fbox{I} to \fbox{IV} refer to the cases
  stated in Theorem~\ref{thm:upper_price}.  We note that, by considering
  $z_0 = \gamma_+(w_0)$, \fbox{III} is equivalent to:
  \[
  \mbox{There exists $z_0 \ge \ub$ such that
    $\gamma_+(\gamma_-(z_0)) = z_0$ and $\rho_+(z_0) \ge
    \rho_-(\gamma_-(z_0))$.} 
  \]
We recall, without proof, straightforward properties of the barycentre function \eqref{eq:barycentre}
which will be useful in the sequel.
  \begin{lemma}\label{lem:baryprop} The barycentre function defined in
    \eqref{eq:barycentre} satisfies
    \begin{itemize}
    \item $\mu_B(\Gamma)\geq a$, $\Gamma_1\subset (0,a)$
      $\Longrightarrow$ $\mu_B(\Gamma\setminus \Gamma_1)\geq a$,
    \item $a\leq \mu_B(\Gamma)\leq\mu_B(\Gamma_1)\leq b$ and
      $\mu(\Gamma\cap\Gamma_1)= 0$ $\Longrightarrow$ $a\leq
      \mu_B(\Gamma\cup\Gamma_1)\leq b$. 
    \end{itemize}
  \end{lemma}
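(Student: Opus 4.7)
The plan is to verify each bullet point by unwinding the definition \eqref{eq:barycentre} directly; neither assertion is deep, both amount to rewriting barycentres as (conditional) averages and using positivity.

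For the first item, I would split $\Gamma$ into the disjoint pieces $\Gamma' := \Gamma \cap \Gamma_1 \subset (0,a)$ and $\Gamma'' := \Gamma \setminus \Gamma_1$. The hypothesis $\mu_B(\Gamma) \ge a$ rewrites as $\int_{\Gamma'} u\,\mu(\td u) + \int_{\Gamma''} u\,\mu(\td u) \ge a\mu(\Gamma') + a\mu(\Gamma'')$. Since $\Gamma' \subset (0,a)$ gives $\int_{\Gamma'} u\,\mu(\td u) \le a\mu(\Gamma')$, subtracting the corresponding inequality yields $\int_{\Gamma''} u\,\mu(\td u) \ge a\mu(\Gamma'')$, which is exactly $\mu_B(\Gamma \setminus \Gamma_1) \ge a$ (assuming $\mu(\Gamma'')>0$; the degenerate case $\mu(\Gamma'')=0$ would have to be excluded or interpreted by convention).

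For the second item, the disjointness assumption $\mu(\Gamma \cap \Gamma_1)=0$ lets me write
\[
\mu_B(\Gamma \cup \Gamma_1) \;=\; \frac{\mu(\Gamma)\mu_B(\Gamma) + \mu(\Gamma_1)\mu_B(\Gamma_1)}{\mu(\Gamma) + \mu(\Gamma_1)},
\]
which is a convex combination of $\mu_B(\Gamma)$ and $\mu_B(\Gamma_1)$ with strictly positive weights. Hence it lies between them, and the sandwich $a \le \mu_B(\Gamma) \le \mu_B(\Gamma_1) \le b$ immediately forces $a \le \mu_B(\Gamma \cup \Gamma_1) \le b$. There is no real obstacle here; the only point requiring a line of care is the edge case of zero denominators, which can be handled by noting both statements are vacuous (or trivially true) when the relevant sets carry no mass, so one may implicitly assume $\mu(\Gamma), \mu(\Gamma_1) > 0$ throughout.
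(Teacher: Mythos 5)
Your proof is correct. Note that the paper itself does not provide a proof of this lemma---it is stated ``without proof'' as a collection of ``straightforward properties''---so there is no argument in the paper to compare against. Your approach is exactly the natural one: for the first item, decomposing $\Gamma$ into $\Gamma\cap\Gamma_1\subset(0,a)$ and $\Gamma\setminus\Gamma_1$, clearing denominators in the barycentre inequality, and subtracting the trivial bound on the small piece; for the second item, writing $\mu_B(\Gamma\cup\Gamma_1)$ as a convex combination of $\mu_B(\Gamma)$ and $\mu_B(\Gamma_1)$ (valid since $\mu(\Gamma\cap\Gamma_1)=0$ makes the integrals and masses additive). Your remark about degenerate cases is the right level of care: the barycentre is undefined when the set has zero mass, so the statements are only meaningful when the relevant sets carry positive mass, which in the paper's application is guaranteed by the strict positivity of the density of $\mu$ on $(0,\infty)$.
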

  
  % \begin{lemma}
  %   Suppose $\rho_+(\infty) < \lb$. Then $\gamma_-(\gamma_+(-\infty)) > -\infty$. 
  % \end{lemma}
  % 
  % \begin{lemma}
  %   Suppose there exists $w_0\le \lb$ such that $\gamma_-(\gamma_+(w_0)) = w_0$ and 
  %   $w_0<\rho_+(\infty)$, then
  %   \[
  %   \rho_-(w_0) < \rho_+(\gamma_+(w_0)). 
  %   \]
  % \end{lemma}

  We separate the proof into 2 steps now. In the first step we prove
  that exactly one of \fbox{I}-\fbox{IV} holds. In the second step we
  construct the appropriate embeddings and market models which achieve
  the upper bounds on the prices. 

  \noindent
  {\bf \sc Step 1:}\\ 
  This step is divided into 4 cases. We start with technical lemmas which are proved
  after the cases are considered.
  %We also prove Lemmas \ref{lem:incrho}, \ref{lem:notAandB}, \ref{lem:DnotCnotAB}. 

  \begin{lemma} \label{lem:incrho}
    If $\rho_+(\gamma_+(w_0)) < \rho_-(w_0)$ for some $w_0$, then
    $\rho_+(\gamma_+(w)) < \rho_-(w)$ for all $w \le w_0$. Similarly, if
    $\rho_-(\gamma_-(z_0)) > \rho_+(z_0)$ for some $z_0$, then
    $\rho_-(\gamma_-(z)) > \rho_+(z)$ for all $z \ge z_0$
  \end{lemma}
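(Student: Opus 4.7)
The plan is to show that $f(w):=\rho_-(w)-\rho_+(\gamma_+(w))$ can have at most one zero on the domain of $\gamma_+$ and that its sign on each side is explicitly determined by $\mu$; the hypothesis $f(w_0)>0$ will then force any zero to lie strictly above $w_0$. The second assertion of the lemma follows from a completely parallel argument with $\rho_\pm,\gamma_\pm$ and the two barriers interchanged, so only the first needs proof.

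I would begin by pinning down the candidate zero. Suppose $f(w^*)=0$ and set $g:=\gamma_+(w^*)$, $r:=\rho_-(w^*)=\rho_+(g)$. The defining equations give $\mu_B([w^*,r])=\lb$ and $\mu_B([0,w^*]\cup[r,g])=\lb$; since $w^*<\lb<r<g$ these two sets are disjoint with union $[0,g]$, and so Lemma~\ref{lem:baryprop} (amalgamation of disjoint sets of equal barycentre) yields $\mu_B([0,g])=\lb$. Because $\mu$ has a positive density, the map $g\mapsto\mu_B([0,g])$ is strictly increasing from $0$ to $S_0$, so there is a unique $g^*$ with $\mu_B([0,g^*])=\lb$; strict monotonicity of $\gamma_+$ then produces at most one candidate $w^*$.

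For an arbitrary $w$ in the domain of $\gamma_+$ with $\gamma_+(w)\neq g^*$, the sets $[0,w]$ and $[\rho_+(\gamma_+(w)),\gamma_+(w)]$ must be disjoint, as otherwise the defining equation for $\gamma_+$ collapses to $\mu_B([0,\gamma_+(w)])=\lb$ and forces $\gamma_+(w)=g^*$. One therefore has the three-piece decomposition
\[
[0,\gamma_+(w)]=[0,w]\ \cup\ (w,\rho_+(\gamma_+(w)))\ \cup\ [\rho_+(\gamma_+(w)),\gamma_+(w)]
\]
whose outer two pieces have joint barycentre $\lb$. Hence the sign of $\mu_B([0,\gamma_+(w)])-\lb$ agrees with that of $\mu_B((w,\rho_+(\gamma_+(w))))-\lb$, and since $y\mapsto\mu_B([w,y])$ is strictly increasing and meets $\lb$ precisely at $y=\rho_-(w)$, this is in turn the sign of $\rho_+(\gamma_+(w))-\rho_-(w)=-f(w)$. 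Summarising,
\[
f(w)>0\ \Longleftrightarrow\ \mu_B([0,\gamma_+(w)])<\lb\ \Longleftrightarrow\ \gamma_+(w)<g^*\ \Longleftrightarrow\ w<w^*,
\]
with the convention $w^*:=+\infty$ when $g^*$ lies beyond the range of $\gamma_+$. Applied to $w=w_0$ this gives $w_0<w^*$, and so $f(w)>0$ for every $w\le w_0$.

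The main obstacle is the case analysis supporting the three-interval decomposition: one has to check that inside the domain of $\gamma_+$ the sets $[0,w]$ and $[\rho_+(\gamma_+(w)),\gamma_+(w)]$ are genuinely disjoint whenever $\gamma_+(w)\neq g^*$, and to dispose of the boundary situations where $g^*$ falls outside the range of $\gamma_+$. In the latter, either $f>0$ throughout the domain (consistent with the hypothesis and trivially giving the conclusion) or $f<0$ throughout (ruled out by $f(w_0)>0$). Both points reduce to the strict monotonicity of the barycentre function in its endpoints, which is inherited from the positivity of the density of $\mu$.
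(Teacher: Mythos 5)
Your argument takes a genuinely different route from the paper's. The paper argues by contradiction: from $\rho_+(\gamma_+(w_0))<\rho_-(w_0)$ it derives $\mu_B\big((0,\gamma_+(w_0))\big)<\lb$, and then notes that if $\rho_+(\gamma_+(w))\ge\rho_-(w)$ for some $w<w_0$, the set $[0,\rho_-(w)]\cup[\rho_+(\gamma_+(w)),\gamma_+(w)]$ (which has barycentre $\lb$) is obtained from $(0,\gamma_+(w_0))$ by deleting points that all lie strictly above $\lb$, forcing $\mu_B\big((0,\gamma_+(w_0))\big)\ge\lb$. You instead characterise $\{w:f(w)>0\}$ directly as $\{w:\gamma_+(w)<g^*\}$ and invoke monotonicity of $\gamma_+$ and of $z\mapsto\mu_B([0,z])$ to see this is an initial interval. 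Both pivots rest on relating the sign of $f(w)$ to the sign of $\mu_B([0,\gamma_+(w)])-\lb$, but the final deduction is different (set-removal versus monotonicity of the barycentre of an initial segment), and yours reads more transparently.

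There is, however, a gap in the chain $f(w)>0\iff\mu_B([0,\gamma_+(w)])<\lb\iff\gamma_+(w)<g^*$. You correctly observe that when $[0,w]$ and $[\rho_+(\gamma_+(w)),\gamma_+(w)]$ are not disjoint the defining equation collapses and forces $\gamma_+(w)=g^*$, but precisely for this reason $\gamma_+$ can be \emph{constant equal to $g^*$} on a nontrivial range of $w$'s, namely those with $\rho_+(g^*)\le w<\lb$. On such a plateau one has $\mu_B([0,\gamma_+(w)])=\mu_B([0,g^*])=\lb$ and yet $f(w)=\rho_-(w)-\rho_+(g^*)\ge\rho_-(w)-w>0$, so the implication ``$f(w)>0\Rightarrow\gamma_+(w)<g^*$'' is false there, and in particular the step ``$f(w_0)>0$, hence $w_0<w^*$'' fails whenever $w_0$ lies on the plateau. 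This really can happen: for $\mu$ uniform on $[0,1]$ with $\lb=0.45$, $\ub=0.55$ one finds $g^*=0.9$, $\rho_+(g^*)=0.2$, and $\gamma_+(w)\equiv 0.9$ for $w\in[0.2,0.45]$, with $f>0$ throughout.

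The patch is short and keeps your structure. Whenever $\rho_+(\gamma_+(w))\le w$, disjointness fails but then $\rho_+(\gamma_+(w))\le w<\lb<\rho_-(w)$ gives $f(w)>0$ directly, so these $w$'s need not pass through the equivalence chain. Moreover a zero of $f$ and a plateau cannot coexist: if $f(w^*)=0$ then $\rho_+(\gamma_+(w^*))=\rho_-(w^*)>\lb$, so $\rho_+(g^*)>\lb\ge w$ for every $w$ in the domain, the intervals are always genuinely disjoint, $\gamma_+$ is strictly increasing, and your chain is valid. Thus the chain holds exactly when it is needed to rule out a zero below $w_0$, and the plateau $w$'s are disposed of directly; with that remark added, the proof closes.
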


  \begin{lemma} \label{lem:notAandB}
    If $\ub\geq\rho_-(0)$ and $\lb\leq \rho_+(\infty)$ then at least
    of one of the functions $\gamma_\pm$ is bounded on its domain. In
    particular at most one of \fbox{I} and \fbox{II} may be true. 
  \end{lemma}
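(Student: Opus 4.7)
The plan is a proof by contradiction: assume both $\gamma_+$ and $\gamma_-$ are unbounded on their domains. Unboundedness of $\gamma_+$ gives some $w_0\leq\lb$ with $\gamma_+(w)\uparrow\infty$ as $w\uparrow w_0$, and passing to the limit in the defining relation \eqref{eq:def_gamma} (using $\rho_+(z)\to\rho_+(\infty)$ as $z\to\infty$) yields
\[
\mu_B\bigl([0,w_0]\cup[\rho_+(\infty),\infty)\bigr)=\lb.
\]
Symmetrically, unboundedness of $\gamma_-$ gives a finite $z_0\geq\ub$ with $\gamma_-(z)\downarrow 0$ as $z\downarrow z_0$, hence
\[
\mu_B\bigl([0,\rho_-(0)]\cup[z_0,\infty)\bigr)=\ub.
\]

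The key step is to subtract these two identities after passing to integrated form. Set $B=(w_0,\rho_-(0)]$ and $D=[\rho_+(\infty),z_0)$, decompose $[0,\rho_-(0)]=[0,w_0]\cup B$ and $[\rho_+(\infty),\infty)=D\cup[z_0,\infty)$, and invoke the defining identities $\mu_B([0,\rho_-(0)])=\lb$ and $\mu_B([\rho_+(\infty),\infty))=\ub$ from \eqref{eq:def_rho}. A short computation then produces
\[
\bigl(\mu_B(B)-\ub\bigr)\mu(B)+\bigl(\lb-\mu_B(D)\bigr)\mu(D)=(\ub-\lb)\bigl(\mu([0,w_0])+\mu([z_0,\infty))\bigr).
\]
The left-hand side is $\leq 0$: from $B\subseteq[0,\rho_-(0)]$ together with the first hypothesis $\rho_-(0)\leq\ub$ one has $\mu_B(B)\leq\ub$, and from $D\subseteq[\rho_+(\infty),\infty)$ together with the second hypothesis $\rho_+(\infty)\geq\lb$ one has $\mu_B(D)\geq\lb$. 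However, the right-hand side is strictly positive, since $\ub>\lb$ and, by the positive density of $\mu$ on $(0,\infty)$ combined with $z_0<\infty$, $\mu([z_0,\infty))>0$. This contradiction establishes the first claim.

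The ``in particular'' assertion is immediate: Case \fbox{I} simultaneously requires $\gamma_-$ unbounded and $\rho_-(0)\leq\ub$, while Case \fbox{II} requires $\gamma_+$ unbounded and $\rho_+(\infty)\geq\lb$; were both to hold, the lemma's hypotheses would apply, contradicting the conclusion just proved. The main obstacle I anticipate is finding the right algebraic combination of the two limit identities so that the boundary mass terms cancel neatly via the definitions of $\rho_\pm$, leaving the clean sign comparison above; once the identity is in hand, the rest is a sign check that uses each hypothesis exactly once.
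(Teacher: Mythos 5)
Your proof is correct, and it takes a route that is computationally different from the paper's. The paper works with the single auxiliary set $\Gamma=(0,\lb)\cup(\ub,\infty)$: using the monotonicity properties of the barycentre under adjoining sub-intervals (Lemma~\ref{lem:baryprop}), they show that under the two limit identities one must simultaneously have $\mu_B(\Gamma)\geq\ub$ and $\mu_B(\Gamma)\leq\lb$, which is impossible since $\ub>\lb$. You instead decompose $[0,\rho_-(0)]$ and $[\rho_+(\infty),\infty)$ into overlapping pieces $A=[0,w_0]$, $B=(w_0,\rho_-(0)]$, $D=[\rho_+(\infty),z_0)$, $E=[z_0,\infty)$ and subtract the two limit identities to obtain the explicit equation
\[
\bigl(\mu_B(B)-\ub\bigr)\mu(B)+\bigl(\lb-\mu_B(D)\bigr)\mu(D)=(\ub-\lb)\bigl(\mu(A)+\mu(E)\bigr),
\]
which does follow directly (indeed, this is simply Eq.\ for $\mu_B(A\cup B\cup E)=\ub$ minus Eq.\ for $\mu_B(A\cup D\cup E)=\lb$, in integrated form; the two auxiliary identities $\mu_B([0,\rho_-(0)])=\lb$ and $\mu_B([\rho_+(\infty),\infty))=\ub$ you cite are not actually needed). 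The sign checks are exactly as you say, and the contradiction is genuine since $z_0<\infty$ forces $\mu(E)>0$ by the positive-density assumption. Both arguments exploit the hypotheses $\rho_-(0)\leq\ub$ and $\rho_+(\infty)\geq\lb$ in the same place --- to control the barycentres of $B$ and $D$, respectively --- so the underlying mechanism is the same; the paper's version is shorter and leans on the already-stated barycentre lemma, while yours trades the abstract monotonicity argument for an explicit identity, which makes the contradiction visible at a glance but requires a line of bookkeeping.
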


  \begin{lemma}\label{lem:DnotCnotAB}
    \fbox{IV} implies not \fbox{III}. \fbox{III} implies not (\fbox{I} or
    \fbox{II}). 
  \end{lemma}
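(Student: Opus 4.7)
The plan is to handle the two implications separately, both by contradiction, exploiting the monotonicity of $\rho_{\pm}$ and $\gamma_{\pm}$ together with Lemma~\ref{lem:baryprop} and Lemma~\ref{lem:incrho}.

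For the first implication (\fbox{IV} $\Rightarrow$ not \fbox{III}), I would suppose both hold, let $w_0 \in [0,\lb]$ be the fixed point from Case~\fbox{III}, and set $z_0 := \gamma_+(w_0)$, so that $\gamma_-(z_0) = w_0$ and, crucially, $\rho_-(w_0) \le \rho_+(z_0)$. The goal is to produce the cyclic chain
\[
\rho_+(z_0) \ge \rho_-(w_0) \ge \rho_-(\rho_+(\infty)) > \rho_+(\rho_-(0)) \ge \rho_+(z_0),
\]
which is an immediate contradiction. The leftmost inequality is the Case~\fbox{III} separation condition, and the middle strict inequality is precisely the hypothesis of Case~\fbox{IV}. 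The two outer inequalities reduce, by the decreasing monotonicity of $\rho_-$ and $\rho_+$, to the claims $w_0 \le \rho_+(\infty)$ and $z_0 \ge \rho_-(0)$. To prove the latter, I compare the two barycentre identities $\mu_B([0,\rho_-(0)])=\lb$ and $\mu_B([0,w_0]\cup [\rho_+(z_0),z_0])=\lb$: supposing for contradiction that $z_0 < \rho_-(0)$, the assumption $\rho_-(0) > \ub$ of Case~\fbox{IV} together with the separation $\rho_-(w_0) \le \rho_+(z_0)$ from Case~\fbox{III} place the set $[0,w_0]\cup[\rho_+(z_0),z_0]$ strictly inside $[0,\rho_-(0)]$, and Lemma~\ref{lem:baryprop} then forces the complement $(w_0,\rho_+(z_0)) \cup (z_0,\rho_-(0)]$ also to have barycentre $\lb$. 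But the right piece $(z_0,\rho_-(0)] \subset (\ub,\rho_-(0)]$ is nonempty with barycentre strictly above $\ub$, so the total complement barycentre must exceed $\lb$, contradicting Lemma~\ref{lem:baryprop}. The bound $w_0 \le \rho_+(\infty)$ follows from the analogous argument at the upper end.

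For the second implication (\fbox{III} $\Rightarrow$ neither \fbox{I} nor \fbox{II}), the $\lb\leftrightarrow\ub$, $0\leftrightarrow\infty$ symmetry between Cases~\fbox{I} and \fbox{II} reduces us to ruling out Case~\fbox{I}. Suppose Cases~\fbox{III} and \fbox{I} both hold. Under Case~\fbox{I}, Lemma~\ref{lem:incrho} in contrapositive form (applied at $w\downarrow 0$, where $\rho_+(\gamma_+(w)) \to \ub \ge \rho_-(0)$) guarantees $\rho_-(w) \le \rho_+(\gamma_+(w))$ throughout the domain, so the Case~\fbox{III} separation condition is automatic and any contradiction must come from the fixed point $\gamma_-(\gamma_+(w_0)) = w_0$. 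The plan is to show that under $\rho_-(0) \le \ub$ one actually has $\gamma_-(\gamma_+(w)) < w$ strictly throughout the domain of the composition, which precludes any fixed point. The inequality holds at the left endpoint: writing $w^* := \gamma_+^{-1}(z_0^{(I)})$ for $z_0^{(I)}$ the Case~\fbox{I} threshold, we have $\gamma_-(\gamma_+(w^*)) = \gamma_-(z_0^{(I)}) = 0 < w^*$. Propagating this strict inequality to larger $w$ requires comparing the increments of the two barycentre identities defining $\gamma_+$ and $\gamma_-$, and using $\rho_-(0) \le \ub$ to control the rate of growth of $\gamma_-\circ\gamma_+$ below the identity slope.

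I expect the main obstacle to be the two quantitative barycentre comparisons: ruling out $z_0 < \rho_-(0)$ in the first part, and establishing the strict slope bound on $\gamma_-\circ\gamma_+$ in the second. Both rely on the strict convexity of $C$ (equivalently, the positive density of $\mu$ assumed in \eqref{eq:market_law}) to promote weak comparisons into strict inequalities, which is what ultimately closes each contradiction.
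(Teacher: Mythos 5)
Your treatment of the two implications is of very different quality, so let me address them separately.

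\textbf{Part 1 (\fbox{IV} $\Rightarrow$ not \fbox{III}).} This is essentially the paper's argument: both proofs establish the side claims $w_0\le\rho_+(\infty)$ and $\gamma_+(w_0)\ge\rho_-(0)$ and then close the same cyclic chain of inequalities through $\rho_\pm$. There is, however, a real gap in your barycentre argument for $z_0\ge\rho_-(0)$. You take the complement of $[0,w_0]\cup[\rho_+(z_0),z_0]$ inside $(0,\rho_-(0))$, namely $(w_0,\rho_+(z_0))\cup(z_0,\rho_-(0)]$, and argue its barycentre exceeds $\lb$ because the right piece lies above $\ub$. But this does not follow: the left piece $(w_0,\rho_+(z_0))$ straddles $\lb$, and a priori its barycentre could lie below $\lb$ and pull the weighted average back down to $\lb$. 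To close the gap you must additionally show $\mu_B\bigl((w_0,\rho_+(z_0))\bigr)\ge\lb$, which does hold — it follows from $\mu_B([w_0,\rho_-(w_0)])=\lb$, the \fbox{III} separation $\rho_-(w_0)\le\rho_+(z_0)$, and Lemma~\ref{lem:baryprop} — but you have to say it. The paper sidesteps this by choosing the larger set $\Gamma_+=(0,\rho_-(w_0))\cup(\rho_+(\gamma_+(w_0)),\gamma_+(w_0))$, whose complement in $(0,\rho_-(0))$ lies entirely above $\lb$, making the comparison immediate.

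\textbf{Part 2 (\fbox{III} $\Rightarrow$ not \fbox{I} or \fbox{II}).} Here you have correctly used the symmetry to reduce to one case, and your observation that Lemma~\ref{lem:incrho} (in contrapositive, via $\rho_-(0)\le\ub$) makes the \fbox{III} separation condition automatic is a nice structural point. But the proof you then propose — that $\gamma_-(\gamma_+(w))<w$ strictly throughout its domain, to be established by ``comparing the increments of the two barycentre identities'' and ``controlling the rate of growth below the identity slope'' — is not carried out at all; you name this step the ``main obstacle'' and leave it as a plan. This is where the entire difficulty of the lemma lives. The paper does not argue via a slope comparison: it orders all the relevant quantities ($w_0<w_1<\lb<\rho_-(w_1)<r\le R\le\rho_+(\gamma_+(w_0))\le\ub\le\gamma_+(w_0)$), writes out the two barycentre identities defining $\gamma_+$ and $\gamma_-$ as integrals of $(\ub-u)$ and $(u-\lb)$ against $\mu$, subtracts them, and then uses the ordering together with a mass comparison (equation \eqref{eq:lemma_sets}) to show a quantity that must vanish is strictly positive. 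There is no reason to expect a derivative bound $\tfrac{\td}{\td w}\gamma_-(\gamma_+(w))<1$ to be available or even true pointwise; the paper's contradiction is global, comparing two integral identities rather than local slopes. As it stands your Part 2 is an outline of a hoped-for argument, not a proof.
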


  \noindent
  {\bf \sc Case (a):} $\ub \ge \rho_-(0), \lb \le \rho_+(\infty)$\\ We
  note first of all that the case \fbox{IV} is not possible, and that
  the second half of the conditions for \fbox{I} and \fbox{II} are
  trivially true. Suppose neither \fbox{I} or \fbox{II} hold. If we
  have $\gamma_-(\ub)=0$ then \fbox{III} holds with $w_0=0$ and if
  $\gamma_+(\lb)=\infty$ then \fbox{III} holds with $w_0=\lb$. We may
  thus assume that $\gamma_+(\cdot)$ is bounded above, and
  $\gamma_-(\cdot)$ is bounded away from zero, which in turn implies:
  \[
  \gamma_-(\gamma_+(\lb)) < \lb \mbox{ and } \gamma_-(\gamma_+(0))
  >0. 
  \]
  The function $\gamma_-(\gamma_+(\cdot))$ is continuous and
  increasing on $(0,\lb]$ and thus we must have $w_0$ such that
  $\gamma_-(\gamma_+(w_0)) = w_0$. Finally, suppose that for such a
  $w_0$ we in fact have $\rho_-(w_0) > \rho_+(\gamma_+(w_0))$, then
  Lemma~\ref{lem:incrho} implies $\rho_-(0) >
  \rho_+(\gamma_+(0))=\ub$, contradicting our assumptions. That only
  one of the cases $\fbox{I}$--$\fbox{III}$ holds now follows from Lemmas~\ref{lem:notAandB}
  and \ref{lem:DnotCnotAB}.

  % {\bf Need to show \fbox{III} implies not \fbox{I} or \fbox{II}}

  {\bf \sc Case (b):} $\ub \ge \rho_-(0), \lb > \rho_+(\infty)$

  It follows that neither \fbox{II} or \fbox{IV} are possible. Observe
  that Lemma \ref{lem:incrho} implies that $\rho_-(w)\leq
  \rho_+(\gamma_+(w))$ for all $w\leq \lb$ --- if this were not true,
  then $\ub = \rho_+(\gamma_+(0)) < \rho_-(0)$.  Suppose further that
  \fbox{I} does not hold.  If $\gamma_-(\ub)=0$ then \fbox{III} holds
  with $w_0=0$. \\ So assume instead that $\gamma_-(\cdot)$ is bounded
  away from zero, and therefore that $w<\gamma_-(\gamma_+(w))$ for $w$
  close to zero. If we show also that
  $\gamma_-(\gamma_+(\rho_+(\infty)))\leq \rho_+(\infty)$ then by
  continuity of $\gamma_-(\gamma_+(\cdot))$ there exists a suitable
  $w_0$ for which \fbox{III} holds.  Let
  $\Gamma_1=(\rho_+(\infty),\infty)$ and
  $\Gamma_2=(\rho_+(\gamma_+(\rho_+(\infty))),\gamma_+(\rho_+(\infty)))$. We
  have by definition $\mu_B(\Gamma_1)=\ub=\mu_B(\Gamma_2)$ so that
  $\mu_B(\Gamma_1\setminus\Gamma_2)=\ub$, since $\Gamma_2\subset
  \Gamma_1$.  Let
  $\Gamma=\big(\rho_+(\infty),\rho_-(\rho_+(\infty))\big)\cup\big(\gamma_+(\rho_+(\infty)),\infty\big)$
  and note that $\mu_B(\Gamma)=\ub$ is equivalent to
  $\gamma_-(\gamma_+(\rho_+(\infty)))=\rho_+(\infty)$. Noting that
  $\rho_-(w) \le \rho_+(\gamma_+(w))$ for all $w \le \lb$ implies
  $\rho_-(\rho_+(\infty)) \le \rho_+(\gamma_+(\rho_+(\infty)))$, and
  using Lemma~\ref{lem:baryprop} have
  $$\mu_B\big(\Gamma\big)=\mu_B\Big(\big(\Gamma_1\setminus\Gamma_2\big)\setminus
  \big(\rho_-(\rho_+(\infty)),\rho_+(\gamma_+(\rho_+(\infty)))\big)\Big)\geq
  \ub,$$ which implies that $\gamma_-(\gamma_+(\rho_+(\infty)))\leq
  \rho_+(\infty)$. As previously, it remains to note that
  Lemma~\ref{lem:DnotCnotAB} implies exclusivity of $\fbox{III}$ and $\fbox{I}$.

  {\bf \sc Case (c):} $\ub < \rho_-(0), \lb \le \rho_+(\infty)$

  This case is essentially identical to Case {\sc (b)} above. 

  {\bf \sc Case (d):} $\ub < \rho_-(0), \lb > \rho_+(\infty)$

  Note that we now cannot have either of \fbox{I} or \fbox{II}. Suppose
  further that \fbox{IV} does not hold --- or rather, the weaker:
  \begin{eqnarray*}
    \rho_+(\rho_-(0)) & > & \lb,\\ \rho_-(\rho_+(\infty)) & < & \ub. 
  \end{eqnarray*}
  Let $\Gamma_w=(w,\rho_-(w))\cup (\ub,\infty)$ and observe that
  $\mu_B(\Gamma_w)$ decreases as $w$ decreases, provided
  $\rho_-(w)<\ub$. We have
  $$\mu_B(\Gamma_{\rho_+(\infty)})\geq
  \mu_B((\rho_+(\infty),\infty))=\ub.$$ Our assumption
  $\rho_-(\rho_+(\infty))<\ub<\rho_-(0)$ implies that
  $\rho_-^{-1}(\ub)\in (0,\rho_+(\infty))$ so that
  $\Gamma_{\rho_-^{-1}(\ub)}\supsetneq (\rho_+(\infty),\infty)$ and in
  consequence $\mu_B(\Gamma_{\rho_-^{-1}(\ub)})<\ub$. Using continuity
  of $w\to \mu_B(\Gamma_w)$ we conclude that there exists a $w_1 \in
  (\rho_-^{-1}(\ub),\rho_+(\infty)]$ with $\mu_B(\Gamma_{w_1})=\ub$, or
    equivalently $\gamma_-(\ub)=w_1$. A symmetric argument implies
    that $\gamma_+(\lb) >0$.  We conclude, as in Case {\sc (a)}, that
    there exists a $w_0$ such that $\gamma_-(\gamma_+(w_0))=w_0$.

  It remains to show that if \fbox{IV} does not hold, and the first half
  of the condition for \fbox{III} holds, then so too does the second
  condition. Suppose $w_0$ is a point satisfying
  $\gamma_-(\gamma_+(w_0))=w_0$ and suppose for a contradiction that
  $\rho_-(w_0) > \rho_+(\gamma_+(w_0))$. Since the sets
  $(\rho_+(\gamma_+(w_0)),\gamma_+(w_0))$ and
  $(w_0,\rho_-(w_0))\cup(\gamma_+(w_0),\infty)$ are both centred at
  $\ub$, and overlap, it follows that $\mu_B([w_0,\infty))>\ub$ and in
  consequence $\rho_+(\infty) < w_0$. Symmetric arguments imply that
  $\rho_-(0) > \gamma_+(w_0)$. Applying $\rho_-(\cdot),
  \rho_+(\cdot)$ to these inequalities, we further deduce that
  \begin{eqnarray*}
    \rho_-(w_0) & < &\rho_-(\rho_+(\infty)) \\ \rho_+(\rho_-(0)) & <
    & \rho_+(\gamma_+(w_0))
  \end{eqnarray*}
  which, together with the assumption that $\rho_-(w_0) >
  \rho_+(\gamma_+(w_0))$, implies:
  \[
  \rho_+(\rho_-(0)) < \rho_+(\gamma_+(w_0)) < \rho_-(w_0) <
  \rho_-(\rho_+(\infty)),
  \]
  contradicting \fbox{IV} not holding. 
  % {\bf Need to show \fbox{IV} implies not \fbox{III}.} 
\end{proof}

\begin{proof}[Proof of Lemma~\ref{lem:incrho}]
  Consider $w<w_0$ with $\rho_+(\gamma_+(w_0)) < \rho_-(w_0)$. The
  latter implies $\mu_B\big((w_0,\rho_+(\gamma_+(w_0)))\big)<\lb$, so
  that $\mu_B\big((0,\gamma_+(w_0))\big)<\lb$. Suppose now that
  $\rho_+(\gamma_+(w)) \geq \rho_-(w)$. As $\rho_-$ is decreasing and
  $\gamma_+$ is increasing we have $\rho_-(w_0)<\rho_-(w)\leq
  \rho_+(\gamma_+(w))$ and $\ub<\gamma_+(w)<\gamma_+(w_0)$. We then have
  \begin{eqnarray*}
    \lb&=&\mu_B\big((0,\rho_-(w))\cup(\rho_+(\gamma_+(w)),\gamma_+(w))\big)\\ &=&\mu_B\Big((0,\gamma_+(w_0))\setminus
    \big[\big(\rho_-(w_0),\rho_+(\gamma_+(\w))\big)\cup\big((\gamma_+(w),\gamma_+(w_0)\big)\big]\Big)\\ &\leq&
    \mu_B\big((0,\gamma_+(w_0))\big)<\lb,
  \end{eqnarray*}
  which gives the desired contradiction. 
\end{proof}

\begin{proof}[Proof of Lemma~\ref{lem:notAandB}]
  Define $\Gamma = (0,\lb)\cup(\ub,\infty)$ and consider
  $\mu_B(\Gamma)$. If both \fbox{I} and \fbox{II} hold, or more
  generally if $\gamma_-(z_0)=0$ and $\gamma_+(w_0)=\infty$ for
  some $z_0\geq \ub$, $w_0\leq \lb$, we have:
  \begin{equation}\label{eqn:lembarycentre1}
    \mu_B((0,\rho_-(0))\cup(\ub,\infty)) \ge
    \mu_B((0,\rho_-(0))\cup(z_0,\infty)) = \ub\\
  \end{equation}
  and
  \begin{equation}\label{eqn:lembarycentre2}
    \mu_B((0,\lb)\cup(\rho_+(\infty),\infty)) \le
    \mu_B((0,w_0)\cup(\rho_+(\infty),\infty)) = \lb. 
  \end{equation}
  Now suppose $\mu_B(\Gamma) < \ub$. Then:
  \[
  \mu_B(\Gamma \cup(\lb,\rho_-(0))) < \ub
  \]
  contradicting \eqref{eqn:lembarycentre1} and similarly, if
  $\mu_B(\Gamma) > \lb$
  \[
  \mu_B(\Gamma \cup(\rho_+(\infty),\ub)) > \lb,
  \]
  contradicts \eqref{eqn:lembarycentre2}. 
\end{proof}
\begin{proof}[Proof of Lemma~\ref{lem:DnotCnotAB}]
  \fbox{IV} $\Longrightarrow$ not \fbox{III}\\ Assume both \fbox{IV} and
  \fbox{III} hold. From the definition of $\gamma_+(w_0)$ and
  $\rho_-(w_0)$ we have that $\mu_B(\Gamma_+)=\lb$ for
  $\Gamma_+=\big(0,\rho_-(w_0)\big)\cup\big(\rho_+(\gamma_+(w_0)),\gamma_+(w_0)\big)$. 
  This implies that $\gamma_+(w_0)\geq \rho_-(0)$ since otherwise
  $$\mu_B(\Gamma_+)=\mu_B\Big(\big(0,\rho_-(0)\big)\setminus\left\{
    \big(\rho_-(w_0),\rho_+(\gamma_+(w_0))\big) \cup \big(\gamma_+(w_0),\rho_-(0)\big)\right\}\Big)>\lb,$$
    where we also used the assumption $\rho_-(w_0)\leq
    \rho_+(\gamma_+(w_0))$. Likewise, using
    $\gamma_-(\gamma_+(w_0))=w_0$ we see that $w_0\leq
    \rho_+(\infty)$. Applying $\rho_-$ to the last inequality and
    using our assumptions we obtain
  $$ \rho_+(\rho_-(0))<\rho_-(\rho_+(\infty))\leq \rho_-(w_0)\leq
  \rho_+(\gamma_+(w_0)).$$ In consequence, $\gamma_+(w_0)<\rho_-(0)$
  which gives the desired contradiction.\medskip\\ \fbox{III}
  $\Longrightarrow$ not (\fbox{I} or \fbox{II})\\ Suppose \fbox{III}
  and \fbox{II} hold together. Let $w_1<\lb$ be the point given by
  \fbox{II} such that $\gamma_+(w_1)=\infty$ and $w_0$ the point in
  \fbox{III} such that $\gamma_-(\gamma_+(w_0))=w_0$. Naturally, as
  $\gamma_-(\gamma_+(w_1))=\gamma_-(\infty)=\lb>w_1$ we have that
  $w_0<w_1$. Observe also that $\mu_B((0,\rho_-(w_1))\cup
  (\rho_+(\infty),\infty))=\lb$ and $\mu_B(\re)=S_0 \in (\lb,\ub)$
  which readily imply $\lb<\rho_-(w_1)<\rho_+(\infty)<\ub$. Let us
  further denote $r=\min\{\rho_-(w_0),\rho_+(\infty)\}$ and
  $R=\max\{\rho_-(w_0),\rho_+(\infty)\}$ so that finally, using our
  assumptions,
  \begin{equation}\label{eq:lemma_ordering}
    w_0<w_1<\lb<\rho_-(w_1)<r\leq R\leq \rho_+(\gamma_+(w_0))\leq \ub\leq
    \gamma_+(w_0). 
  \end{equation}
  By definition we have
  $$\int_{\rho_+(\infty)}^\infty (\ub-u)\mu(\td
  u)=0=\int_{w_0}^{\rho_-(w_0)}(\ub-u)\mu(\td
  u)+\int_{\rho_+(\gamma_+(w_0))}^\infty (\ub-u)\mu(\td u).$$
  Subtracting these two quantities we arrive at
  \begin{eqnarray}\label{eq:lemma_sets}
    \int_{R}^{\rho_+(\gamma_+(w_0))}(\ub-u)\mu(\td u)=\int_{w_0}^r
    (\ub-u)\mu(\td u),&\quad\textrm{and using \eqref{eq:lemma_ordering} we
      deduce}\nonumber\\ \mu\Big(\big(R,\rho_+(\gamma_+(w_0))\big)\Big)>\mu\Big((w_0,r)\Big).&
  \end{eqnarray}
  Using the properties of our functions again we have
  $$\int_{-\infty}^{w_1}(u-\lb)\mu(\td u)+\int_{\rho_+(\infty)}^\infty
  (u-\lb)\mu(\td u)=0=\int_{-\infty}^{w_0}(u-\lb)\mu(\td
  u)+\int_{\rho_+(\gamma_+(w_0))}^{\gamma_+(w_0)}(u-\lb)\mu(\td u),$$
  which after subtracting, using $\int_{w_0}^{w_1}(u-\lb)\mu(\td
  u)=-\int_{\rho_-(w_1)}^{\rho_-(w_0)}(u-\lb)\mu(\td u)$, yields
  \begin{equation}\label{eq:lemma_final}
    \int_{\rho_+(\infty)}^{\rho_+(\gamma_+(w_0))}(u-\lb)\mu(\td
    u)-\int_{\rho_-(w_1)}^{\rho_-(w_0)}(u-\lb)\mu(\td
    u)+\int_{\gamma_+(w_0)}^{\infty} (u-\lb)\mu(\td u)=0. 
  \end{equation}
  The last term in \eqref{eq:lemma_final} is positive and for the first
  two terms, using \eqref{eq:lemma_sets}, we have
  \begin{eqnarray}
    \int_{R}^{\rho_+(\gamma_+(w_0))}(u-\lb)\mu(\td
    u)&\geq&\big(R-\lb\big)\mu\Big(\big(R,\rho_+(\gamma_+(w_0))\big)\Big)
    \nonumber\\ &>&\big(R-\lb\big)\mu\Big((\rho_-(w_1),r)\Big)\geq
    \int_{\rho_-(w_1)}^r(u-\lb)\mu(\td u). 
  \end{eqnarray}
  This readily implies that the left hand side of
  \eqref{eq:lemma_final} is strictly positive leading to the desired
  contradiction.\\ The case when \fbox{III} and \fbox{I} hold together
  is similar.\medskip\\ {\bf \sc Step 2:} Construction of relevant
  embeddings.\\ Our strategy is
  now as follows. For each of the four exclusive cases
  \fbox{I}-\fbox{IV} we construct a stopping time $\tau$ which solves
  the Skorokhod embedding problem for $\mu$ and such that for the
  price process $S_t:=B_{\frac{t}{T-t}\wedge \tau}$ the appropriate
  superhedge $\uh^{I}-\uh^{IV}$ is in fact a perfect hedge. The
  stopping time $\tau$ will be a composition of stopping times, each
  of which is a solution to an embedding problem for a (rescaled)
  restriction of $\mu$ to appropriate intervals.\smallskip\\ 
  Suppose that \fbox{II} holds.\\ This embedding is closely
  related to the classical Az\'ema-Yor embedding \cite{AzemaYor:79}
  used in the work of Brown, Hobson and Rogers \cite{Brown:01b} on
  one-sided barrier options. Let $\tau_1$ be a UI embedding, in
  $(B_t)_{t \ge 0}$ with $B_0=S_0$, of
  $$\nu^1=\mu|_{(w_0,\rho_+(\infty))}+p\delta_{\lb},\quad\textrm{where
  }p=\big(1-\mu((w_0,\rho_+(\infty)))\big),$$ which is centred in
  $S_0$. Let $\nu^2=\frac{1}{p}\mu|_{\R_+\setminus
    (w_0,\rho_+(\infty))}$, which is a probability measure with
  $\nu^2_B(\R_+)=\lb$, and let $\tau_2$ be the Az\'ema-Yor embedding
  (cf.\ Ob\l\'oj \cite[Sec.~5]{Obloj:04b}) of $\nu^2$, i.e. 
  $$\tau_2=\inf\big\{t>0: \overline{B}_t\geq
  \nu^2_{B}([B_t,\infty))\big\},$$ which is a UI embedding of $\nu^2$
  when $B_0=\lb$. Note that $\nu^2_B([x,\infty))=\mu_B([x,\infty))$
  for $x\geq \rho_+(\infty)$ and that
  $$\{\overline{B}_{\tau_2}\geq \ub\}=\{B_{\tau_2}\geq
  \rho_+(\infty)\},\quad\textrm{since
  }\nu^2_B((\rho_+(\infty),\infty))=\ub.$$ We define our final
  embedding as follows: we first embed $\nu^1$ and then the atom
  in $\lb$ is diffused into $\nu^2$ using the Az\'ema-Yor
  procedure, i.e. 
  \begin{equation}
    \tau:= \tau_1\mathbf{1}_{B_{\tau_1}\neq
      \lb}+\tau_2\circ\tau_1\mathbf{1}_{B_{\tau_1}=\lb},
  \end{equation}
  where $B_0=S_0$. Clearly, $\tau$ is a UI embedding of $\mu$ and
  $S_t:=B_{\frac{t}{T-t}\wedge \tau}$ defines a model for the stock
  price which matches the given prices of calls and puts,
  i.e.~$S_T\sim\mu$. Furthermore, $\{\sS_T\geq \ub\} = \{\sS_T\geq
  \ub, \iS_T \le \lb\} = \{S_T\geq
  \rho_+(\infty)\}$ and it follows that
  $$\dbp = \uh^{II}(\rho_+(\infty)).$$
  \smallskip\\ Suppose that \fbox{I} holds.\\ This is a mirror image of
  \fbox{II}. We first embed $\nu^1=\mu|_{(\rho_-(0),z_0)}+p\delta_{\ub}$,
  with $p=1-\mu((\rho_-(0),z_0))$. Then the atom in $\ub$ is diffused
  into $\mu|_{\R_+\setminus (\rho_-(0),z_0)}$ using the reversed
  Az\'ema-Yor stopping time (cf. Ob\l\'oj
  \cite[Sec.~5.3]{Obloj:04b}). The resulting stopping time $\tau$ and
  the stock price model $S_t:=B_{\frac{t}{T-t}\wedge \tau}$ satisfy
  $\dbp=\uh^{I}(\rho_-(0))$.\smallskip\\ Suppose that \fbox{III}
  holds.\\ We describe the embedding in words before writing it
  formally. We first embed $\mu$ on
  $(\rho_-(w_0),\rho_+(\gamma_+(w_0)))$ or we stop when we hit $\ub$ or
  $\lb$. If we hit $\ub$ then we embed $\mu$ on $(\gamma_+(w_0),\infty)$
  or we run until we hit $\lb$. Likewise, if we first hit $\lb$ then we
  embed $\mu$ on $(0,w_0)$ or we run till we hit $\ub$. Finally, from
  $\lb$ and $\ub$ we embed the remaining bits of $\mu$.\\ We now
  formalise these ideas. Let
  \begin{equation}\label{eq:def_of_nu1_I}
    \nu^1=p\delta_{\lb}+\mu|_{(\rho_-(w_0),\rho_+(\gamma_+(w_0)))}+\big(1-p-\mu(\rho_-(w_0),\rho_+(\gamma_+(w_0)))\big)
  \end{equation}
  where $p$ is chosen so that $\nu^1_B(\R_+)=S_0$. Define two more
  measures
  \begin{equation}
    \begin{split}
      \nu^2&=\mu\big([w_0,\rho_-(w_0)]\big)\delta_{\lb} +
      \mu|_{(\gamma_+(w_0),\infty)}\\ \nu^3&=\mu|_{(0,w_0)}+\mu\big([\rho_+(\gamma_+(w_0)),\gamma_+(w_0)]\big)\delta_{\ub}
    \end{split}
  \end{equation}
  and note that by definition $\nu^2_B(\R_+)=\ub$ and
  $\nu^3_B(\R_+)=\lb$. Furthermore, as the barycentre of
  $$\nu^3+\mu|_{((\rho_-(w_0),\rho_+(\gamma_+(w_0)))}+\nu^2$$ is equal
  to the barycentre of $\mu$, and from
  the uniqueness of $p$ in \eqref{eq:def_of_nu1_I}, we deduce that
  $$\nu^3(\R_+)=p,\textrm{ and
  }\nu^2(\R_+)=q=\big(1-p-\mu(\rho_-(w_0),\rho_+(\gamma_+(w_0)))\big).$$
  Let $\tau_1$ be a UI embedding of $\nu^1$ (for $B_0=S_0$), $\tau_2$ be
  a UI embedding of $\frac{1}{q}\nu^2$ (for $B_0=\ub$) and $\tau_3$ be a
  UI embedding of $\frac{1}{p}\nu^3$ (for $B_0=\lb$). Further, let
  $\tau_4$ and $\tau_5$ be UI embeddings of respectively
  $$\frac{1}{\mu\big((w_0,\rho_-(w_0))\big)}\mu|_{(w_0,\rho_-(w_0))}\textrm{
    and
  }\frac{1}{\mu\big((\rho_+(\gamma_+(w_0)),\gamma_+(w_0))\big)}\mu|_{(\rho_+(\gamma_+(w_0)),\gamma_+(w_0))},$$
  where the starting points are respectively $B_0=\lb$ and
  $B_0=\ub$. We are ready to define our stopping time. Let $B_0=S_0$
  and write $H_z=\inf\{t:B_t=z\}$. We put
  \begin{equation}
    \begin{split}
      \tau:=&\tau_1\mathbf{1}_{\tau_1<H_{\lb}\wedge
        H_{\ub}}\\
      &+\tau_2\circ\tau_1\mathbf{1}_{H_{\ub}=\tau_1}\mathbf{1}_{\tau_2\circ\tau_1<H_{\lb}}\\
      &+\tau_4\circ\tau_2\circ\tau_1\mathbf{1}_{H_{\ub}=\tau_1}\mathbf{1}_{H_{\lb}=\tau_2\circ\tau_1}\\
      &+\tau_3\circ\tau_1\mathbf{1}_{H_{\lb}=\tau_1}\mathbf{1}_{\tau_3\circ\tau_1<H_{\ub}}\\
      &+\tau_5\circ\tau_3\circ\tau_1\mathbf{1}_{H_{\lb}=\tau_1}\mathbf{1}_{H_{\ub}=\tau_3\circ\tau_1},
    \end{split}
  \end{equation}
  and it is immediate from the properties of our measures that
  $B_\tau\sim\mu$ and $(B_{t\wedge\tau})$ is a UI martingale. 
  Furthermore, with $S_t:=B_{\frac{t}{T-t}\wedge \tau}$, we see that
  $$\dbp =
  \uh^{III}\big(\gamma_+(w_0),\rho_+(\gamma_+(w_0)),\rho_-(w_0),w_0\big),\quad
  a.s.$$
%  \smallskip\\

  Finally, suppose that \fbox{IV} holds.\\
  In this case, we initially run to $\{\lb,\ub\}$ without stopping any
  mass. Then, from $\ub$, we either run to $\lb$ or embed $\mu$ on
  $(\rho_-(0),\infty)$. The mass which is at $\lb$ after the first
  step is run to either $\ub$ or used to embed $\mu$ on
  $(0,\rho_+(\infty))$. The mass which remains at $\lb$ and $\ub$ is
  then used to embed the remaining part of $\mu$ on
  $(\rho_+(\infty),\rho_-(0))$.

  To begin with, we define the measures
  \begin{eqnarray*}
    \nu^1 & = & \left[\frac{S_0 - \lb}{\ub-\lb} -
      \mu((\rho_-(0),\infty))\right] \delta_{\lb} + \mu
    |_{(\rho_-(0),\infty)},\\
    \nu^2 & = & \left[\frac{\ub - S_0}{\ub-\lb} -
      \mu((0,\rho_+(\infty))\right] \delta_{\ub} + \mu
    |_{(0,\rho_+(\infty))}.
  \end{eqnarray*}
  Then $\nu^1$ is a measure, since $\mu((\rho_-(0),\infty)) <
  \frac{S_0 - \lb}{\ub-\lb}$: noting that $\ub < \rho_-(0)$, we get
  \begin{eqnarray*}
    0 & = & \int_0^{\rho_-(0)} (u-S_0) \mu(\td u) +
    \int_{\rho_-(0)}^\infty (u-S_0) \mu(\td u)\\
    & \ge & (\lb - S_0) \mu((0,\rho_-(0))) + (\ub-S_0)
    \mu((\rho_-(0),\infty))
  \end{eqnarray*}
  and the statement follows. Moreover, we can see that $\nu^1_B(\R_+)
  = \ub$:
  \begin{equation*}
    \begin{split}
      \int_{\rho_-(0)}^\infty & (u-\ub) \mu(\td u) + \frac{S_0 -
        \lb}{\ub-\lb} (\lb-\ub) - \mu((\rho_-(0),\infty))(\lb-\ub)\\
      & = \int_{\rho_-(0)}^\infty (u-\lb) \mu(\td u) + (\lb - S_0) \\
      & = (S_0 - \lb) - \int_0^{\rho_-(0)} (u-\lb) \mu(\td u) + (\lb
      -S_0) = 0.
    \end{split}
  \end{equation*}
  Similar results hold for $\nu^2$.

  Consequently, we can construct the first stages of the embedding. The
  final stage is to run from $\lb$ and $\ub$ to embed the remaining
  mass. Of course, it does not matter exactly how we do this from the
  optimality point of view, since these paths have already struck both
  barriers, but we do need to check that the embedding is possible. It
  is clear that the means and probabilities match, but unlike the
  previously considered cases, we now have initial mass in two places,
  and the existence of a suitable embedding is not trivial. To resolve
  this, we note the following: suppose we can find a point $z^* \in
  (\rho_+(\rho_-(0)),\rho_-(\rho_+(\infty)))$ such that
  $\nu^1(\{\lb\}) = \mu((\rho_+(\infty),z^*))$. Then because
  $\mu_B((\rho_+(\infty),\rho_-(\rho_+(\infty)))) = \lb$, we can find
  $z_1 \in (\rho_+(\infty), z^*)$ such that the measure
  \[
  \nu^3 = \mu|_{(\rho_+(\infty), z_1)} + (\nu^1(\{\lb\}) -
  \mu((\rho_+(\infty), z_1))) \delta_{z^*}
  \]
  has barycentre $\lb$. There is a similar construction
  for $\nu^4$ and a point $z_2$ which will embed mass from $\nu^2$ at
  $\ub$ to $\mu$ on $(z_2,\rho_-(0))$, and an atom at $z^*$. In the
  final stage, we can then embed the mass from $z^*$ to $(z_1,z_2)$.

  It remains to show that we can find such a point $z^*$. To do this,
  we check that there is sufficient mass being stopped at $\lb$ at the
  end of the second step (\ie{} which has already hit $\ub$.)
  Specifically, we need to show that
  \[
  \frac{S_0-\lb}{\ub-\lb} - \mu((\rho_-(0),\infty)) \ge
  \mu((\rho_+(\infty),\rho_+(\rho_-(0)))).
  \]
  Rearranging, and using the definitions of the functions $\rho_+$ and
  $\rho_-$, this is equivalent to
  \begin{eqnarray*}
    (S_0-\lb) & \ge & \int_{\rho_+(\infty)}^\infty(u-\lb)\, \mu(\td u) -
    \int_{\rho_+(\rho_-(0))}^{\rho_-(0)} (u-\lb) \mu(\td u)\\
    & \ge & \int_{(\rho_+(\infty),\rho_+(\rho_-(0))) \cup
      (\rho_-(0),\infty)}(u-\lb) \, \mu(\td u)\\
    & \ge & (S_0-\lb) - \int_{(0,\rho_+(\infty)) \cup
      (\rho_+(\rho_-(0)),\rho_-(0))} (u-\lb) \, \mu(\td u).
  \end{eqnarray*}
  Using the definitions of the appropriate functions, this can be seen
  to be equivalent to
  \[
  0 \le \int_{\rho_+(\rho_-(0))}^{\rho_-(\rho_+(\infty))} (u-\lb) \,
  \mu(dy),
  \]
  which follows since $\rho_+(\rho_-(0)) > \lb$. The construction of
  the appropriate stopping time, and its optimality follow as previously. This ends the proof
  of Theorem \ref{thm:upper_price}.
\end{proof}

In order to prove Theorem \ref{thm:sub_prices} we start with an auxiliary lemma.
\begin{lemma} \label{lem:subhmaxv}
  Either we may construct an embedding of $\mu$ under which the process never
  hits both $\ub$ and $\lb$, or
  \begin{eqnarray}
    \inf\{v \in [\lb,\ub]:\psi(v) < \infty\} & \ge & \inf\{v \in
    [\lb,\ub]: \theta(v)>-\infty\}
    \label{eq:infpsigeinftheta}\\
    \sup\{v\in [\lb,\ub]:\psi(v) < \infty\} & \ge & \sup\{v \in
    [\lb,\ub] : \theta(v)>-\infty\}
    \label{eq:suppsigesuptheta}
  \end{eqnarray}
  and we may then write
  \begin{equation} \label{eqn:vbarlevbar}
    \lv = \inf\{v\in [\lb,\ub]:\psi(v) < \infty\} \le \sup\{v \in
    [\lb,\ub] : \theta(v)>-\infty\}
    = \uv,
  \end{equation}
  where $\lv,\uv$ are given in \eqref{eqn:vvK_3}.
\end{lemma}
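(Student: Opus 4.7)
My plan is to interpret the conditions $\psi(v)<\infty$ and $\theta(v)>-\infty$ as feasibility conditions for partial UI embeddings of $\mu$, and then run a contrapositive argument: in each way that the second disjunct of the lemma could fail, the failure is used to produce an embedding of $\mu$ under which the process avoids one of the two barriers.

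First, I would establish the dictionary. The condition $\psi(v)<\infty$ is equivalent to the feasibility of the following partial construction: from a starting point at $\lb$, UI-embed the measure $\mu|_{(\psi(v),\lb)\cup(v,\ub)}$ together with a compensating atom of non-negative weight $\frac{\ub-S_0}{\ub-\lb}-\mu((\psi(v),\lb)\cup(v,\ub))$ at $\ub$. The balance equation in \eqref{eq:psi_def_sub} is precisely the centre-of-mass constraint for this embedding, while the mass inequality guarantees the atom has non-negative weight. A symmetric statement holds for $\theta(v)>-\infty$. From this dictionary one reads off monotonicity: both $\psi$ and $\theta$ are decreasing on their domains (which are therefore intervals), since decreasing $v$ enlarges the $\ub$-adjacent interval and forces the $\lb$-adjacent one to shrink to preserve the centre of mass.

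For the main step, assume the second disjunct of the lemma fails; equivalently, at least one of \eqref{eq:infpsigeinftheta}, \eqref{eq:suppsigesuptheta}, or the overlap $\lv\le\uv$ is violated. In each sub-case the plan is to produce a decomposition $\mu=\mu_1+\mu_2$ with $\supp(\mu_1)\subset(0,\ub)$, $\supp(\mu_2)\subset(\lb,\infty)$, each centred at $S_0$ after rescaling. Classical Skorokhod embedding results (see Ob\l\'oj \cite{Obloj:04b}) then yield a UI embedding of each $\mu_i$ that stays away from the respective barrier, and the joint embedding of $\mu$ therefore has no path crossing both barriers. When \eqref{eq:infpsigeinftheta} fails, the witness is some $v_*$ with $\psi(v_*)<\infty$ and $\theta(v_*)=-\infty$: the $\psi$-construction at $v_*$ delivers the $\mu_1$ side (mass reaches $(\psi(v_*),\lb)\cup(v_*,\ub)$ from $\lb$ without crossing $\ub$), and the infeasibility of $\theta(v_*)$ forces the remaining mass to sit in a region that is embeddable from $\ub$ without crossing $\lb$. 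The case of \eqref{eq:suppsigesuptheta} is symmetric, and the overlap failure combines both.

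Once the contrapositive is established, the identification \eqref{eqn:vbarlevbar} is immediate from the definitions of $\lv$ and $\uv$ in \eqref{eqn:vvK_3}. The hardest part I anticipate is the mass-balancing in the bypass construction: one must track the interplay of total mass and centre of mass as $v$ crosses the critical value, and invoke Lemma \ref{lem:baryprop}-type set-manipulation arguments to justify that the failure of one of the feasibility conditions really delivers a clean decomposition $\mu=\mu_1+\mu_2$ of the required type, rather than leaving a residual part that must straddle both barriers.
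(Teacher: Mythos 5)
Your ``dictionary'' reading of $\psi(v)<\infty$ and $\theta(v)>-\infty$ as feasibility conditions for partial UI embeddings is correct, as is the observation that $\psi,\theta$ are decreasing with interval domains. But the core of the proposal has two genuine problems.

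\textbf{The proposed decomposition is wrong.} You want to write $\mu=\mu_1+\mu_2$ with $\supp\mu_1\subset(0,\ub)$, $\supp\mu_2\subset(\lb,\infty)$, and \emph{each piece centred at $S_0$} after rescaling, so that two independent UI embeddings from $S_0$ bypass $\ub$ and $\lb$ respectively. Such a decomposition requires (among other things) that the excess of $\mu$ below $\lb$ relative to $S_0$ can be compensated entirely by mass in $(S_0,\ub)$, i.e.\ $\int_{(0,\lb]}(S_0-u)\,\mu(\td u)\le\int_{(S_0,\ub)}(u-S_0)\,\mu(\td u)$, and there is no reason this should hold when one of the domains is empty -- e.g.\ if most of the mass of $\mu$ sits near $0$ and far above $\ub$ with very little in $(\lb,\ub)$, the right side can be much smaller than the left. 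The paper's avoidance construction is of a different shape: it runs from $S_0$ to an intermediate pair $\{\ub, b_*\}$ with $b_*\in(\lb,S_0)$, then embeds $\mu|_{(\lb,\beta^*(\infty)]\cup[\ub,\infty)}$ from $\ub$ (a measure with barycentre $\ub$, never touching $\lb$) and $\mu|_{[0,\lb]\cup(\beta^*(\infty),\ub)}$ from $b_*$ (a measure with barycentre $b_*$, never touching $\ub$). So the two pieces are centred at $\ub$ and $b_*$, not at $S_0$, and the intermediate ``split point'' $b_*$ is essential; your construction has no analogue of it. The nontrivial content of the lemma -- and the part you correctly flag as hard -- is exactly the analysis via $\alpha_*(w),\beta^*(w),p_*(w),p^*(w)$ that establishes the existence of this split and of $b_*$, and your proposal does not supply a substitute.

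\textbf{The contrapositive case analysis is mis-framed.} You plan to treat ``\eqref{eq:infpsigeinftheta} fails'' or ``\eqref{eq:suppsigesuptheta} fails'' as sub-cases from which to \emph{construct} an avoidance embedding. But the paper shows these inequalities \emph{cannot} fail once both domains $\{v:\psi(v)<\infty\}$ and $\{v:\theta(v)>-\infty\}$ are nonempty: assuming they fail leads directly to a barycentre contradiction (the resulting UI embedding would have mean strictly greater or less than $S_0$), not to an avoidance construction. So the correct dichotomy is simply: either one of the two domains is empty (in which case the paper builds the avoidance embedding via the $p_*,p^*$ machinery), or both are nonempty (in which case \eqref{eq:infpsigeinftheta}, \eqref{eq:suppsigesuptheta} and $\lv\le\uv$ all hold by contradiction arguments). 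Trying to extract an avoidance embedding from a failure of \eqref{eq:infpsigeinftheta} is trying to extract a construction from a hypothesis that can never be realised, which is not a viable proof strategy.
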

\begin{proof}%[Proof of Lemma~\ref{lem:subhmaxv}]
  We begin by showing that if $\theta(v) = -\infty$ for all $v \in
  [\lb,\ub]$ then there exists an embedding of $\mu$ which does not
  hit both $\lb$ and $\ub$. 

  For $w \ge \ub$, define $\alpha_*(w)$ to be the mass that must be
  placed at $\lb$ in order for the barycentre of this mass plus $\mu$
  on $(\ub,w)$ to be $\ub$, so $\alpha_*(w)$ satisfies
  \begin{equation*}
    \alpha_*(w) \lb + \int_{\ub}^w u \, \mu(\td u) = \ub\left(\alpha_*(w)+
      \mu((\ub,w))\right). 
  \end{equation*}
  If follows that $\alpha_*(w)$ exists, although there is no guarantee
  that it is less than $1-\mu((\ub,w))$. In addition, define
  $\beta^*(w)$ to be
  \begin{equation*}
    \beta^*(w) = \inf\left\{ \beta \in [\lb,\ub] : \int_{(\lb,\beta)
        \cup (\ub,w)} u \, \mu(\td u) = \ub \int_{(\lb,\beta)
        \cup (\ub,w)} \, \mu(\td u)\right\}. 
  \end{equation*}
  If this is finite, then it is the point at which
  $\mu_B((\lb,\beta^*(w))\cup (\ub,w)) = \ub$. Note also that
  $\beta^*(w)$ is increasing as a function of $w$, and is
  continuous when $\beta^*(w) < \infty$. Define
  \begin{eqnarray*}
    p_*(w) & = & \mu((\ub,w)) + \alpha_*(w) \\
    p^*(w) & = & \mu((\lb,\beta^*(w))\cup (\ub,w)). 
  \end{eqnarray*}

  Suppose initially that $\beta^*(w) \le \ub$ for all $w \ge
  \ub$. Then we may assign the following interpretations to these
  quantities: $p_*(w)$ is the smallest amount of mass that we can
  start at $\ub$ and run to embed $\mu$ on $(\ub,w)$, $(\lb,\cdot)$
  and an atom at $\lb$, and $p^*(w)$ is the largest amount of mass
  that we may do this with: the smallest amount is attained by running
  all the mass below $\ub$ to $\lb$, while the largest probability is
  attained by running all this mass to $(\lb,\beta^*(w))$. The
  assumption that $\beta^*(w) \le \ub$ implies that this upper bound
  does not run out of mass to embed. Moreover, by adjusting the size
  of the atom at $\lb$, we can embed an atom of any size between
  $p_*(w)$ and $p^*(w)$ from $\ub$ in this way. Recalling the
  definition of $\theta(v)$, we conclude that there exists $v$ such
  that $\theta(v) = w$ if and only if $p_*(w) \le \frac{S_0 -
    \lb}{\ub-\lb} \le p^*(w)$. Finally, note that the functions
  $p_*(w)$ and $p^*(w)$ are both increasing in $w$, and further that
  $p_*(\ub) = 0$. Consequently, if there is no $v$ such that
  $\theta(v) >-\infty$, and $\beta^*(w) \le \ub$ for all $w \ge \ub$,
  we must have $p^*(\infty) := \lim_{w \to \infty} p^*(w) < \frac{S_0
    - \lb}{\ub-\lb}$. 

  So suppose $p^*(\infty) < \frac{S_0 - \lb}{\ub-\lb}$. We now
  construct an embedding as follows: from $S_0$, we initially run to
  either $\ub$ or
  \begin{equation*}
    b_* = \frac{S_0 - p^*(\infty) \ub}{1-p^*(\infty)}. 
  \end{equation*}
  Since $p^*(\infty) < \frac{S_0 - \lb}{\ub-\lb}$, then $b_* \in
  (\lb,S_0)$, and the probability that we hit $\ub$ before $b_*$ is
  $p^*(\infty)$. In addition, by the definition of $\beta^*(w)$, we
  deduce that the set $(\lb,\beta^*(\infty)] \cup [\ub,\infty)$ is
  given mass $p^*(\infty)$ by $\mu$, and that the barycentre of $\mu$
  on this set is $\ub$.  We may therefore embed the paths from $\ub$
  to this set, and the paths from $b_*$ to the remaining intervals,
  $[0,\lb] \cup (\beta^*(\infty),\ub)$ and we note no paths will hit
  both $\ub$ and $\lb$. 

  So suppose instead that $\beta^*(w_0) = \ub$ for some $w_0$, with
  $p^*(w_0) \le \frac{S_0 - \lb}{\ub-\lb}$. (If the latter condition
  does not hold, then using the fact that $\beta^*(w)$ is
  left-continuous and increasing, we can find a $w$ such that
  $\beta^*(w) < \ub$ and $p^*(w) = \frac{S_0 - \lb}{\ub-\lb}$, and
  therefore, by the arguments above, there exists $v$ with $\theta(v)
  > -\infty$.) We may then continue to construct measures with
  barycentre $\ub$, which are equal to $\mu$ on $(\lb,w)$ for $w>w_0$,
  and have a compensating atom at $\lb$. As we increase $w$,
  eventually either $w$ reaches $\infty$, or the mass of the measure
  reaches $\frac{S_0 - \lb}{\ub-\lb}$. In the latter case, we know
  $\theta(\ub) = w$, contradicting $\theta(v) = -\infty$ for all $v
  \in [\lb,\ub]$. So consider the former case: we obtained that the
  measure which is $\mu$ on $(\lb,\infty)$ with a further atom at
  $\lb$ to give barycentre $\ub$ has total mass ($p$ say) less than
  $\frac{S_0 - \lb}{\ub-\lb}$. We show that this is impossible: divide
  $\mu$ into its restriction to $(0,z)$ and $[z,\infty)$, where $z$ is
  chosen so that $\mu([z,\infty)) = p$. Then $z<\lb$ and the
  barycentre of the restriction to $[z,\infty)$ is strictly smaller
  than the barycentre of the measure with the mass on $[z,\lb)$ placed
  at $\lb$, which is the measure described above, and which has
  barycentre $\ub$. Additionally, the barycentre of the lower
  restriction of $\mu$ must be strictly smaller than $\lb$. Moreover, we
  may calculate the barycentre of $\mu$ by considering the barycentre
  of the two restrictions; since $\mu$ has mean (and therefore
  barycentre) $S_0$, we must have:
  \[
  S_0  = (1-p) \mu_B((0,z)) + p \mu_B([z,\infty)) < (1-p) \lb + p \ub
  < \lb \frac{\ub - S_0}{\ub-\lb} + \ub \frac{S_0 - \lb}{\ub-\lb} = S_0,
  \]
  which is a contradiction.

  We conclude that, if $\{v \in [\lb,\ub]: \theta(v) > -\infty\}$ is
  empty, there is an embedding of $\mu$ which does not hit both $\ub$
  and $\lb$. A similar result follows for $\psi(v)$. In particular, if
  we assume that there is no such embedding, then there exists $v$
  such that $\psi(v) < \infty$, and (not necessarily the same) $v$
  such that $\theta(v) > -\infty$. We now wish to show that
  \eqref{eq:infpsigeinftheta} holds. Suppose not. Then:
  \begin{equation*}
    v_* := \inf \{ v \in [\lb,\ub]: \psi(v) < \infty\} < \inf \{ v \in
    [\lb,\ub]: \theta(v) > -\infty\} =: v^*. 
  \end{equation*}
  Moreover, we can deduce from the definition of $\theta(v)$ that
  since $v^* > \lb$, we must have $\theta(v^*) = \infty$.  Now
  consider the barycentre of the measure which is taken by running
  from $S_0$ to $\lb$ and $\ub$, and then from $\lb$ to
  $(\psi(v_*),\lb) \cup (v_*,\ub)$, with a compensating mass at $\ub$,
  so that the measure has barycentre $\lb$, and from $\ub$ to
  $(\lb,v^*) \cup (\ub,\theta(v^*)) = (\lb,v^*) \cup (\ub,\infty)$,
  with a compensating mass at $\lb$, so that the measure has
  barycentre $\ub$. Then the whole law of the resulting process must
  have mean $S_0$, since this can be done in a uniformly integrable
  way, but the resulting distribution is at least $\mu$ on
  $(\psi(v_*),\infty)$ (it is twice $\mu$ on $(v_*,v^*)$, has atoms at
  $\lb$ and $\ub$ and is $\mu$ elsewhere), and zero on $(0,\psi(v_*))$,
  so must have mean greater than $S_0$, which is a contradiction. A
  similar argument shows \eqref{eq:suppsigesuptheta}. Hence, we may
  conclude (still under the assumption that there is no embedding
  which never hits both $\lb$ and $\ub$) that the equalities in
  \eqref{eqn:vbarlevbar} hold. It remains to show the
  inequality when $\lv,\uv\in (\lb,\ub)$. However this is now almost immediate: the forms of $\uv, \lv$ imply that $\mu$ gives mass $\frac{\ub-S_0}{\ub-\lb}$ to the
  set $(\psi(\lv),\lb)\cup (\lv,\ub)$ and it gives mass $\frac{S_0 -
    \lb}{\ub-\lb}$ to the set $(\lb,\uv)\cup (\ub,\theta(\uv))$. If
  $\uv < \lv$, this implies that $\mu$ gives mass 1 to the set
  $(\psi(\lv),\uv) \cup (\lv,\theta(\uv)) \subsetneq [0,\infty)$,
  contradicting the positivity of $\mu$.
 \end{proof}

\begin{proof}[Proof of Theorem \ref{thm:sub_prices}]
  From Lemma \ref{lem:subhmaxv} case $\fbox{IV}$ and the last
  statement of the theorem follow.  Assume from now on that $\lv\leq
  \uv$. We note firstly that $\psi(v)$ and $\theta(v)$ are both
  continuous and decreasing on $[\lv, \uv]$, and consequently $\kappa(v)$
  is also continuous and decreasing as a function of $v$ on $[\lv,
  \uv]$. It follows that the three cases $\kappa(\lv) < \lv$, $\kappa(\uv)
  > \uv$ and the existence of $v_0\in [\lv,\uv]$ such that $\kappa(v_0)
  = v_0$ are exclusive and exhaustive. We consider each case
  separately:
  \begin{enumerate}
  \item[$\fbox{I}$]
    Suppose that there exists $v_0\in [\lv,\uv]$ such that $\kappa(v_0) =
    v_0$. By the definition of $\psi(v)$, we can run all the mass
    initially from $S_0$ to $\{\lb, \ub\}$ and then embed (in a uniformly
    integrable way) from $\lb$ to $(\psi(v_0),\lb) \cup (v_0,\ub)$ and a
    compensating atom at $\ub$ with the remaining mass, and similarly from
    $\ub$ to $(\lb,v_0) \cup (\ub,\theta(v_0))$ with an atom at $\lb$. 
    The mass now at $\lb$ and
    $\ub$ can now be embedded in the remaining tails in a suitable way ---
    the means and masses must agree, since the initial stages were
    embedded in a uniformly integrable manner, and the remaining mass all
    lies outside $[\lb,\ub]$. We denote $\tau$ the stopping time which achieves the embedding.

    Now we compare both sides of the inequality in
    \eqref{eqn:asineqsH1}, where we choose $K_1 = \theta(v_0), K_2 =
    \psi(v_0)$ and therefore, as a consequence of the definition of
    $\kappa(v)$, we also have $K_3 = v_0$. The key observation is now
    that the mass is stopped only at points where the inequality is an
    equality: mass which hits $\ub$ initially either stops in the
    interval $(\lb,K_3) \cup (\ub,K_1)$, when there is equality in
    \eqref{eqn:asineqsH1}, or it goes on to hit $\ub$, and from this
    point also continues to the tails $(0,K_2)\cup(K_1,\infty)$, where
    there is again equality between both sides of
    \eqref{eqn:asineqsH1}. Taking expectations on the right of
    \eqref{eqn:asineqsH1}, we get the terms on the right of
    \eqref{eqn:subhedge1}, and we conclude that  \eqref{eqn:subhedge1} holds in the market model
    $S_t:=B_{\tau\land \frac{t}{T-t}}$.

  \item[$\fbox{II}$] Suppose now that $\kappa(\uv) > \uv$. Then we
    must have $\uv = \sup\{v \in [\lb,\ub]: \theta(v) > -\infty\}$ by
    Lemma~\ref{lem:subhmaxv}, and then $\uv < \kappa(\uv) \le
    \ub$. So, by the definition of $\theta(v)$, since $\uv$ exists and
    is less than $\ub$, we must have
    \[
    \int_{(\lb,\uv)\cup(\ub,\theta(\uv))} u \, \mu(\td u) = \ub
    \frac{S_0-\lb}{\ub-\lb} \mbox{ and }
    \mu((\lb,\uv)\cup(\ub,\theta(\uv))) = \frac{S_0-\lb}{\ub-\lb},
    \]
    and we can embed from $\ub$ (having initially run to
    $\{\lb,\ub\}$) to $(\lb,\uv)\cup(\ub,\theta(\uv))$ {\it without}
    leaving an atom at $\lb$. Similarly, we can also run from $\lb$ to
    $(\psi(\uv), \lb) \cup (\uv,\ub)$ with an atom at $\ub$. The
    atom can then be embedded in the tails
    $(0,\psi(\uv))\cup(\theta(\uv),\infty)$ in a uniformly integrable
    manner. We now need to show that when we take $K_3 = \uv, K_1 = \theta(\uv)$ and $K_2 =
    \psi(\uv)$ we get the required equality in \eqref{eqn:subhedge2}. 
    The main difference from the above case occurs in the
    case where we hit $\ub$ initially and then hit $\lb$: we no longer
    need equality in \eqref{eqn:asineqsH1}, since this no longer
    occurs in our optimal construction, however what remains to be
    checked is that the inequality does hold on this
    set. Specifically, we need to show that:
    \[
    1 \ge \alpha_0 + \alpha_1 (K_2 - S_0) - (\alpha_3 - \alpha_3 + \alpha_1)(K_2 -
    \ub) + (\alpha_3 -\alpha_2) (K_2 - \lb)
    \]
    Using \eqref{eq:subh2b} and \eqref{eqn:subh2soln}, we see that
    this occurs when
    \[
    \frac{(K_3 - K_2)(K_1 - \lb)}{(\ub-K_2)(K_1 - K_3)} \le 1
    \]
    which rearranges to give:
    \[
    K_3 \le \ub \frac{K_1-\lb}{(K_1 - \lb) + (\ub - K_2)} + \lb \frac{\ub
      -K_2}{(K_1 - \lb) + (\ub - K_2)}. 
    \]
    This is satisfied by our choice of $\uv$ as $K_3$, and $K_1 =
    \theta(\uv), K_2 = \psi(\uv)$.
    \item[$\fbox{III}$] This is symmetric to case $\fbox{II}$.
  \end{enumerate}
\end{proof}

\section{Additional Figures}
\label{sec:figures}

\begin{figure}[htbp]
\begin{multicols}{2}{
\includegraphics[height=6cm]{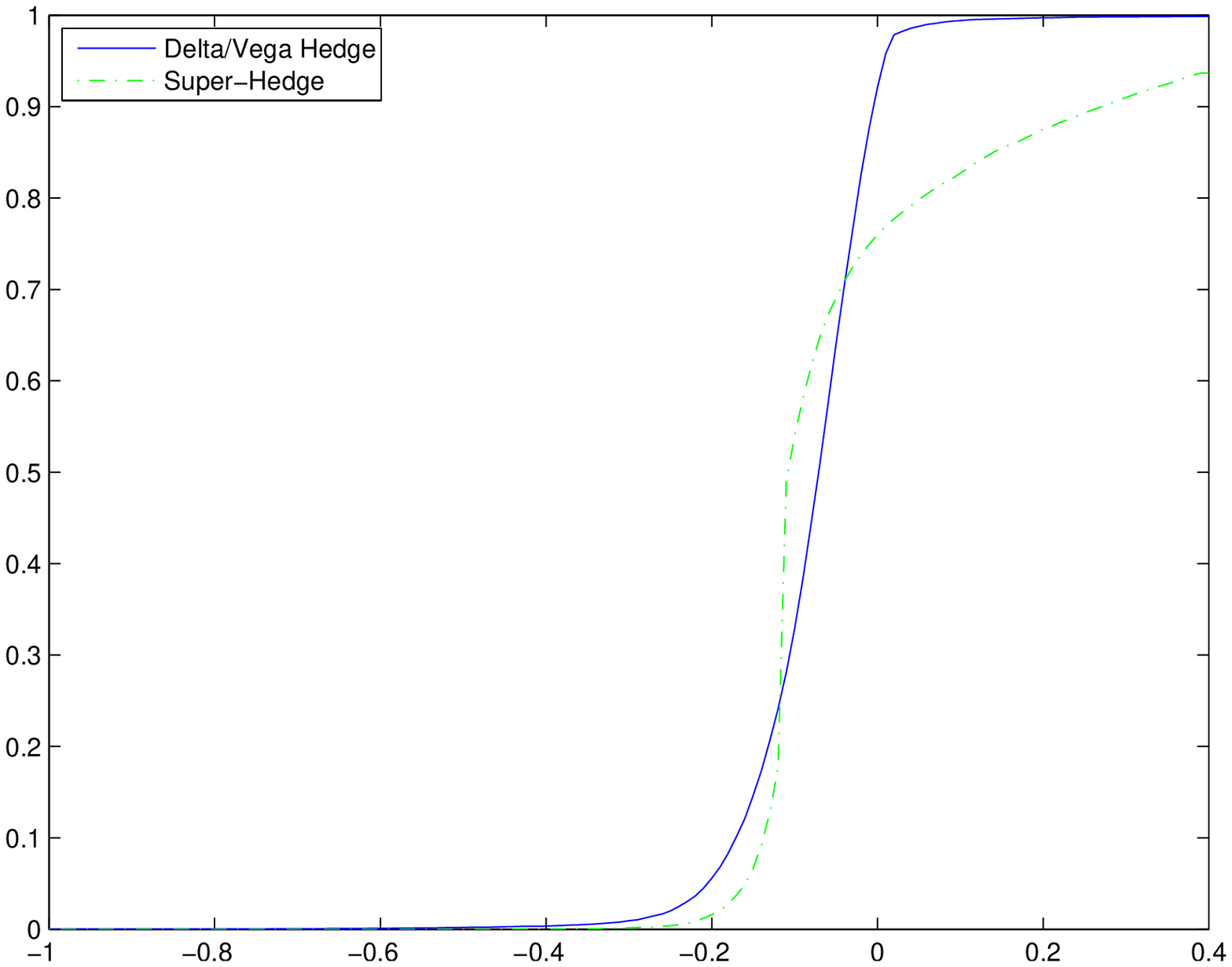}

\includegraphics[height=6cm]{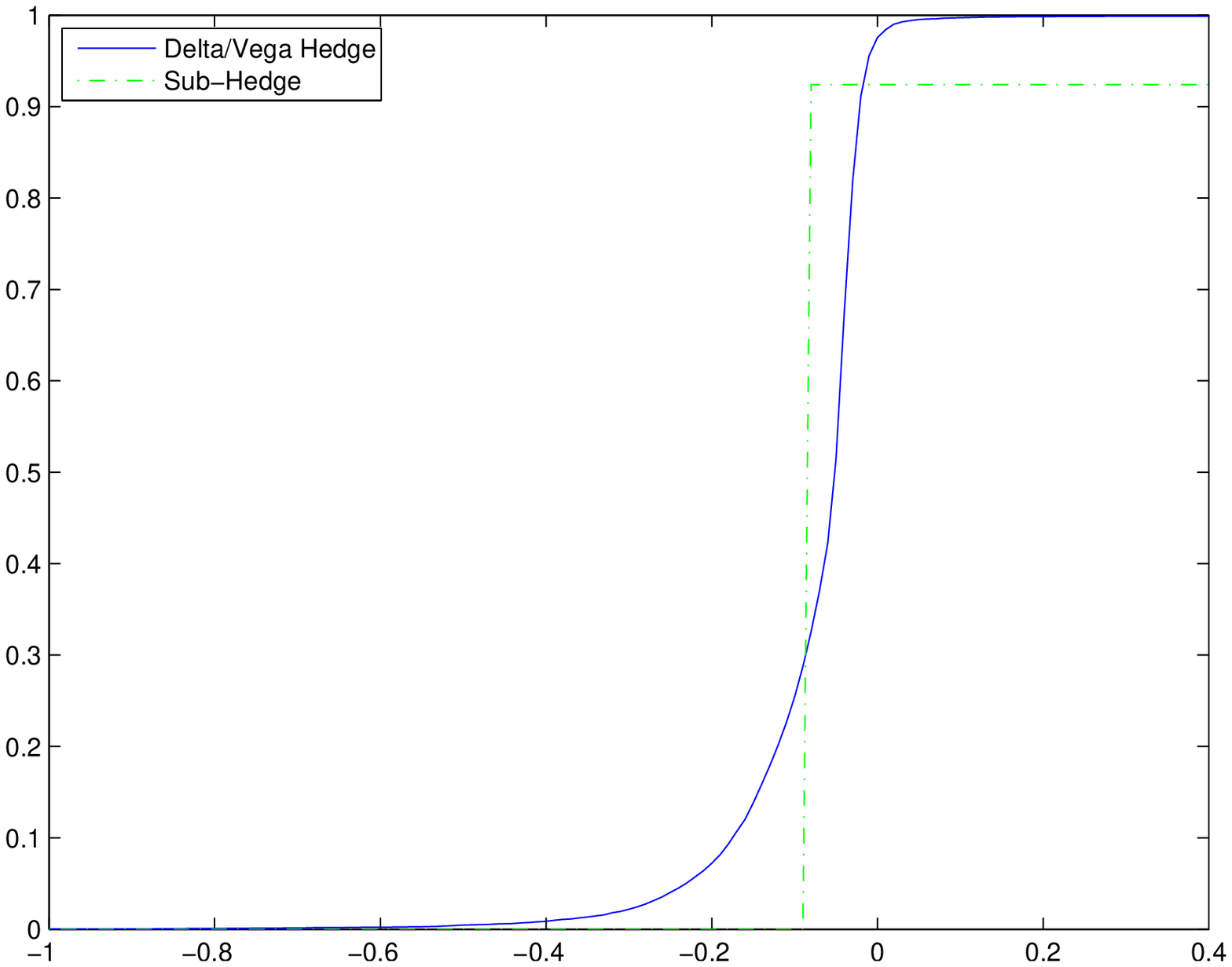}
}
\end{multicols}

\caption{Cumulative distributions of hedging errors under different
  scenarios of a short position (\emph{left}) and a long position
  (\emph{right}) in a double touch option with barriers at $130$ and
  $70$ under the Heston model
  \eqref{eq:heston}--\eqref{eq:heston_par}.}
\label{fig:hegerr_130_70}
\end{figure}
%\begin{figure}[htbp]
%\begin{multicols}{2}{
%\includegraphics[height=6cm]{CumDistWTC_u_120_80}

%\includegraphics[height=6cm]{CumDistWTC_l_120_80}
%}
%\end{multicols}

%\caption{Cumulative distributions of hedging errors under different
%  scenarios of a short position (\emph{left}) and a long position
%  (\emph{right}) in a double touch option with barriers at $120$ and
%  $80$ under the Heston model \eqref{eq:heston}--\eqref{eq:heston_par}.}
%\label{fig:hegerr_120_80}
%\end{figure}
\begin{figure}[htbp]
\begin{multicols}{2}{
\includegraphics[height=6cm]{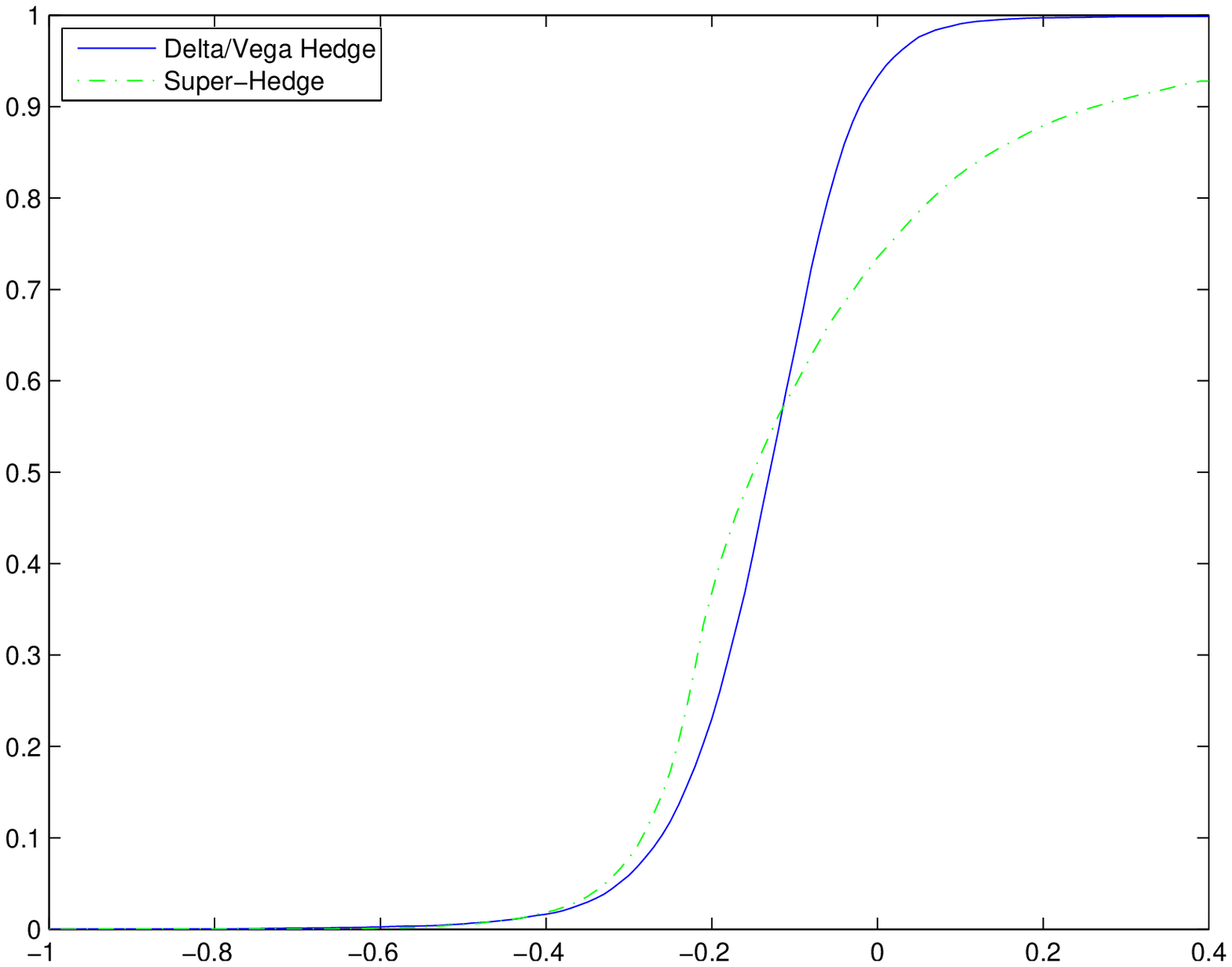}

\includegraphics[height=6cm]{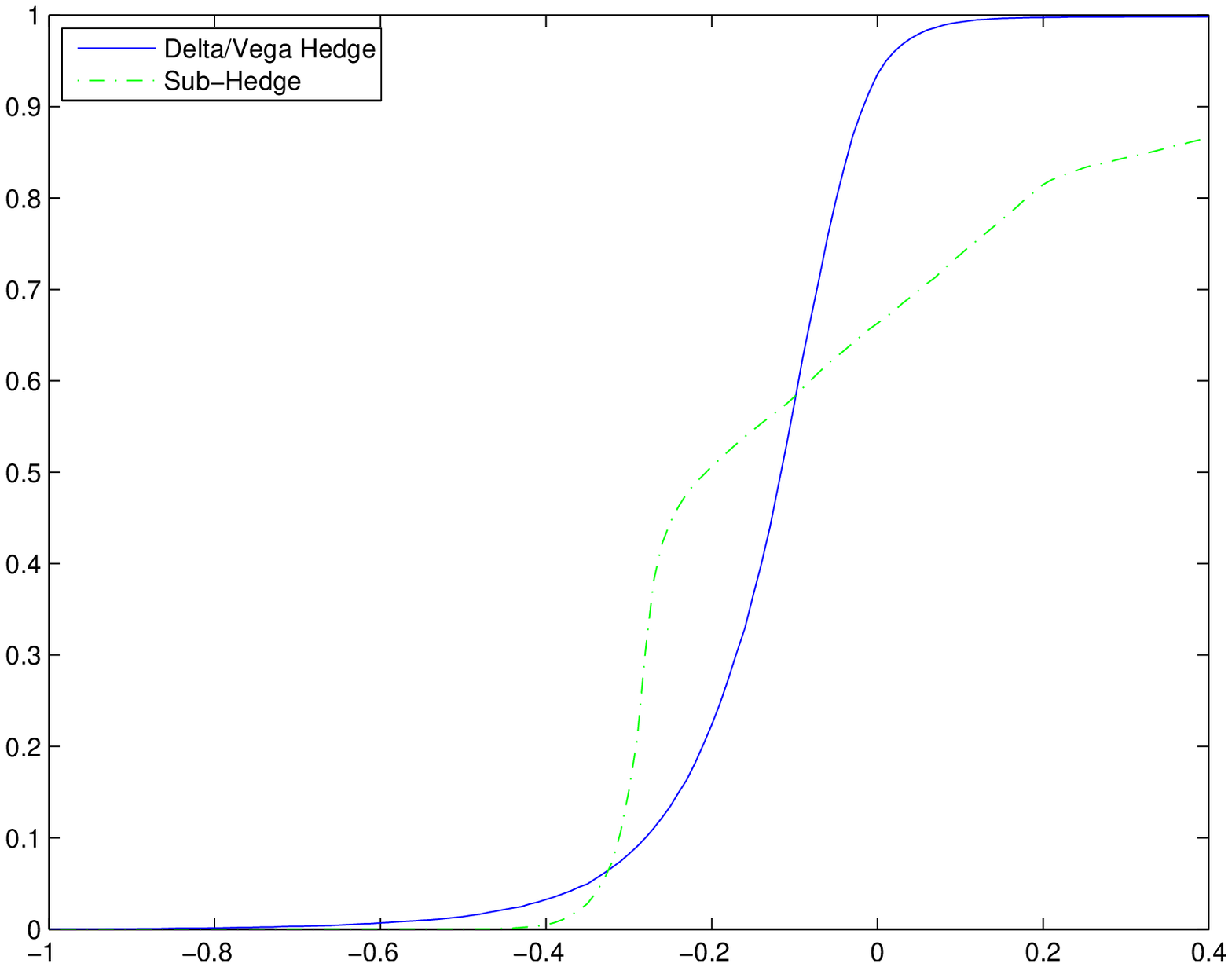}
}
\end{multicols}

\caption{Cumulative distributions of hedging errors under different
  scenarios of a short position (\emph{left}) and a long position
  (\emph{right}) in a double touch option with barriers at $115$ and
  $85$ under the Heston model \eqref{eq:heston}--\eqref{eq:heston_par}.}
\label{fig:hegerr_115_85}
\end{figure}
%\begin{figure}[htbp]
%\begin{multicols}{2}{
%\includegraphics[height=6cm]{CumDistWTC_u_110_90}

%\includegraphics[height=6cm]{CumDistWTC_l_110_90}
%}
%\end{multicols}

%\caption{Cumulative distributions of hedging errors under different
%  scenarios of a short position (\emph{left}) and a long position
%  (\emph{right}) in a double touch option with barriers at $110$ and
%  $90$ under the Heston model \eqref{eq:heston}--\eqref{eq:heston_par}.}
%\label{fig:hegerr_110_90}
%\end{figure}
%\begin{figure}[htbp]
%\begin{multicols}{2}{
%\includegraphics[height=6cm]{CumDistWTC_u_105_95}

%\includegraphics[height=6cm]{CumDistWTC_l_105_95}
%}
%\end{multicols}

%\caption{Cumulative distributions of hedging errors under different
%  scenarios of a short position (\emph{left}) and a long position
%  (\emph{right}) in a double touch option with barriers at $105$ and
%  $95$ under the Heston model \eqref{eq:heston}--\eqref{eq:heston_par}.}
%\label{fig:hegerr_105_95}
%\end{figure}
\begin{figure}[htbp]
\begin{multicols}{2}{
\includegraphics[height=6cm]{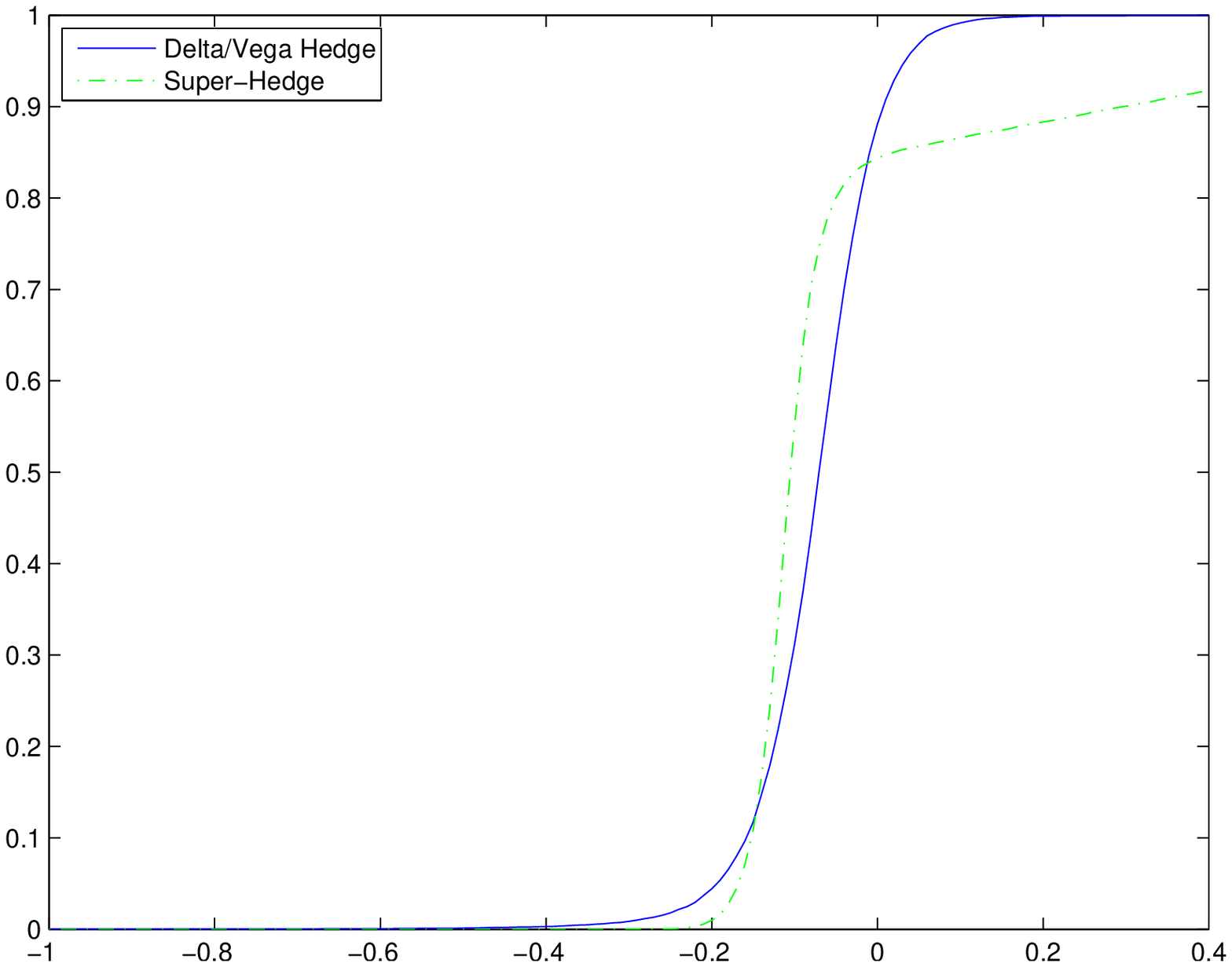}

\includegraphics[height=6cm]{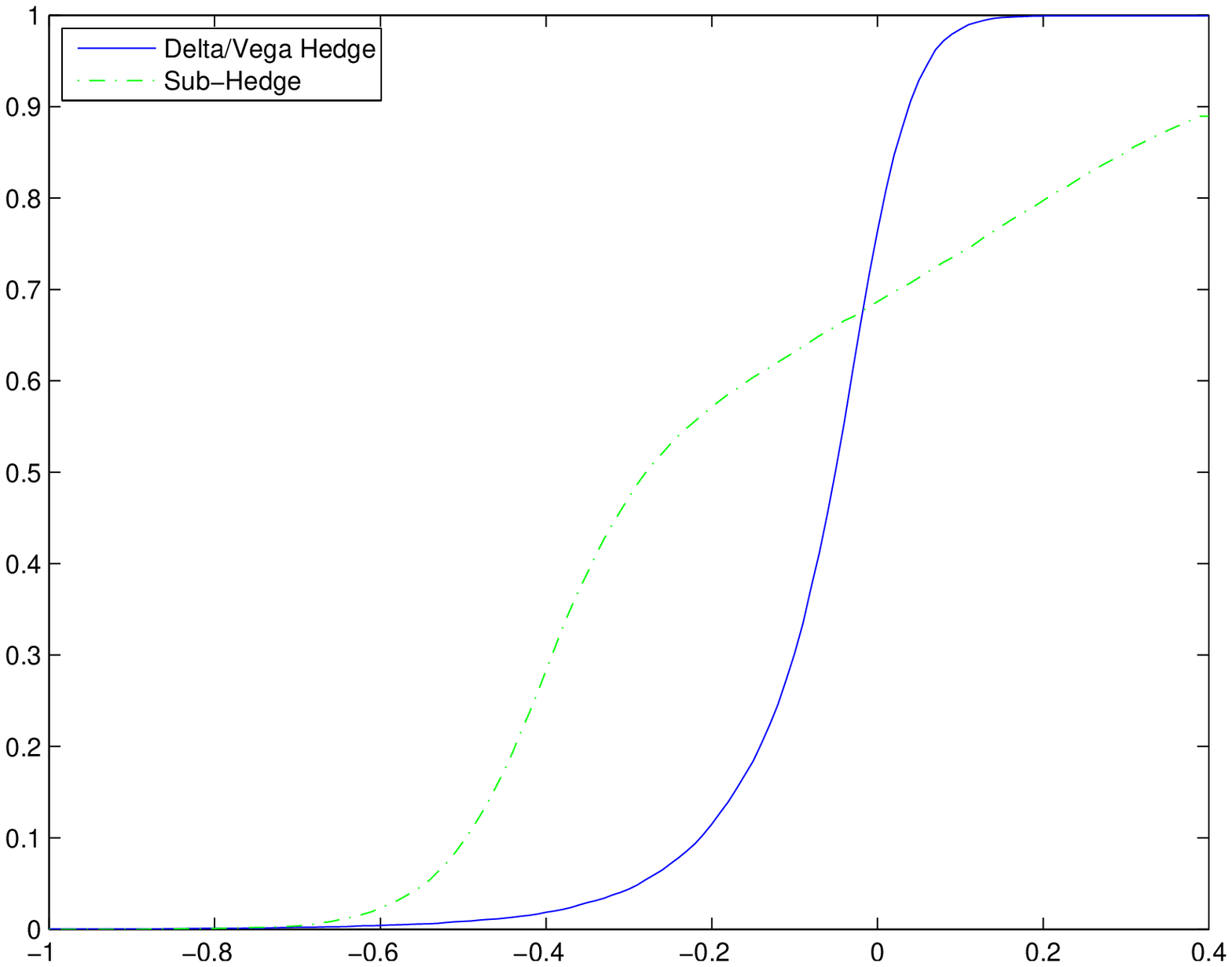}
}
\end{multicols}

\caption{Cumulative distributions of hedging errors under different
  scenarios of a short position (\emph{left}) and a long position
  (\emph{right}) in a double touch option with barriers at $103$ and
  $97$ under the Heston model \eqref{eq:heston}--\eqref{eq:heston_par}.}
\label{fig:hegerr_103_97}
\end{figure}
\begin{figure}[htbp]
\begin{multicols}{2}{
\includegraphics[height=6cm]{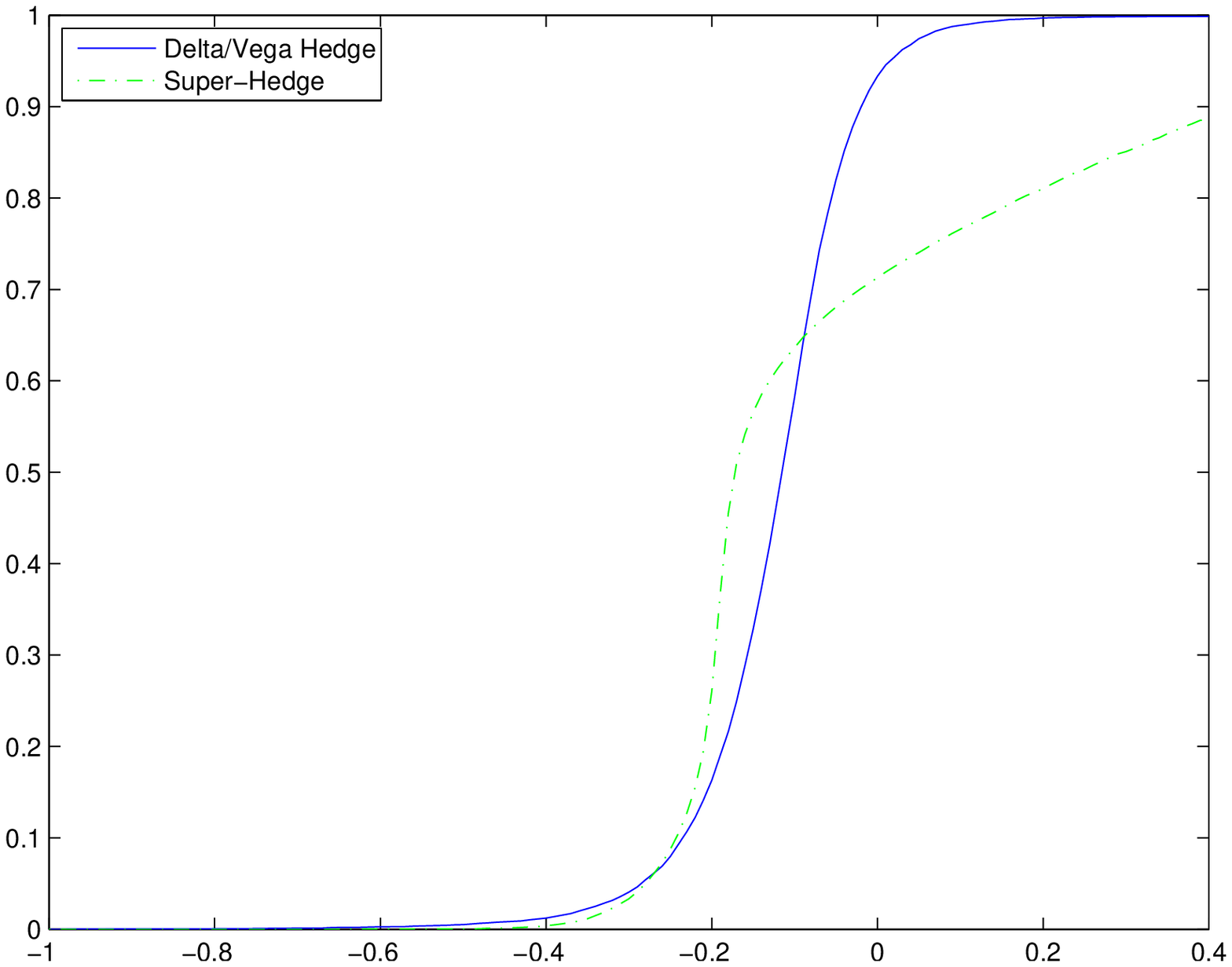}

\includegraphics[height=6cm]{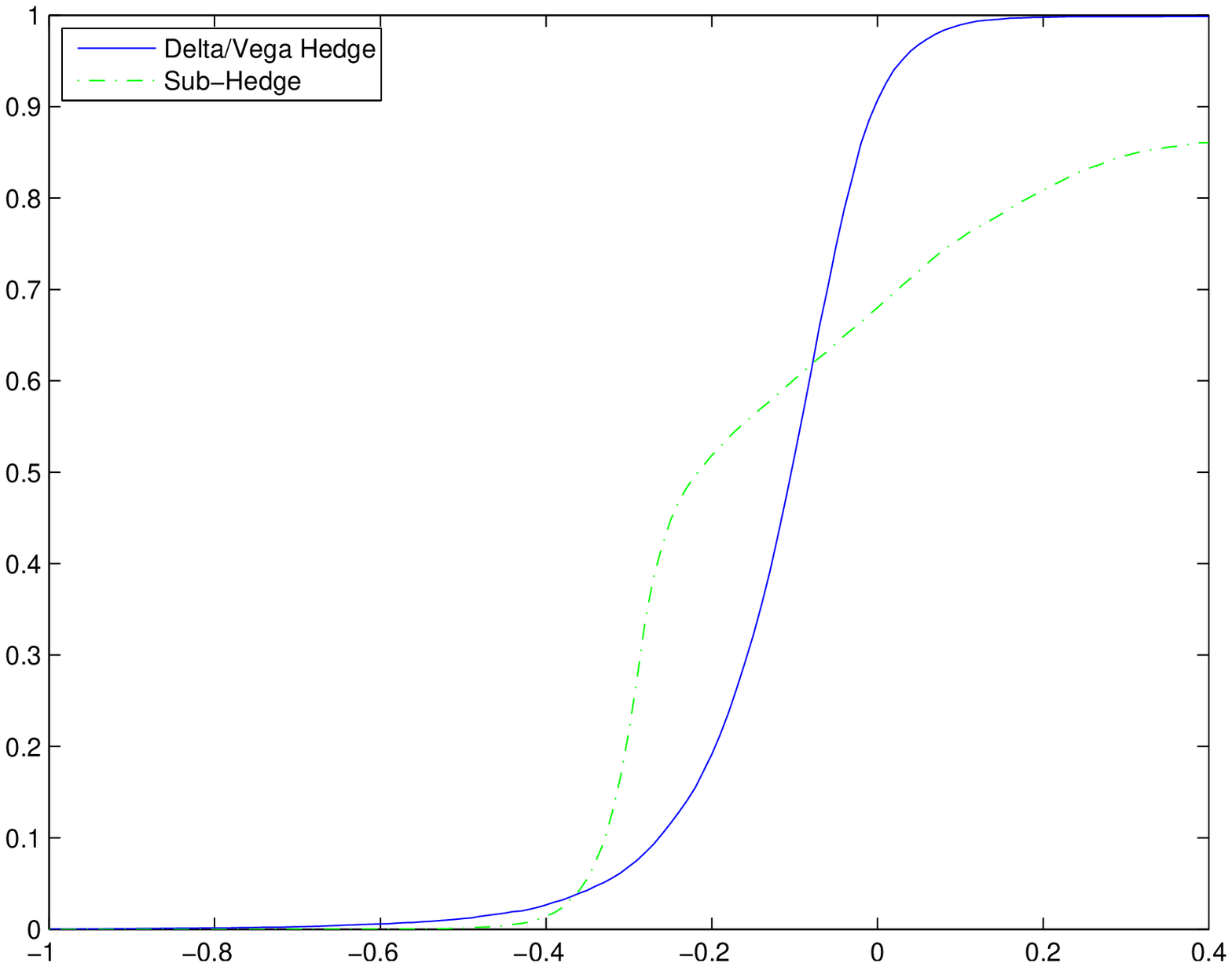}
}
\end{multicols}

\caption{Cumulative distributions of hedging errors under different
  scenarios of a short position (\emph{left}) and a long position
  (\emph{right}) in a double touch option with barriers at $120$ and
  $95$ under the Heston model \eqref{eq:heston}--\eqref{eq:heston_par}.}
\label{fig:hegerr_120_95}
\end{figure}
\begin{figure}[htbp]
\begin{multicols}{2}{
\includegraphics[height=6cm]{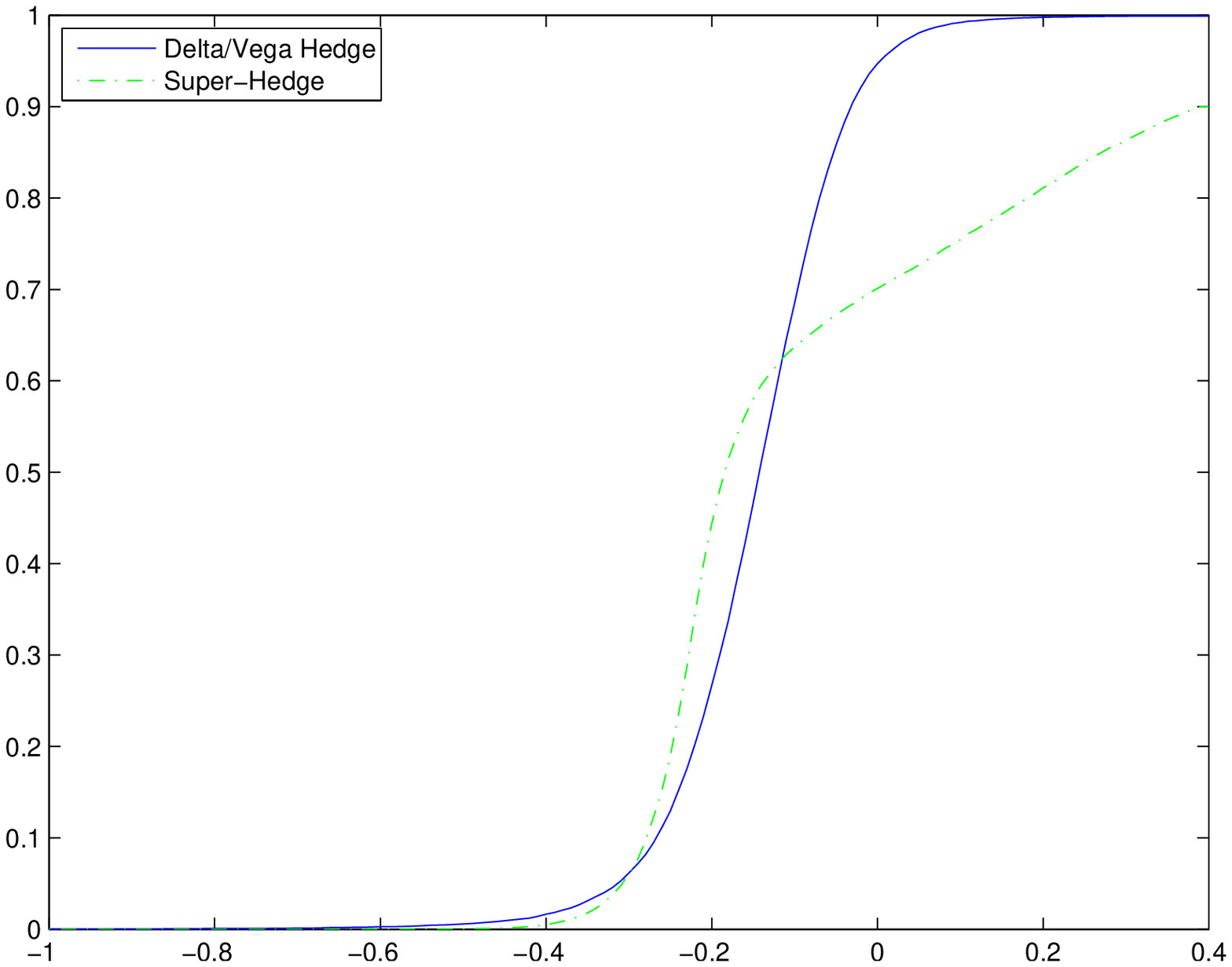}

\includegraphics[height=6cm]{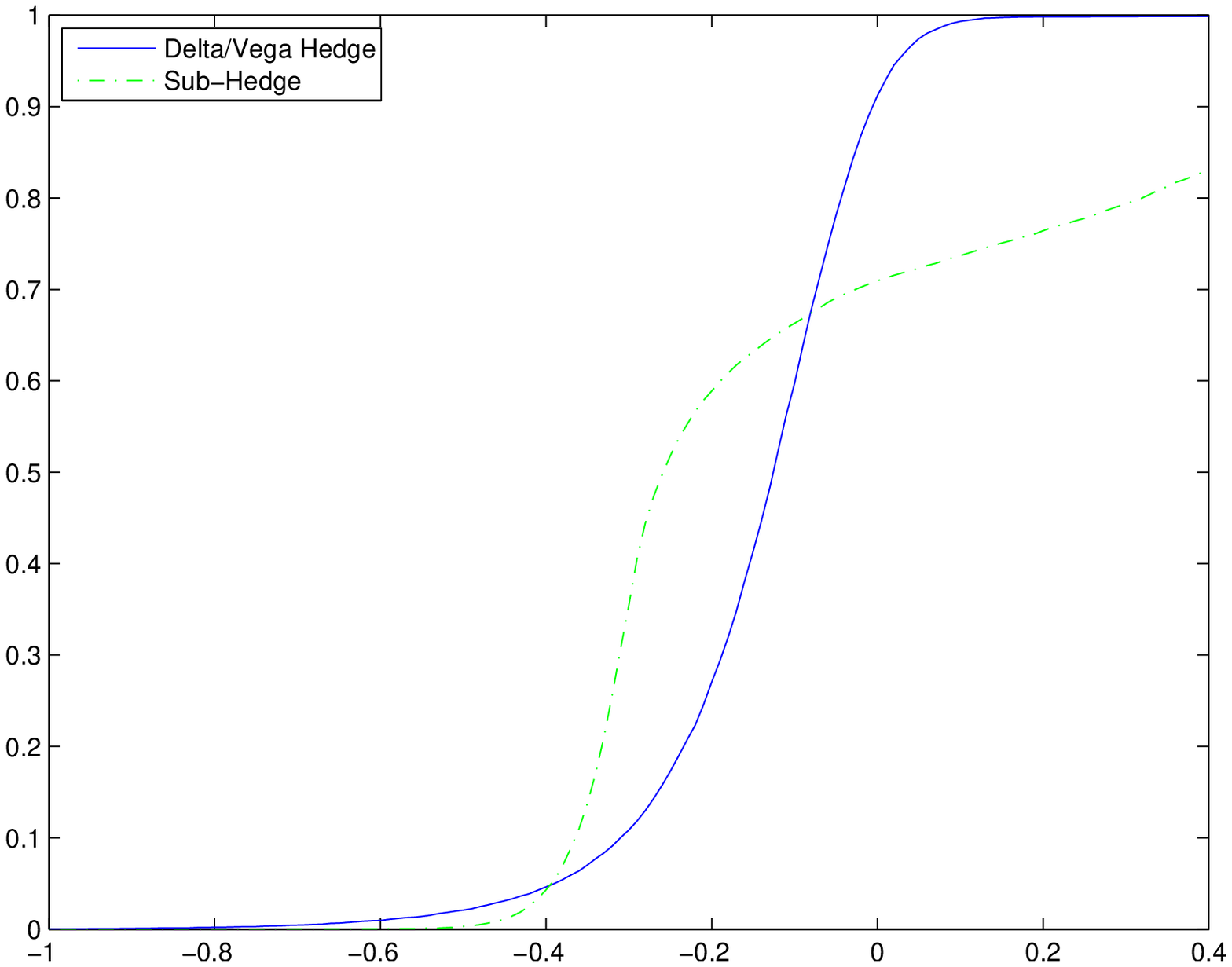}
}
\end{multicols}

\caption{Cumulative distributions of hedging errors under different
  scenarios of a short position (\emph{left}) and a long position
  (\emph{right}) in a double touch option with barriers at $105$ and
  $80$ under the Heston model \eqref{eq:heston}--\eqref{eq:heston_par}.}
\label{fig:hegerr_105_80}
\end{figure}

\newpage

\bibliographystyle{alpha}
\bibliography{general}

\end{document}